\newtheorem{theorem}{Theorem}
\newtheorem{lemma}[theorem]{Lemma}
\newtheorem{corollary}[theorem]{Corollary}
\newtheorem{definition}[theorem]{Definition}
\renewcommand{\P}{P}
\newcommand{\NP}{NP}
\newcommand{\Hness}{hardness}
\newcommand{\NPH}{NP-hard}
\newcommand{\NPHness}{NP-\Hness}
\newcommand{\NPcness}{NP-completeness}
\newcommand{\NPc}{NP-complete}
\newcommand{\coNPc}{co-NP-complete}
\newcommand{\coNPH}{co-NP-hard}
\newcommand{\coNP}{co-NP}
\newcommand{\coNPHness}{co-NP-completeness}
\newcommand{\APXc}{APX-Complete}
\newcommand{\YES}{YES}
\newcommand{\NOi}{NO}
\newcommand{\pname}[1]{\textsc{#1}}
\pgfplotsset{compat=1.18}
\definecolor{a0}{rgb}{0.61, 0.77, 0.89}
\definecolor{a1}{rgb}{1.0, 0.75, 0.0}
\definecolor{a2}{rgb}{1.0, 0.74, 0.53}
\begin{document}
\onehalfspace

\title{Complexity of Deciding the Equality of Matching Numbers}

\author{Guilherme C. M. Gomes$^1$ \and
Bruno P. Masquio$^2$\thanks{Partially supported by FAPERJ and CAPES} \and
Paulo E. D. Pinto$^2$ \and
Dieter Rautenbach$^4$ \and
Vinicius F. dos Santos$^1$\thanks{Partially supported by FAPEMIG and CNPq} \and
Jayme L. Szwarcfiter$^{2,3}$\thanks{Partially supported by FAPERJ and CNPq} \and
Florian Werner$^4$
}
\date{}

\maketitle
\vspace{-10mm}

\begin{center}
{\small 
  $^1$ Universidade Federal de Minas Gerais (UFMG) -- Belo Horizonte, MG -- Brazil 
  \texttt{$\{$gcm.gomes,viniciussantos$\}$@dcc.ufmg.br}\\[3mm]
  $^2$ Universidade do Estado do Rio de Janeiro (UERJ) -- Rio de Janeiro, RJ -- Brazil
  \texttt{$\{$brunomasquio,pauloedp$\}$@ime.uerj.br} \\[3mm]
  $^3$ Universidade Federal do Rio de Janeiro (UFRJ) -- Rio de Janeiro, RJ -- Brazil \\
  \texttt{jayme@nce.ufrj.br}\\[3mm]
  $^4$ Institute of Optimization and Operations Research -- Ulm University -- Ulm -- Germany
\texttt{$\{$dieter.rautenbach,florian.werner$\}$@uni-ulm.de}
}
\end{center}

\begin{abstract}
A matching is said to be disconnected if the saturated vertices induce a disconnected subgraph and induced if the saturated vertices induce a 1-regular graph. The disconnected and induced matching numbers are defined as the maximum cardinality of such matchings, respectively, and are known to be \NP-hard to compute.
In this paper, we study the relationship between these two parameters and the matching number. In particular, we discuss the complexity of two decision problems; first: deciding if the matching number and disconnected matching number are equal; second: deciding if the disconnected matching number and induced matching number are equal. 
We show that given a bipartite graph with diameter four, deciding if the matching number and disconnected matching number are equal is \NPc; the same holds for bipartite graphs with maximum degree three. 
We characterize diameter three graphs with equal matching number and disconnected matching number, which yields a polynomial time recognition algorithm.
Afterwards, we show that deciding if the induced and disconnected matching numbers are equal is \coNPc\ for bipartite graphs of diameter 3.
When the induced matching number is large enough compared to the maximum degree, we characterize graphs where these parameters are equal, which results in a polynomial time algorithm for bounded degree graphs.\\[3mm]
\textbf{Keywords:} Complexity, Matching, Induced Matching, Disconnected Matching
\end{abstract}

\newpage
\section{Introduction}
A \emph{matching} $M$ of a graph $G = (V,E)$
is a subset $M \subseteq E$ of the edges of $G$ such that no two edges in $M$ share a common endpoint. The set $V(M)$ contains all endpoints of edges of $M$, which are called \emph{saturated} vertices. In an abuse of notation, we write $G[M]$ for the induced subgraph $G[V(M)]$.
A matching $M$ is \emph{maximal} if there is no other matching $M'$ such that $M\subsetneq M'$ and \emph{maximum} if there is no other matching $M'$ of $G$ such that $|M'| > |M|$. The number of edges in a maximum matching of a graph $G$ is called the \emph{matching number}, denote by $\nu(G)$.
A matching saturating all vertices of $G$ is \emph{perfect}.

Matchings have been the subject of several studies in both structural and algorithmic graph theory.
It is a well-known fact that the size of a maximum matching can be computed efficiently \cite{edmonds_matching}.
Requiring further properties of a matching leads to restricted matchings:
A matching $M$ is said to be a $\mathscr{P}$-matching if $G[M]$ 
satisfies property $\mathscr{P}$. The $\mathscr{P}$ matching number is defined as the size of a maximum $\mathscr{P}$-matching.
The complexity of deciding whether or not a graph admits a $\mathscr{P}$-matching of given size has been investigated for many different properties $\mathscr{P}$ over the years, including
acyclic matchings \cite{10.1007/978-3-030-48966-3_31, GODDARD2005129}, 
degenerate matchings (generalization of acyclic matchings) \cite{BASTE201838}, induced matching \cite{CAMERON198997,LOZIN20027,MOSER2009715}, connected matchings \cite{cameron_connected_matching}, 
uniquely restricted matchings \cite{Golumbic2001} and disconnected matchings \cite{disconnected_matchings_cocoon, GODDARD2005129}. It appears natural to compare the $\mathscr{P}$ matching number of a given graph for different properties,
which results in the following decision problem:
Decide if, for two different properties, the corresponding $\mathscr{P}$ matching numbers of a given graph 
are equal. The following problems are known to be \NPH: Determining the equality of the induced and uniquely restricted matching numbers on bipartite graphs~\cite{on_the_hardness_of_deciding_the_equality_of_the_induced_and_the_uniquely_restricted_matching_number}, uniquely restricted and acyclic matching numbers~\cite{on_some_hard_and_some_tractable_cases_of_the_maximum_acyclic_matching_problem}, induced and acyclic matching numbers on graphs of maximum degree four~\cite{on_the_hardness_of_deciding_the_equality_of_the_induced_and_the_uniquely_restricted_matching_number},
and unrestricted matching and uniquely restricted matching numbers~\cite{on_some_hard_and_some_tractable_cases_of_the_maximum_acyclic_matching_problem}.
However, regarding comparing the induced matchings and unrestricted matchings, 
Kobler and Rotics
characterized the graphs with identical matching number and induced matching number,
yielding a polynomial time algorithm to recognize these graphs.
\cite{koblerrotics2003} 
\\

In this paper, we are particularly interested in (unrestricted) matchings, induced matchings and disconnected matchings.

A matching $M$ is said to be \textit{induced} if $G[M]$ is a 1-regular graph or empty. Deciding if a given graph admits an induced matching of a given size is a well-known \NPc\ problem~\cite{CAMERON198997}; the size of the largest induced matching in a graph $G$ is the \textit{induced matching number}, denoted by $\nu_s(G)$.
The hardness of computing this parameter extends to several graph classes, such as bipartite graphs with degree $4$~\cite{np_completeness_of_some_generalizations_of_the_maximum_matching_problem}, cubic
planar graphs~\cite{adding_an_identity_to_a_totally_unimodular_matrix}, $4k$-regular graphs for $k \geq 1$~\cite{induced_matchings_in_regular_graphs_and_trees}, subcubic bipartite graphs~\cite{on_maximum_induced_matchings_in_bipartite_graphs}, planar bipartite graphs where each vertex in one partition set has degree $2$ and each vertex in the other partition set has degree $3$~\cite{the_np_completeness_of_the_dominating_set_problem_in_cubic_planer_graphs}.
In~\cite{koblerrotics2003}, hardness is also shown for Hamiltonian graphs, claw-free graphs, chair-free graphs, line graphs, and $d$-regular graphs for $d \geq 5$. NP-completeness is shown for star-convex bipartite graphs and perfect elimination
in~\cite{induced_matching_in_some_subclasses_of_bipartite_graphs}.
Finally, the problem is shown to be \APXc\ in $d$-regular graphs for each $d \geq 3$ in~\cite{on_the_approximability_of_the_maximum_induced_matching_problem}.
On the other hand, efficient algorithms have been shown for several graph classes. Results include a linear-time algorithm for chordal graphs~\cite{Brandstdt2008} and polynomial time algorithms for weakly chordal graphs~\cite{finding_a_maximum_induced_matching_in_weakly_chordal_graphs}, permutation and trapezoid graphs~\cite{maximum_independent_set_and_maximum_induced_matching_problems_for_competitive_programming}, circular-convex bipartite graphs and triad-convex bipartite~\cite{induced_matching_in_some_subclasses_of_bipartite_graphs}, asteroidal-triple-free graphs ~\cite{induced_matchings_in_intersection_graphs,induced_matchings_in_asteroidal_triple_free_graphs}, interval-filament graphs~\cite{induced_matchings_in_intersection_graphs} and hexagonal graphs~\cite{maximum_induced_matching_of_hexagonal_graphs}.

A matching is said to be \textit{disconnected} if $G[M]$ is disconnected or empty.
The size of the largest disconnected matching in a graph $G$ is called the disconnected matching number, denoted by $\nu_d(G)$. The complexity of finding a disconnected matching of a given size was first asked in~\cite{GODDARD2005129}.
Motivated by this question, the authors of~\cite{disconnected_matchings_tcs} showed that deciding if a disconnected matching of a given size exists is \NPc.
In fact, they proved that finding a matching of size $k$ that induces a graph with at least $c$ connected components (a so called \textit{$c$-disconnected matching}) is \NPc\ for every fixed $c > 1$ even for bipartite graphs of diameter four.
For simplicity, we define the empty matching as $c$-disconnected for every $c\in \mathbb{N}$.
The \textit{$c$-disconnected matching number} is the size of the largest $c$-disconnected matching, denoted by $\nu_{d,c}(G)$. 
Note that $\nu_{d,i}(G) \geq \nu_{d, i+1}(G)$ for every
$i \geq 1$
and $\nu_{d, c}(G) > 0$ if and only if $c\le \nu_s(G)$.
For every graph $G$ and $\ell = \nu_s(G)$, it holds that:
\begin{align}
   \nu(G) = \nu_{d,1}(G) \geq \nu_{d,2}(G) \geq \dots \ge \nu_{d,\ell-1}(G)\geq \nu_{d,\ell}(G) = \nu_s(G)  \label{eq:nu_inequalities}
\end{align}
As mentioned before, deciding if $\nu_{s}(G) = \nu(G)$ is in \P \cite{koblerrotics2003}. 
However, the complexities of deciding the equality between the other (disconnected) matching numbers were unknown.
In this paper, we close that gap. Mainly, we consider two decision problems: First, deciding if the matching number and disconnected matching number are equal (decision problem $\nu=\nu_d$).
More general,
we consider for two fixed parameters $i$ and $j$ the decision problem $\nu_{d,i}=\nu_{d,j}$, i.e. deciding if the $i$-disconnected and the $j$-disconnected matching numbers are equal.
And second, deciding if
the disconnected matching number and induced matching number are equal (decision problem $\nu_d=\nu_s$).

First, we show \NPcness\text{} of the decision problem $\nu=\nu_d$ for two graph classes --
bipartite graphs with diameter four and 
bipartite graphs with maximum degree three. 
We extend this result with two corollaries:
For fixed $i\ge 2$, the decision problem $\nu = \nu_{d,i}$ is {\NPc} for bipartite graphs with diameter 4.
    For fixed $i$ and $j$ with $2\le i<j$, the decision problem $\nu_{d,i} = \nu_{d,j}$ is {\NPH} for bipartite graphs with diameter 3.
For graphs with diameter three, we characterize those with equal matching number and disconnected matching number and conclude that the decision problem $\nu=\nu_d$ is in \P\text{} for diameter three graphs.\\
Second, we show 
that the decision problem $\nu_d=\nu_s$ is \coNPc\ for bipartite graphs of diameter~3.
We characterizes all graphs $G$ with $\nu_s(G)=\nu_d(G)$ if $\nu_s(G)\ge 2 \Delta(G)$, resulting 
in a polynomial time algorithm for bounded degree graphs.
Table \ref{tab:p-matching-equalities} gives an overview of our main results.

\begin{table}[!htb]
\caption{Main results
\label{tab:p-matching-equalities}}
\begin{center}
\begin{tabular}{c|c|c|c}
\hline \hline
Problem & Complexity & Graph class & Proof \\ \hline

$\nu = \nu_d$ & {\NPc} & Bipartite and diameter 4 & Theorem~\ref{theoremnunud} \\ \hline

$\nu = \nu_{d}$ & {\NPc} & Bipartite and $\Delta=3$ & Theorem \ref{theoremnunuddelta3}\\ \hline

$\nu = \nu_d$ & {\P} & Diameter 3 & Theorem \ref{theoremdiam3} \\ \hline

$\nu_{s} = \nu_{d}$ & {\coNPc} & Bipartite and diameter at most 3& Theorem~\ref{nudnuscoNPcomplete} \\ \hline

$\nu_{s} = \nu_{d}$ & {\P} & Bounded degree & Theorem~\ref{theorempoly} \\ \hline

\hline
\end{tabular}
\end{center}
\end{table}

Finally, we show that, for every finite non-increase sequence of natural numbers $\beta_1, \dots, \beta_k$ with $\beta_k\ge k$ there exists 
a graph $G$ with $\nu_{d,i}(G) = \beta_i$ for every $i \in \{1,\ldots,k\}$.
This result implies that the differences between adjacent elements of the inequalities presented in Equation~\ref{eq:nu_inequalities} are arbitrary.\\

When there is no ambiguity of which graph we are referring to, we may write $\nu_s$, $\nu_d$, $\nu_{d,i}$ and $\nu$ instead of $\nu_s(G)$, $\nu_d(G)$, $\nu_{d,i}(G)$ and $\nu(G)$.

\section{Preliminaries}
We use standard nomenclatures and basic concepts of graph theory as in~\cite{murty,classes_survey}, complexity theory as in~\cite{garey_johnson}, and parameterized complexity as in~\cite{cygan_parameterized}.

For a set $C$, we say that $A,B \subseteq C$  is a \emph{partition} of $C$ if $A \cap B = \emptyset$ and $A \cup B = C$; we denote a partition of $C$ into $A$ and $B$ by $A \dot{\cup} B = C$. For an integer $k$, we define $[k] = \{1, \dots, k\}$. 

In this paper, we only use finite, simple, and undirected graphs. Let $G=(V,E)$ be a graph and $W\subseteq{V}$ a subset of its vertices. Sometimes, we also use $V(G)$ and $E(G)$ to denote the set of vertices and edges of $G$. Moreover, when there is no ambiguity, we use $n=|V(G)|$ and $m=|E(G)|$ for a graph $G$. An \emph{edge} $e$ is a pair of vertices $\{ u,v \}$ and $u,v$ are called \emph{endpoints} of $e$. We can equivalently write this edge as $uv$ or as $vu$. We say that $G[W]$ is the subgraph of $G$ \emph{induced} by $W \subseteq V$. That is, $G[W] = (W, E_W)$, such that $E_W$ contains an edge $e \in E$ if and only if $|e\cap{W}| = 2$. Also, the operations $G-uv$ and $G-v$ produce, respectively, the graphs $G'=(V,E\setminus\{uv\})$ and $G[V\setminus\{v\}]$. The \emph{degree} of vertex $v \in V$ is the number of edges of $E$ incident to $v$ and $\Delta(G)$ is the maximum vertex degree among all vertices of $G$. A graph $G$ is subcubic if $\Delta(G)\le 3$.

Two graphs $G$ and $H$ are \emph{isomorphic} if there is a bijection $f : V(G) \to V(H)$ such that $uv \in E$ if and only if $f(u)f(v) \in E(H)$. In $G$, a sequence of vertices $v_1\ldots{v_k}$ is a \emph{path} if $v_jv_{j+1} \in E(G)$ for every $1\leq j \leq k-1$. A \emph{cycle} is a path where $k\geq3$ and $v_k = v_1$. The \emph{length} of a cycle or a path is defined as the number of edges it contains. A graph is \emph{acyclic} if there is no induced subgraph isomorphic to a cycle. The \emph{distance} between two vertices $u,v$ is the length of the shortest path between $u$ and $v$. The \emph{diameter} of a connected graph $G$ is the longest distance between any pair of vertices $u,v \in V$. Furthermore, $G$ is \emph{connected} if there is a path between every pair of its vertices, and \emph{disconnected}, otherwise. A \emph{(connected) component}
of $G$ is a subgraph $G[W]$ for a maximal set $W \subseteq V$ subject to $G[W]$ is connected. 

The set $W$ is a \emph{separator} of $G$ if $G-W$ has more connected components than $G$. Besides, $W$ is \emph{minimal} if there is no other separator $S \subsetneq W$ in $G$. The \emph{open neighborhood} and \emph{closed neighborhood} of a vertex $u \in V$ are denoted by $N(u)$ and $N[u]$ respectively, where $N(u) = \{ w \mid wu \in E \}$ and $N[u] = N(u) \cup \{u\}$. Analogously, we define $N(W) = (\bigcup_{u \in W} N(u)) \setminus W$ and $N[W] = N(W) \cup W$.  

A graph $G$ is \emph{complete} if $E$ contains an edge for every pair of vertices of $V$. In this case, we can write this graph as $K_n$, $n = |V|$. A subset $S \subseteq V(G)$ is a \emph{clique} if $G[S]$ is complete. If all vertices of $G$ have degree $k$, then $G$ is \emph{$k$-regular}. We denote by $P_n$ and $C_n$ the \emph{path} and \emph{cycle graphs}, that are isomorphic to a path and a cycle with $n$ vertices, respectively. The set $W$ is an \emph{independent set} if $G[W]$ has no edges. A graph $G$ is \emph{bipartite} if its vertices can be partitioned into two independent sets $V_1$ and $V_2$. In this case, we also use the notation $G = (V_1\dot{\cup}V_2,E)$ and say $G$ is bipartite with partition $V_1\dot{\cup}V_2$. When $E$ contains all possible edges between elements of $V_1$ and $V_2$, we say that $G$ is a \emph{complete bipartite graph}. We denote by $K_{a,b}$ the complete bipartite graph with $a$ vertices in one set of the partition and $b$ vertices in the other.

\newpage

\section{\texorpdfstring{$\nu_d = \nu$?}{Disconnected matching number equals unrestricted}}
\begin{theorem}\label{theoremnunud}
Given a bipartite graph with diameter 4, deciding if $\nu=\nu_d$ is \NPc.
\end{theorem}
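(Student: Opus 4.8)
The plan is to show membership in \NP{} directly and then establish \NPHness{} by a polynomial reduction from a classical \NPc{} problem, carefully maintaining bipartiteness and keeping all pairwise distances bounded by $4$.

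Membership is the easy direction. By the chain in Equation~\ref{eq:nu_inequalities} we always have $\nu_d(G)\le\nu(G)$, so $\nu=\nu_d$ holds if and only if $G$ admits a \emph{maximum} matching $M$ (one with $|M|=\nu(G)$) whose saturated vertices induce a disconnected subgraph. Since $\nu(G)$ is computable in polynomial time by a maximum matching algorithm, such an $M$ is a succinct certificate: one verifies in polynomial time that $M$ is a matching, that $|M|=\nu(G)$, and that $G[M]$ is disconnected via a single graph search. Hence the decision problem $\nu=\nu_d$ lies in \NP{}.

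For \Hness{}, I would reduce from \pname{3-SAT} (an independent-set or vertex-cover variant tailored to bipartite gadgets would also be a natural source). Given a formula $\phi$, I would assemble a bipartite graph $G_\phi$ from variable gadgets and clause gadgets so that two properties hold. First, $\nu(G_\phi)$ equals a fixed value $N$ that depends only on the sizes of $\phi$ and \emph{not} on its satisfiability; this is arranged by forcing a designated set of vertices to be saturated by every maximum matching, pinning down the meaning of ``maximum''. Second, a maximum matching can be realized as a disconnected one exactly when $\phi$ is satisfiable: the intended correspondence is that a consistent, satisfying assignment lets one route a matching of size $N$ whose unsaturated vertices separate the saturated ones into at least two components, whereas any truth inconsistency or any unsatisfied clause forces every maximum matching to use a ``bridging'' edge that reconnects the two sides.

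The delicate part, and the main obstacle, is reconciling the small-diameter requirement with the need for disconnection. A few central hub vertices placed appropriately in the two parts suffice to pull every pairwise distance down to at most $4$ while preserving bipartiteness, but such hubs tend to make \emph{every} maximum matching connected, which would collapse the reduction. The construction must therefore keep the hubs and any mandatory connector edges out of every maximum matching — so that they contribute to distances but never to $G[M]$ — for instance by ensuring the hubs are unsaturated in the relevant optimal matchings or are saturated only by rerouteable edges. Verifying the equivalence then splits into (a) exhibiting, from a satisfying assignment, an explicit size-$N$ matching that falls into at least two components, and (b) showing that any disconnected maximum matching forces a canonical behaviour inside each gadget, from which a consistent assignment satisfying all clauses can be read off. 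I expect step (b), together with certifying that the finished graph has diameter exactly $4$, to require the bulk of the case analysis.
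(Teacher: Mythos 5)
Your membership argument is correct and matches the paper's (a maximum matching whose saturated vertices induce a disconnected subgraph is a polynomially verifiable certificate, since $\nu(G)$ is computable in polynomial time). However, the hardness part is not a proof but a statement of intent: you never define the variable or clause gadgets, never specify the target value $N$ or prove it is independent of satisfiability, and never prove either direction of the equivalence between satisfiability and the existence of a disconnected maximum matching. You yourself defer exactly the steps that constitute the entire difficulty --- ``step (b), together with certifying that the finished graph has diameter exactly $4$'' --- so what remains is a plan whose feasibility is precisely what needs to be demonstrated. For comparison, the paper reduces from \pname{Exact Cover By 3-Sets} with a fully explicit construction (a $K_{q,|\mathcal{C}|}$ block, $P_3$ gadgets per triple, $K_{f_i,f_i-1}$ gadgets per element, and two pendant edges $t^+t^-$, $b^+b^-$), computes $\nu(G)=|V_1|$ exactly, and proves both directions of the equivalence via a counting argument on which vertices of $\bigcup_i W_i^+$ can remain unsaturated.

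One substantive point where your sketch pulls in the wrong direction: you propose to keep the hub vertices \emph{out} of every maximum matching so that they ``contribute to distances but never to $G[M]$''. In the paper's construction the opposite happens: the two hubs $t^-$ and $b^-$ lie in the fully saturated side $V_1$, so they are saturated by \emph{every} maximum matching, and since every edge of $G$ meets $N[t^-]\cup N[b^-]$, the graph $G[M]$ of any disconnected maximum matching has \emph{exactly two} components, one around each hub. Disconnection is then achieved not by hiding the hubs but by ensuring that no saturated vertex is adjacent to saturated vertices on both sides --- and this is exactly the constraint that encodes the exact cover. The ``unsaturated hub'' idea does appear in the paper, but only as a final, one-vertex patch: a single vertex $u$ adjacent to all of $V_1$ is added to force diameter $4$, and one argues that $u$ cannot be saturated by any disconnected matching, so $\nu_d$ is unchanged while $\nu$ is unchanged as well. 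Building the whole reduction around unsaturated hubs, as you suggest, would leave you with no mechanism for controlling the component structure of $G[M]$, which is the heart of the argument.
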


The decision problem $\nu=\nu_d$ is clearly in \NP.
In order to prove {\NPHness}, we describe a reduction 
from the {\NPc} problem { \sc Exact Cover By $3$-Sets} \cite{garey_johnson}. This problem consists in, given two sets $X = \{ x_1,\ldots,x_{3q} \}$ and $\mathcal{C}=\{C_1,\ldots,C_{|\mathcal{C}|}\}$, $|\mathcal{C}| \geq q$, such that $\mathcal{C}$ contains $3$-element subsets of $X$, decide if there exists a subset $\mathcal{C}' \subseteq \mathcal{C}$ such that every element of $X$ occurs in exactly one member of $\mathcal{C}'$. 
We call a subset $\mathcal{C}'\subseteq \mathcal{C}$ an exact cover of $X$, if $X$ is the disjoint union of the sets in $\mathcal{C}'$.\\
Given such an instance $(\mathcal{C},X)$ of { \sc Exact Cover By $3$-Sets}, we build the graph $G$ as follows:

\begin{enumerate}[label=(\Roman*), leftmargin=*, start=1]
    \item Generate a complete bipartite subgraph $H$ isomorphic to $K_{q,|\mathcal{C}|}$.
        Let $V_H$ be the partition set of size $|\mathcal{C}|$ of $H$ and 
        we label its vertices as $\{h_{j} : C_j \in \mathcal{C} \}$. Let $V'_H$ be the other partition set.
                
    \item For each $C_j \in \mathcal{C}$, generate a copy of $P_3$ whose endpoints are labeled $u_j^+$ and $u_j^-$. Connect the other vertex, $u_j$, to $h_j$.
    
    \item For each $x_i \in X$, generate the subgraph $Y_i$ isomorphic to $K_{f_i,f_i-1}$, where $f_i$ is the number of triples in $\mathcal{C}$ that contain the element $x_i$. Moreover, label the vertices of the bipartition of size $f_i$ as $\{ w_{i,j} : x_i \in C_j, C_j \in \mathcal{C}\}$. Add the edges $\{ u_iw_{i,j} : x_i \in C_j, C_j \in \mathcal{C} \}$. Let $W_i^+$ and $W_i^-$ be the set of vertices of bipartitions of size $f_i$ and $f_i-1$, respectively.
    
    \item Add two copies of $K_2$, whose vertices are labeled $\{ t^+,t^- \}$ and $\{ b^+, b^- \}$. Connect $b^-$ to all vertices in $\big\{ u_j^- : C_j \in \mathcal{C} \big\} \cup \bigcup_{x_i\in X} W_i^+$ and $t^-$ to all vertices in $V_H \cup \{ u_j^+ : C_j \in \mathcal{C} \}$.
\end{enumerate}
See Figure~\ref{fig:nunud-ex-4-mv} for an example of the reduction. Note that $G$ is indeed bipartite with partition sets 
$V_1 = \bigcup_{x_i\in X } W_i^- \cup \{ u_j : C_j \in \mathcal{C}\} \cup V_H' \cup \{ b^-,t^- \}$ and 
$V_2 = \bigcup_{C_j\in \mathcal{C}} N(u_j)\cup \{ t^+,b^+ \}$. 

\begin{figure}[!htb]
	\centering
	\begin{tikzpicture} [scale=0.275] 	\tikzstyle{point}=[draw,circle,inner sep=0.cm, minimum size=1mm, fill=none]
	\tikzstyle{point2}=[draw,circle,inner sep=0.cm, minimum size=1mm, fill=black]

\node[point2] (t) at (40,2) [label=right:$t^-$] {};
\node[point2] (tt) at (40,-2) [label=right:$t^+$] {};
\draw[very thick] (t) -- (tt);

\node[point2] (b) at (40,-29) [label=right:$b^-$] {};
\node[point2] (bp) at (40,-25) [label=right:$b^+$] {};
\draw[very thick] (b) -- (bp);

\foreach \j in {0,1,2,3}{
    \pgfmathparse{int(\j+1)}
    \xdef\jj{\pgfmathresult}
  	\node[point] (h\j) at (10*\j,-2) [label=left:$h_\jj$] {};
}

\foreach \j in {0,2}{
		\node[point] (p\j) at (5*\j+10,2) [label=above:] {};
}

\foreach \i in {0,1,2,3}{
    \foreach \j in {0,2}{
        \draw (h\i) -- (p\j);
    }
}

\foreach \i in {0,1,2,3}{
    \draw (h\i) -- (t);
}

\foreach \l in {0,1,2,3}{

\begin{scope}[shift={(10*\l,-6)}]

\pgfmathparse{int(\l+1)}
\xdef\ll{\pgfmathresult}
\node[point] (u\l) at  (0,0) [label=right:$u_\ll$] {};
\node[point] (uplus\l) at  (-2,1) [label=left:$u_\ll^+$] {};
\node[point] (uminus\l) at  (-2,-1) [label=left:$u_\ll^-$] {};
\draw (h\l) -- (u\l) -- (uplus\l) -- (t); 
\draw (u\l) -- (uminus\l) -- (b); 

\end{scope}
}

\def\myarray{{3,3,1,2,2,1}}
\edef\m{0}

\edef\sh{0}

\foreach \j in {0,1,2,3,4,5}{
	\pgfmathparse{int(\myarray[\m]-1)}
	\let\result\pgfmathresult
	\pgfmathparse{\m+1}
	\xdef\m{\pgfmathresult}
	
	\begin{scope}[shift={(-4+\sh,-25)}]
 	\pgfmathparse{\sh+3*(\result+1)+1}
	\xdef\sh{\pgfmathresult}
	\foreach \i in {0,...,\result}{
       \node[point] (w\j\i) at (3*\i,0) [label=above:] {};
       \draw (w\j\i) -- (b);	}
    \if\result0\else
    \foreach \i in {1,...,\result}{
            \node[point] (o\j\i) at (3*\i-1.5,-4) [label=above:] {};
    }
    \foreach \i in {0,...,\result}{
        \foreach \k in {1,...,\result}{
                \draw (w\j\i) -- (o\j\k);
            }
        }
    \fi
    \end{scope}
}

 \def\myarr{{0,1,2,0,1,3,0,1,4,3,4,5}}
\def\myarrt{{0,0,0,1,1,0,2,2,0,1,1,0}}
\edef\s{0}

\foreach \i in {0,1,2,3}{
	\foreach \j in {1,2,3}{
		\pgfmathparse{\myarr[\s]}
		\let\x\pgfmathresult
		\pgfmathparse{\myarrt[\s]}
		\let\y\pgfmathresult
		\draw (u\i) -- (w\x\y);
        \pgfmathparse{int(\x+1)}
        \xdef\xr{\pgfmathresult}
        \pgfmathparse{int(\i+1)}
        \xdef\yy{\pgfmathresult}
        \draw (w\x\y |- 42,-24) node {$w_{\xr,\yy}$};
		\pgfmathparse{\s+1}
		\xdef\s{\pgfmathresult}
	}
}

\draw[very thick] (h0) -- (p0);
\node[point2] at  (h0) [label=right:] {};
\node[point2] at  (p0) [label=right:] {};
\draw[very thick] (h3) -- (p2);
\node[point2] at  (h3) [label=right:] {};
\node[point2] at  (p2) [label=right:] {};
\foreach \i in {0,3}{
    \draw[very thick] (u\i) -- (uplus\i);
    \node[point2] at  (u\i) [label=right:] {};
    \node[point2] at  (uplus\i) [label=right:] {};
}
\foreach \i in {1,2}{
    \draw[very thick] (u\i) -- (uminus\i);
    \node[point2] at  (u\i) [label=right:] {};
    \node[point2] at  (uminus\i) [label=right:] {};
}

\foreach \i in {0,1}{
    \foreach \j in {1,2}{
        \draw[very thick] (w\i\j) -- (o\i\j);
        \node[point2] at  (w\i\j) [label=right:] {};
        \node[point2] at  (o\i\j) [label=right:] {};
    }
}
\foreach \i in {3,4}{
    \foreach \j in {1}{
        \draw[very thick] (w\i0) -- (o\i\j);
        \node[point2] at  (w\i0) [label=right:] {};
        \node[point2] at  (o\i\j) [label=right:] {};
    }
}

	\end{tikzpicture}
    \caption{
Reduction graph for the input $X = \{x_1,\ldots,x_6\}$ and $\mathcal{C} = \{C_1,C_2,C_3,C_4\}= \{ \{x_1,x_2,x_3\},\\ \{ x_1,x_2,x_4 \}, \{ x_1,x_2,x_5 \}, \{ x_4,x_5,x_6 \} \}$, 
including a maximum matching corresponding to the exact cover $\mathcal{C}'=\{ \{x_1,x_2,x_3\}, \{x_4,x_5,x_6\} \}$.
        }
    \label{fig:nunud-ex-4-mv}
\end{figure}
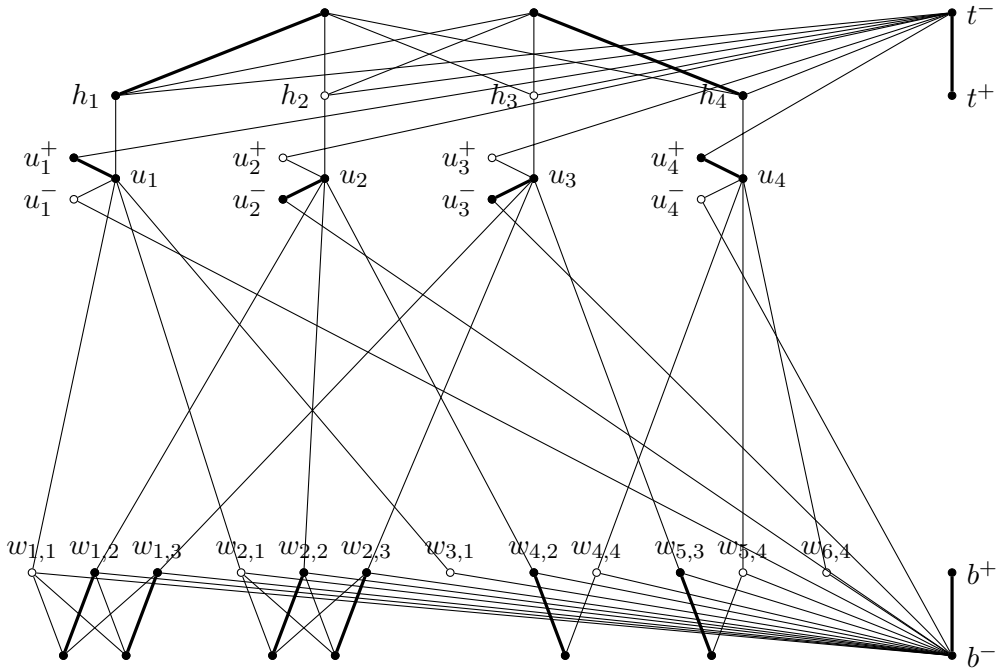

\begin{lemma}\label{lem:nud-nu-size}
A maximum matching of $G$ has size $|V_1|=4|\mathcal{C}|-2q+2$ and saturates all vertices of $V_1$.
\end{lemma}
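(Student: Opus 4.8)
The plan is to exploit that $G$ is bipartite with partition $V_1 \dot{\cup} V_2$. Since every edge of a matching covers exactly one vertex of $V_1$, we immediately get $\nu(G) \le |V_1|$, and, crucially, any matching of size $|V_1|$ must saturate \emph{all} of $V_1$. So the lemma reduces to two tasks: (i) confirm that $|V_1| = 4|\mathcal{C}| - 2q + 2$, and (ii) exhibit a single matching saturating $V_1$. For (i) I would just add up the five groups constituting $V_1$, using that $|X| = 3q$ and that $\sum_{x_i \in X} f_i = \sum_{C_j \in \mathcal{C}} |C_j| = 3|\mathcal{C}|$ counts the element--triple incidences; this gives $\sum_i |W_i^-| = \sum_i (f_i - 1) = 3|\mathcal{C}| - 3q$, and adding $|\{u_j\}| = |\mathcal{C}|$, $|V_H'| = q$ and $|\{t^-, b^-\}| = 2$ yields the claimed value. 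A quick sanity check shows $|V_2| = 6|\mathcal{C}| + 2 \ge |V_1|$, so there is no cardinality obstruction to a $V_1$-saturating matching.

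For (ii) I would construct the matching group by group, \emph{independently} of whether $(\mathcal{C}, X)$ is a yes-instance (the lemma asserts only the size of a maximum matching). Inside each gadget $Y_i \cong K_{f_i, f_i-1}$, the smaller side $W_i^-$ can be matched entirely into $W_i^+$, saturating every vertex of $W_i^-$ and leaving exactly one vertex of each $W_i^+$ free. Each middle vertex $u_j$ is matched along its $P_3$, and the two extra vertices $t^-, b^-$ together with $V_H'$ are routed through their remaining neighbours. Concretely, I would match $t^-$ to $u_1^+$, the vertex $u_1$ to $u_1^-$, every other $u_j$ to $u_j^+$, each vertex of $V_H'$ to a distinct $h_j$ (possible since $|V_H| = |\mathcal{C}| \ge q$ and all $h_j$ are still free), and $b^-$ to one of the free $w$-vertices. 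A direct inspection then confirms that no vertex of $V_2$ is used twice and that every vertex of $V_1$ is covered, so this matching has size $|V_1|$.

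Combining (i) and (ii) gives $\nu(G) = |V_1| = 4|\mathcal{C}| - 2q + 2$, and, as noted, a matching of this size necessarily saturates $V_1$. The only genuine obstacle is the bookkeeping in (ii): both $t^-$ and the set $V_H'$ reach into $V_H$, so one must avoid making them compete for the same $h_j$. This is precisely why $t^-$ is sent to a $u_j^+$ rather than into $V_H$ -- doing so frees all of $V_H$ for $V_H'$ and keeps the construction valid even in the tight case $|\mathcal{C}| = q$. An alternative to the explicit matching is to verify Hall's condition for $V_1$ directly, but the gadget-by-gadget construction is both shorter and more transparent.
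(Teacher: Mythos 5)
Your proof is correct and takes essentially the same route as the paper's: the bipartite bound $\nu(G)\le |V_1|$ plus an explicit gadget-by-gadget matching saturating $V_1$ (matching $W_i^-$ into $W_i^+$, the $u_j$ along their $P_3$'s, and $V_H'$ into $V_H$). The only difference is that the paper matches $t^-$ to $t^+$ and $b^-$ to $b^+$ using the two pendant $K_2$ edges added in step (IV), which eliminates the bookkeeping you worried about; your rerouting of $t^-$ to $u_1^+$ and $b^-$ to a free $w$-vertex is valid but unnecessary.
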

\begin{proof}
Since $G$ is bipartite and $|V_1|=4|\mathcal{C}|-2q+2$, it is enough to define a matching that saturates $V_1$.
Let $M$ be the union of four sets, $S_1 = \{ b^+b^-, t^+t^-\}$, $S_2 = \{ u_ju_j^+ : C_j \in \mathcal{C}\}$, $S_3$ consisting of
$q$ disjoint edges of $H$, and $S_4$ 
containing
$f_i-1$ disjoint edges of $Y_i$ for each $x_i \in X$. The matching $M$ saturates all vertices in $V_1$, completing the proof of Lemma \ref{lem:nud-nu-size}.
\end{proof}

Note that for every maximum matching $M$ the graph $G[M]$ 
has
at most two components, since 
every vertex of $G$ is adjacent to $t^-$ or $b^-$ and $t^-, b^-\in V_1$.

\begin{lemma}\label{lem:nudnu-main-lemma}
The instance $(X,\mathcal{C})$ of \pname{Exact Cover By 3-sets} is a \YES-instance if and only if 
$\nu(G)=\nu_d(G)$. 
\end{lemma}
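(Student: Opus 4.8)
The plan is to exploit the structure already established. Recall $\nu_d(G)\le\nu(G)$ always, so $\nu(G)=\nu_d(G)$ holds if and only if some maximum matching $M$ induces a \emph{disconnected} subgraph $G[M]$. By Lemma~\ref{lem:nud-nu-size} every maximum matching saturates $V_1$ and has size $|V_1|$, and, as remarked after that lemma, $G[M]$ then has at most two components. I would first make this remark precise in the form I actually use: since $t^-,b^-\in V_1$ are saturated in every maximum matching, every saturated vertex is adjacent to $t^-$ or $b^-$, or is matched to a neighbour of one of them; in every case it lies in the component of $t^-$ or in the component of $b^-$. Consequently $G[M]$ is disconnected exactly when $t^-$ and $b^-$ lie in distinct components, and it suffices to prove that an exact cover exists if and only if some maximum matching separates $t^-$ from $b^-$.

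For the forward direction, given an exact cover $\mathcal{C}'$ (so necessarily $|\mathcal{C}'|=q$), I would build a maximum matching $M$ as follows: take $t^-t^+$ and $b^-b^+$; for each cover triple $C_j\in\mathcal{C}'$ take the edge $u_ju_j^+$, and match the $q$ vertices $h_j$ with $C_j\in\mathcal{C}'$ bijectively into $V'_H$ (possible since $|V'_H|=q$ and $H$ is complete bipartite); for each non-cover triple take $u_ju_j^-$; and inside each gadget $Y_i$ leave unsaturated the single vertex $w_{i,j^*}$, where $C_{j^*}$ is the unique member of $\mathcal{C}'$ containing $x_i$, matching the remaining $f_i-1$ vertices of $W_i^+$ into $W_i^-$. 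I would check that $M$ saturates $V_1$, so it is maximum by Lemma~\ref{lem:nud-nu-size}, and then verify that no edge of $G[M]$ joins the $t^-$-side to the $b^-$-side: for a cover triple both $u_j^-$ and every $w_{i,j}$ with $x_i\in C_j$ are unsaturated, so $u_j$ touches only $t^-$-side vertices, while for a non-cover triple $u_j^+$ and $h_j$ are unsaturated, so $u_j$ touches only $b^-$-side vertices. Hence $G[M]$ has two components and $\nu_d(G)=\nu(G)$.

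For the converse, I would start from a maximum matching $M$ with $G[M]$ disconnected, so by the above there are exactly two components $T\ni t^-$ and $B\ni b^-$, and set $\mathcal{C}'=\{C_j : u_j\in T\}$. The argument then rests on three facts. First, if $u_j\in T$ then $u_j^-$ and all $w_{i,j}$ with $x_i\in C_j$ are unsaturated, since each is adjacent to $b^-\in B$ and would otherwise bridge $T$ and $B$; symmetrically $u_j\in B$ forces $u_j^+$ and $h_j$ to be unsaturated. Second, the $q$ vertices of $V'_H$ are saturated and their only neighbours lie in $V_H$, so $q$ distinct vertices $h_j$ are saturated, and a saturated $h_j$ forces $u_j\in T$; hence $|\mathcal{C}'|\ge q$. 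Third, in each $K_{f_i,f_i-1}$ gadget $Y_i$ the $f_i-1$ vertices of $W_i^-$ are matched into $W_i^+$, so at most one vertex of $W_i^+$ is unsaturated; combined with the first fact this shows each element $x_i$ lies in at most one triple of $\mathcal{C}'$. The last point gives $3|\mathcal{C}'|\le|X|=3q$, i.e. $|\mathcal{C}'|\le q$; together with $|\mathcal{C}'|\ge q$ this yields $|\mathcal{C}'|=q$, and then $3|\mathcal{C}'|=3q=|X|$ with each element covered at most once forces every element to be covered exactly once, so $\mathcal{C}'$ is an exact cover.

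I expect the main obstacle to be the converse direction, where the purely local ``no bridging'' constraints must be converted into the global counting identity $|\mathcal{C}'|=q$. The delicate interplay is between the lower bound $|\mathcal{C}'|\ge q$, coming from the forced saturation of $V'_H$, and the upper bound $|\mathcal{C}'|\le q$, coming from the $K_{f_i,f_i-1}$ gadgets admitting at most one unsaturated vertex per element. Ensuring these two estimates pinch to equality—and that ``at most once'' then upgrades to ``exactly once''—is the heart of the proof; by contrast, the forward construction and the reduction to separating $t^-$ from $b^-$ should be routine once the correspondence between cover triples and the side of $u_j$ is fixed.
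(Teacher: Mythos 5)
Your proposal is correct and follows essentially the same route as the paper: the forward direction builds the identical maximum matching, and the converse uses the same pinching argument — at least $q$ saturated vertices $h_j$ (forced by saturation of $V'_H\subseteq V_1$) pushing triples onto the $t^-$ side, against at most one unsaturated vertex per gadget $W_i^+$ (forced by saturation of $W_i^-\subseteq V_1$), with the bridging observation that a saturated common neighbour of both sides would merge the two components. The only cosmetic differences are that you index the candidate cover by $\{j : u_j \in T\}$ rather than by the saturated $h_j$'s and count disjoint triples rather than unsaturated vertices of $\bigcup_{x_i\in X} W_i^+$, which changes nothing substantive.
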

\begin{proof}
($\Rightarrow$): 
Let $(X,\mathcal{C})$ be a \YES-instance of \pname{Exact Cover By 3-sets} and let $\mathcal{C}' \subseteq \mathcal{C}$ be an exact cover of $X$. We construct the matching $M$ as follows.
\begin{enumerate}
\item Add the edges $b^-b^+$ and $t^-t^+$.
\item Choose $q$ many edges from $H$ that saturate all vertices in $\{ h_j: C_j\in \mathcal{C}' \}$.
\item For every $C_j\in \mathcal{C}$, we choose the edge $u_ju_j^+$ if $C_j\in \mathcal{C}'$ and $u_ju_j^-$ otherwise.
\item    For every 
$x_i\in X$,
choose the edges of a perfect matching of the complete bipartite graph $Y_i - w_{i,j} $, where $j$ is the unique index such that $x_i\in C_j$ and $C_j\in \mathcal{C}'$.
\end{enumerate}
Since $M$ saturates all vertices in $V_1$, it is a maximum matching. To show that $M$ is disconnected, consider the set 
$U\coloneqq \{t^-,t^+\} \cup \bigcup_{C_j \in \mathcal{C}'} \{ h_j,u_j,u_j^+ \} \cup V'_H$, containing only saturated vertices.
Since $N_G[U]\setminus U=\left \{  w_{i,j}: x_i\in C_j\in \mathcal{C}' \right \} \cup \{ u_j^-: C_j\in \mathcal{C}' \} \cup \{  h_j: C_j\in \mathcal{C}\setminus \mathcal{C}' \}  $, the vertices in $U$ have no neighbors outside of $U$ that are saturated.
Hence $M$ is indeed disconnected, where one component contains the vertices of $U$ and the other component the remaining saturated vertices.\\
\newline
($\Leftarrow$): Let $\nu(G)=\nu_d(G)$. Lemma~\ref{lem:nud-nu-size} implies the existence of a disconnected matching $M$ that saturates all vertices of $V_1$. 
The vertices $t^-$ and $b^-$ are contained in $V_1$ and hence saturated by $M$. 
Every edge of $G$ is adjacent to either $t^-$ or $b^-$, implying $G[M]$ has exactly two connected components, each containing one of those two vertices.

Since $V_H'\subseteq V_1$, $M$ saturates all vertices of $V_H'$, implying that at least $q$ many vertices of $V_H$ are saturated. 
Let $J=\{j: C_j\in \mathcal{C}, h_j \in V(M) \} $ 
be the indices of the saturated vertices in $\{h_1,\ldots, h_{|\mathcal{C}|} \}=V_H$, where $|J|\ge q$. 
Since $u_j\in V_1$ for every $C_j\in \mathcal{C}$, $u_j$ is saturated by $M$. Hence for all $j\in J$, $u_j$ is in the same component of $G[M]$ as 
$t^-$.

For every $j\in J$, none of the vertices in $N(u_j)\cap \left( \bigcup_{x_i\in X} W_i^+ \right)$ can be saturated by $M$, since otherwise $G[M]$ would contain a path from $t^-$ to $b^-$.
Hence there are at least $3|J|$ many non-saturated vertices in $\bigcup_{x_i \in X} W_i^+$.

On the other hand, for every $x_i\in X$, $W_i^-\subseteq V_1$, and since $|W_i^-|=|W_i^+|-1$, at most one vertex in $W_i^+$ is not saturated by $M$. Together, at most $3q$ many vertices in $\bigcup_{x_i \in X} W_i^+$ are not saturated.

The previous two arguments imply that exactly $3q$ many vertices in $\bigcup_{x_i \in X} W_i^+$ are not saturated -- one per $W_i^+$ -- and those vertices are $ \bigcup_{j\in J} \left[N(u_j)\cap \left( \bigcup_{x_i\in X} W_i^+ \right) \right]$.
Hence $\bigcup_{j\in J} C_j=X$ and $|J|=q$, implying $\mathcal{C}'=\{ C_j:j\in J \}$ is an exact cover of $X$.
\end{proof}

  To receive a graph with diameter 4, add a vertex $u$ and connect it to all vertices in $V_1$. Lets call this graph $G'$.
  Since every vertex is adjacent to $u$ or has a neighbor adjacent to $u$, $G'$ has diameter 4.
   Note that $\nu(G')=\nu(G)=|V_1|$ and since the vertex $u$ can't be saturated by a disconnected matching, $\nu_d(G')=\nu_d(G)$. Hence $\nu(G')=\nu_d(G')$ if and only if $\nu(G)=\nu_d(G)$. This concludes the proof of Theorem \ref{theoremnunud}.
 \begin{lemma}\label{cornunudi}
For fixed $i\ge 2$, deciding if $\nu = \nu_{d,i}$ is {\NPc} for bipartite graphs with diameter 4.
\label{corollarynunudi}
\end{lemma}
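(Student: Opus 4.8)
The plan is to reuse the reduction from Theorem~\ref{theoremnunud} essentially unchanged, exploiting the fact that $\nu_d=\nu_{d,2}$, so that the case $i=2$ is already the content of Theorem~\ref{theoremnunud}. For $i>2$ I would augment the reduction graph with $i-2$ ``throwaway'' edges that are forced into every maximum matching and always occupy their own connected component, so that the two-component matching encoding an exact cover becomes an $i$-component matching.

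Concretely, starting from the graph $G$ built in the reduction (before the diameter-fixing apex is added), I would add $i-2$ disjoint edges $a_\ell^+a_\ell^-$, $\ell\in[i-2]$, placing each $a_\ell^-$ into $V_1$ and each $a_\ell^+$ into $V_2$, and then introduce an apex vertex $u$ adjacent to every vertex of the enlarged side $V_1\cup\{a_\ell^-:\ell\in[i-2]\}$. Call the result $G_i'$. Since $u$ dominates the whole $V_1$-side and every non-apex edge has an endpoint there, $G_i'$ is bipartite and connected of diameter exactly $4$, just as in the original argument; each pendant $a_\ell^+$ (of degree one) reaches any other vertex along $a_\ell^+a_\ell^-u\dots$, a path of length at most $4$. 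A direct size count gives $\nu(G_i')=|V_1|+(i-2)$, a maximum matching being the one from Lemma~\ref{lem:nud-nu-size} together with the $i-2$ new edges, and $u$ remaining unsaturated.

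The crux is the equivalence $\nu(G_i')=\nu_{d,i}(G_i')$ iff $(X,\mathcal{C})$ admits an exact cover. For the forward direction I would take the disconnected (two-component) maximum matching produced in Lemma~\ref{lem:nudnu-main-lemma} and adjoin the $i-2$ gadget edges; as each $a_\ell^+$ has degree one and its only possibly-saturated neighbor is $a_\ell^-$, each gadget edge forms its own $K_2$-component, for a total of exactly $i$ components. For the converse I first re-establish that no matching with at least two components (hence no $i$-disconnected matching with $i\ge2$) can saturate $u$, since $u$ would merge all saturated vertices into a single component; thus any maximum $i$-disconnected matching saturates $V_1\cup\{a_\ell^-\}$, is forced to match each $a_\ell^-$ with $a_\ell^+$ (contributing exactly $i-2$ components), and restricts to a core matching $M_{\mathrm{core}}$ on $G$ that saturates $V_1$. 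Because every core edge meets $t^-$ or $b^-$, $M_{\mathrm{core}}$ spans at most two components, so reaching $i$ components in total forces $M_{\mathrm{core}}$ to have exactly two; the backward argument of Lemma~\ref{lem:nudnu-main-lemma} then applies verbatim to $M_{\mathrm{core}}$ to extract the exact cover. Membership in \NP\ is immediate, as a matching certificate can be checked for size and for having at least $i$ components in polynomial time.

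The step I expect to require the most care is the component-counting in the converse direction: I must rule out that a core contributing fewer components, compensated by a different behaviour of the gadgets, could still reach $i$ components. This is handled by the clean invariant that the core contributes \emph{at most} two components while the gadgets contribute \emph{exactly} $i-2$, so that the target of $i$ components can be met only when the core splits into the two $t^-$/$b^-$ components that encode the cover — precisely the situation analysed in Lemma~\ref{lem:nudnu-main-lemma}.
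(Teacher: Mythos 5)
Your proposal is correct and follows essentially the same route as the paper: the paper also attaches $i-2$ pendant $K_2$ gadgets (its $u_\ell v_\ell$ playing the role of your $a_\ell^+a_\ell^-$) plus an apex $u$ adjacent to the enlarged $V_1$-side, observes that any matching saturating $u$ is connected, and concludes that a maximum $i$-disconnected matching is exactly a maximum disconnected matching of $G$ together with the gadget edges, reducing to Lemma~\ref{lem:nudnu-main-lemma}. Your component-counting argument in the converse direction just spells out in more detail what the paper states more briefly.
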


\newcommand{\Gnew}{G_1}

\begin{proof}
    The decision problem is clearly in \NP. To show \NPHness, 
we adjust the graph $G$ by adding $i-2$ copies of $K_2$ consisting of the vertices $u_1,\ldots, u_{i-2}$, $v_1,\ldots, v_{i-2}$ and the edges $u_1v_1,\ldots, u_{i-2}v_{i-2}$. Further more, we add a vertex $u$ adjacent to $V_1\cup \{v_1,\ldots, v_{i-2}\}$, resulting in a new graph $\Gnew$. The graph $\Gnew$ has bipartition $V_1' \dot{\cup} V_2'$, where $V_1'=V_1 \cup \{v_1,\ldots, v_{i-2}\}$ and 
$V_2'=V_2 \cup \{u,u_1,\ldots, u_{i-2} \}$. 
Since every vertex in $V(\Gnew)\setminus \{u\}$ is either adjacent to $u$ or has a neighbor adjacent to $u$,
the graph $\Gnew$ has diameter 4.
Obviously $\nu(\Gnew)=|V_1''|=\nu(G)+(i-2)$.
Observe that any matching in $\Gnew$ that saturates the vertex $u$ is connected.
Therefore, for $i\ge 2$, any maximum $i$-disconnected matching of $G_1$ consists of a maximum disconnected matching of $G$ and the edges $u_1v_1,\ldots, u_{i-2}v_{i-2}$, implying $\nu_{d,i}(\Gnew)= \nu_d(\Gnew)+i-2$. Since $\nu(\Gnew)=\nu(G)+(i-2)$ and $\nu_{d,i}(\Gnew)= \nu_d(\Gnew)+i-2$, it holds that $\nu(\Gnew)=\nu_{d,i}(\Gnew)$ if and only if $\nu(G)=\nu_d(G)$. Applying Lemma~\ref{lem:nudnu-main-lemma} concludes the proof of Lemma~\ref{cornunudi}.\\
\end{proof}

\begin{lemma}\label{cornuinuj}
For fixed $i$ and $j$ with $2\le i<j$,
deciding if $\nu_{d,i} = \nu_{d,j}$ is {\NPH} for bipartite graphs with diameter 3.
\end{lemma}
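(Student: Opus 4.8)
The plan is to reduce from \pname{Exact Cover By 3-Sets} once more, reusing the bipartite graph $G$ built in the proof of Theorem~\ref{theoremnunud} (the graph before the vertex $u$ is attached), for which Lemma~\ref{lem:nudnu-main-lemma} already gives that the instance is a \YES-instance if and only if $\nu(G)=\nu_d(G)$. From $G$ I would build a graph $G_2$ in two steps. First, add $j-2$ pairwise disjoint edges $x_1y_1,\dots,x_{j-2}y_{j-2}$ with each $x_k$ on the $V_1$-side and each $y_k$ on the $V_2$-side; these serve as freely attachable components. Second, to force a small diameter, add two \emph{hub} vertices $\alpha$ (on the $V_1$-side) and $\beta$ (on the $V_2$-side), making $\alpha$ adjacent to every vertex currently on the $V_2$-side and $\beta$ adjacent to every vertex currently on the $V_1$-side (so in particular $\alpha\beta\in E(G_2)$). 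The resulting graph is bipartite with parts $V_1'=V_1\cup\{x_1,\dots,x_{j-2}\}\cup\{\alpha\}$ and $V_2'=V_2\cup\{y_1,\dots,y_{j-2}\}\cup\{\beta\}$, and since $\alpha$ (resp.\ $\beta$) is adjacent to all of $V_2'$ (resp.\ $V_1'$), any two vertices of the same part share a hub as a common neighbour and any two vertices of different parts are joined by the length-$3$ path through $\alpha\beta$; hence $G_2$ has diameter $3$.

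The crux is the following observation, which I would prove first: in $G_2$, every matching $M$ with at least two components leaves both $\alpha$ and $\beta$ unsaturated. Indeed, if $\alpha$ were saturated, then, since $\alpha$ is adjacent in $G_2$ to every vertex of $V_2'$, all saturated vertices of $V_2'$ would lie in the component of $\alpha$ in $G_2[M]$; as every saturated vertex of $V_1'$ is matched to a saturated vertex of $V_2'$, all of $G_2[M]$ would form a single component, a contradiction; the same argument applies to $\beta$. Consequently, any such $M$ is contained in $E(G)\cup\{x_1y_1,\dots,x_{j-2}y_{j-2}\}$, no new edges among the vertices of $G$ were introduced, and the only neighbours of $x_k$ and $y_k$ besides each other are the (unsaturated) hubs. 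Therefore each selected edge $x_ky_k$ forms its own component and $G_2[M]$ splits as the disjoint union of $G[M\cap E(G)]$ and the chosen extra edges; writing $p$ for the number of chosen extra edges, $|M|=|M\cap E(G)|+p$ and the number of components of $G_2[M]$ equals the number of components of $G[M\cap E(G)]$ plus $p$.

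With this decomposition the two parameters are easy to evaluate. Since adding one more extra edge raises both the size and the component count by one while $\nu_{d,c}(G)$ is non-increasing in $c$ (Equation~\ref{eq:nu_inequalities}), an optimum always uses all $j-2$ extra edges. For $\nu_{d,j}(G_2)$ we then need two components inside $G$, giving $\nu_{d,j}(G_2)=\nu_{d,2}(G)+(j-2)=\nu_d(G)+(j-2)$; for $\nu_{d,i}(G_2)$, because $i\le j-1$ the $j-2$ extra edges already supply all but at most one of the required components, so a maximum matching of $G$ (a single component suffices) can be used, giving $\nu_{d,i}(G_2)=\nu(G)+(j-2)$. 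Hence $\nu_{d,i}(G_2)=\nu_{d,j}(G_2)$ if and only if $\nu(G)=\nu_d(G)$, which by Lemma~\ref{lem:nudnu-main-lemma} holds if and only if the original instance is a \YES-instance. As $G_2$ is bipartite of diameter $3$, this establishes \NPHness; note that we only claim hardness, since, unlike $\nu$, neither $\nu_{d,i}$ nor $\nu_{d,j}$ is known to be polynomially computable, so membership in \NP\ is not asserted.

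I expect the main obstacle to be exactly the tension exploited above: lowering the diameter to $3$ normally requires dense connections, which would ordinarily interfere with the matching numbers. The hub gadget resolves this precisely because comparing two $c$-disconnected numbers with $c\ge 2$ forbids saturating a ``universal-like'' vertex, so the hubs can be added essentially for free; carefully verifying that the hubs are never saturated in a $\ge 2$-component matching is the key step on which the whole argument rests.
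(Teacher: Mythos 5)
Your proposal is correct and is essentially the paper's own proof: your construction (the $j-2$ extra $K_2$'s plus the adjacent hub pair $\alpha\beta$ joined to the opposite sides of the bipartition) coincides, up to renaming $\alpha,\beta,x_k,y_k$ as the paper's $v,u,v_k,u_k$, with the graph $\Gneww$ in the paper, and your key observation that no matching with at least two components can saturate a hub is exactly the paper's observation that any matching saturating $u$ or $v$ is connected. The only difference is that you spell out the resulting values $\nu_{d,i}(G_2)=\nu(G)+(j-2)$ and $\nu_{d,j}(G_2)=\nu_d(G)+(j-2)$ explicitly, where the paper compresses this into an ``it follows immediately'' step.
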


\newcommand{\Gneww}{G_2}

\begin{proof}
We adjust the graph $G$ by adding $j-2$ copies of $K_2$ consisting of the vertices $u_1,\ldots, u_{j-2}$, $ v_1,\ldots, v_{j-2}$ and the edges $u_1v_1,\ldots, u_{j-2}v_{j-2}$
and by adding another copy of $K_2$ with vertices $u,v$. We connect $u$ to every vertex in $V_1'\coloneqq V_1\cup \{v_1,\ldots, v_{j-2}\}  $ and $v$ to every vertex in $V_2'\coloneqq V_2 \cup \{u_1,\cdots, u_{j-2} \} $, resulting in a new bipartite graph $\Gneww$ with the partition classes $V_1'$ and $V_2'$ and diameter 3.
Observe that any matching that saturates $u$ or $v$ is connected. Hence $\nu_{d,i}(\Gneww)
=\nu_{d,i}(\Gneww-u-v)$ and $\nu_{d,j}(\Gneww)=\nu_{d,j}(\Gneww-u-v)$. 
The graph $\Gneww-u-v$ consists of $j-1$ components, namely $j-2$ many $K_2$ components and $G$.
It follows immediately, that $\Gneww-u-v$ has a $j$-disconnected matching of size $\nu(\Gneww)$ if and only if $\nu(G)=\nu_d(G)$ and hence $\nu_{d,i}(\Gneww)= \nu_{d,j}(\Gneww)$ if and only if $\nu(G)=\nu_d(G)$.
\end{proof}

\section{\texorpdfstring{$\nu_d = \nu$ for $\Delta\le 3$?}{Disconnected equals unrestricted for maximum degree 3}}

\newcommand{\Gdthree}{F}

\newcommand{\btmn}[2]{ H(#1,#2) }

\newcommand{\mg}[1]{ U_{#1} }

\newcommand{\mrb}{ \Pi_b }

\newcommand{\mrt}{ \Pi_t }

\newcommand{\tminus}{ t^- }
\newcommand{\ttt}{ t}
\newcommand{\bbb}{ b}
\newcommand{\tplus}{ t^+ }
\newcommand{\bminus}{ b^- }
\newcommand{\bplus}{ b^+ }

\newcommand{\Qt}{ Q }
\newcommand{\Qb}{ Q' }

\newcommand{\Cabs}{|\mathcal{C}|}

\newcommand{\ccc}[2]{\Pi(#2,#1)}
\newcommand{\ccs}[1]{\Pi(#1)}

\newcommand{\ua}[2]{ u^{#1}_{#2,1} }
\newcommand{\ub}[2]{ u^{#1}_{#2,2} }

\newcommand{\down}[2]{ u^{#1}_{#2,3} }
\newcommand{\downb}[2]{ u^{#1}_{#2,4}}

\newcommand{\plus}[1]{ u^+_{#1} }
\newcommand{\minus}[1]{ u^-_{#1} }

\newcommand{\Ynew}[1]{Y'_{#1} }

\newcommand{\Aa}{A}
\newcommand{\Bb}{B}
\newcommand{\Cc}{C}   \newcommand{\Dd}{D}

\newcommand{\BB}{B}
\newcommand{\CC}{C}   \newcommand{\DD}{D}

\newcommand{\ax}{a}
\newcommand{\bx}{b}
\newcommand{\cx}{c}
\newcommand{\cstar}[1]{c^*_{#1}}
\newcommand{\dx}{d}

In the previous chapter we described a reduction from {\sc Exact Cover By 3-Sets} to the decision problem $\nu_d(G)=\nu(G)$, where an instance $(\mathcal{C}, X)$ was reduced to a bipartite graph $G$.
Now, we refine our previous reduction such that the graph generated is not only bipartite but also subcubic.
This implies the following:

\begin{theorem}\label{theoremnunuddelta3}
Given a subcubic bipartite graph,
deciding if $\nu = \nu_d$ is \NPc.
\end{theorem}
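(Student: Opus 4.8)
The plan is to keep the overall architecture of the diameter-$4$ reduction from \pname{Exact Cover By $3$-Sets} and to replace every high-degree piece by a bounded-degree gadget that reproduces exactly the three facts the previous proof relied on: (i) a maximum matching has a fixed size and saturates a prescribed ``must-saturate'' set $V_1$ (the analogue of Lemma~\ref{lem:nud-nu-size}); (ii) since $G[M]=G[V(M)]$ is the subgraph \emph{induced} by the saturated vertices, a fully saturated connected gadget contributes a single component, and the global component count is controlled by two ``anchor'' structures replacing $t^-$ and $b^-$; and (iii) $\nu=\nu_d$ holds if and only if a valid selection of sets is possible (the analogue of Lemma~\ref{lem:nudnu-main-lemma}). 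Membership in \NP\ is immediate, so only the subcubic reduction needs work.

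First I would eliminate the hubs $t^-$ and $b^-$, which in the old construction had degree $\Theta(|\mathcal{C}|)$. Each is replaced by a long induced path (a ``rail'') $\mrt$ and $\mrb$, whose order and attachment pattern are chosen so that every maximum matching saturates the \emph{entire} rail; every former neighbour of $t^-$ (resp.\ $b^-$) is joined by a single pendant edge to a private rail vertex, keeping each rail vertex at degree at most three. The key point is that, because $G[M]$ is induced on the saturated vertices, a fully saturated rail is a connected induced path and hence plays the role of the old hub: every saturated vertex attached to $\mrt$ lands in the ``top'' component and every one attached to $\mrb$ in the ``bottom'' one. Next, each element gadget $Y_i\cong K_{f_i,f_i-1}$, whose only job was to force \emph{at most one} of its ports to be unsaturated with full freedom over which one, is replaced by an odd path $\Ynew{i}$ of order $2f_i-1$ carrying the ports $w_{i,\cdot}$ at its $f_i$ odd positions: a maximum matching of an odd path saturates every even position and all but exactly one odd position, and any odd port can be the exposed one, which is precisely the behaviour of $Y_i$ (and keeps $W_i^-\subseteq V_1$, $W_i^+\subseteq V_2$). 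Finally, the degree-six set vertices $u_j$ are blown up into the constant-size gadget on $\ua{}{j},\ub{}{j},\down{}{j},\downb{}{j}$ together with $\plus{j},\minus{j}$, so that the choice ``match towards $\mrt$'' versus ``match towards $\mrb$'' still encodes $C_j\in\mathcal{C}'$ while no vertex exceeds degree three, with the $w_{i,j}$ connections to the bottom anchor rerouted through the interior of $\Ynew{i}$ to avoid overloading any port.

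With these replacements in place I would re-prove the two lemmas. For the size lemma I would exhibit, as before, a matching saturating the new $V_1$ (each rail matched internally, each $\Ynew{i}$ using the matching that exposes the correct port, each set gadget matched to its chosen side) and invoke bipartiteness to conclude it is maximum. For the main lemma, the $(\Rightarrow)$ direction is a direct translation of the explicit matching built from a cover, with the top component taken to be the saturated vertices reachable from $\mrt$; the $(\Leftarrow)$ direction repeats the counting argument: saturating the $q$ forced demand vertices of the selection gadget forces at least $q$ sets into the top component, each such set forbids its three element-ports from being saturated (otherwise a saturated $\mrt$--$\mrb$ path would merge the two components), and the ``at most one exposed port per element'' bound coming from the $\Ynew{i}$ paths pins the count to exactly $q$ and yields an exact cover.

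The main obstacle is the replacement of the complete bipartite selector $H\cong K_{q,|\mathcal{C}|}$ by a subcubic gadget $\btmn{q}{|\mathcal{C}|}$. Its defining property is that it forces at least $q$ of the $|\mathcal{C}|$ set-ports to be saturated while allowing \emph{any} $q$-subset to be the saturated one, and this is exactly the flexibility that a single layer of bounded-degree edges cannot provide — the complete bipartite graph is the natural realization but has unbounded degree. I expect the real work to be a layered (or recursive) degree-three construction $\btmn{q}{|\mathcal{C}|}$ of polynomial size that routes any $q$ chosen ports to $q$ forced-saturated demand vertices through degree-three intermediaries, together with the verification that this gadget can never host a saturated path between the two rails, so that it never spuriously merges the two components. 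A secondary but pervasive difficulty is the global connectivity bookkeeping: one must check that in \emph{every} maximum matching each new gadget, when fully saturated, induces a connected subgraph attached to the correct rail, so that the number of components of $G[M]$ remains governed solely by the set-selection choices and the equivalence with \pname{Exact Cover By $3$-Sets} is preserved.
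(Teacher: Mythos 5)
Your high-level plan is the same as the paper's: keep the \pname{Exact Cover By 3-Sets} reduction behind Theorem~\ref{theoremnunud} and replace each high-degree piece by a subcubic gadget, then re-run the saturation and counting arguments. Several of your concrete choices even coincide with the paper's: each $Y_i\cong K_{f_i,f_i-1}$ becomes an odd path $P_{2f_i-1}$ whose odd positions are the ports and whose end-vertices alone are wired to the bottom anchor; each $u_j$ becomes a constant-size gadget; the hubs become caterpillar-like anchors. However, there is a genuine gap at exactly the point you flag as ``the main obstacle'': you never construct the subcubic replacement for the selector $H\cong K_{q,|\mathcal{C}|}$ --- you only state the specification it must satisfy and ``expect'' a layered or recursive construction to exist. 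That gadget is the heart of the reduction; without it there is no reduction, and its existence does not follow from the specification. For the record, the paper's construction is direct rather than recursive: each of the $|\mathcal{C}|$ vertices of degree $q$ is replaced by a path $R_j\cong P_{2q-1}$, each of the $q$ vertices of degree $|\mathcal{C}|$ by a path $Q_k\cong P_{2|\mathcal{C}|-1}$, and the original edge between the $j$-th and $k$-th vertices becomes an edge between the $(2j-1)$-th vertex of $Q_k$ and the $(2k-1)$-th vertex of $R_j$. Lemma~\ref{matchingHqC} then establishes precisely the two properties you postulate: the matching number is $|\mathcal{C}|(2q-1)$, and for every $(|\mathcal{C}|-q)$-subset $S$ of the ports $c^*_1,\dots,c^*_{|\mathcal{C}|}$ there is a maximum matching exposing exactly $S$; the bipartite count of $B\cup D$ against $A\cup C$ in the final lemma is what forces at least $q$ ports to be saturated in any maximum matching.

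A second, smaller problem: the component bookkeeping you defer as a ``secondary difficulty'' is a substantial part of the paper's proof (Lemma~\ref{satinTorB}), and one of the invariants you assert is stronger than what a correct construction gives. You require that ``every maximum matching saturates the entire rail''; in the paper's anchors this is false --- the pendant vertices $t_{k,4}$ and $b_{k,4}$ may remain unsaturated. What is actually proved, and what suffices, is that the spine edges $t_{k,1}t_{k,2}$ are forced into every maximum matching (a cascading argument starting from a degree-one end, using that the spine and attachment vertices lie in the side $V_1$ of the bipartition, all of which must be saturated since $\nu=|V_1|$), so that all \emph{saturated} vertices of an anchor lie in a single component of $G[M]$. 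With your design, where former neighbours of $t^-$ attach directly to private spine vertices of a plain path, a maximum matching may trade a spine edge for an attachment edge, breaking both full saturation and the forced-spine argument; the paper avoids this by giving every attachment vertex $t_{k,3}$ a private pendant partner $t_{k,4}$. So the architecture is right, but the two pieces you deferred --- the selector gadget and the component analysis --- are precisely where the work of this theorem lies.
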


Let $(\mathcal{C},X)$ be an instance of  {\sc Exact Cover By 3-Sets}. Recall that $|X|=3q$. Note that we can assume $|\mathcal{C}|\ge q\ge  2$, since otherwise the instance is efficiently solvable. We construct a graph $\Gdthree$ of maximum degree three such that $\nu_d(\Gdthree)=\nu(\Gdthree)$ if and only if $(\mathcal{C}, X)$ is a YES-instance.
For better understandability, we first show graphically the transformation from $G$ to $\Gdthree$, where $G$ is the graph of the previous chapter according to (I) -- (IV) for the instance $(\mathcal{C},X)$. 
Afterwards, we formally define $\Gdthree$ and validate the reduction.\\

\newcommand{\tstar}{ t^{*}}
\newcommand{\bstar}{ b^{*}}
\newcommand{\tstars}{ t^{**}}
\newcommand{\bstars}{ b^{**}}

Initially, we choose $\Gdthree $ as a copy of $G$ and modify it step-by-step.
The degree of the vertices $\tminus$ and $\bminus$ is of course too large. We are going to replace the edges $\tminus\tplus$ and $\bminus\bplus$ with subgraphs. For simplicity, this will be the last step. To avoid confusion, label $\tminus,\tplus, \bminus, \bplus$ in $\Gdthree$ as $\tstar,\tstars, \bstar,\bstars$, respectively.

\subsection{\texorpdfstring{Replace $H$ isomorphic to $K_{q,\Cabs}$ with the subgraph $\btmn{q}{\Cabs}$}{Modification part 1}}\label{secbm}

In $\Gdthree$, we replace the complete bipartite subgraph $H$ isomorphic to $K_{q,|\mathcal{C}|}$ with a new subgraph $\btmn{q}{|\mathcal{C}|} $ (see example in Figure \ref{H35}).\\

The graph $\btmn{q}{|\mathcal{C}|}$ is constructed as follows:
\begin{enumerate}[label=(\roman*), leftmargin=*, start=1]
	\item For each $k\in [q]$ generate a path $Q_k$ isomorphic to $P_{2|\mathcal{C}|-1}$. Declare one of its end-vertices as the     first vertex implying a natural order of the vertices.
	
	\item For each $j\in \left[|\mathcal{C}| \right]$ generate a path $R_j$ isomorphic to $P_{2q-1}$ 
  with ``first'' vertex $\cstar{j}$, implying a natural order of the vertices of $R_j$.
	
	\item For each $j\in [|\mathcal{C}|]$ and $k\in [q]$ add an edge between the 
  $(2j-1)$th vertex of $Q_k$ and the 
  $(2k-1)$th vertex of $R_j$.
\end{enumerate}
Let the set $\Bb$ contain the
first, third, fifth, \ldots vertex of each path $Q_1,\ldots, Q_q$
and $\Aa$ all the other vertices of $Q_1,\ldots, Q_q$. Let $\Cc$ contain the first, third, fifth, \ldots vertex of each path $R_1,\ldots, R_{\Cabs}$ and $\Dd$ all the other vertices of $R_1,\ldots, R_{\Cabs}$.\\
Figure \ref{HqC35} shows a $\btmn{3}{5}$, where $\Aa,\Bb,\Cc,\Dd$ contain the
vertices of the first, second, third and fourth row, respectively.

To receive $\btmn{q}{\Cabs}$ from $K_{q,|\mathcal{C}|}$, graphically and simplified speaking, we replace the vertices of degree $q$ in $K_{q,|\mathcal{C}|}$ with paths of length $2q-1$, we replace the vertices of degree $|\mathcal{C}|$ in  $K_{q,|\mathcal{C}|}$ with paths of length $2|\mathcal{C}|-1$ and we distribute the original edges between the first, third, fifth, $\ldots$ vertices of those paths. 

\begin{figure}[htb!]
	\centering
	\begin{tikzpicture}[scale=0.37]
	\tikzstyle{point}=[draw,circle,inner sep=0.cm, minimum size=1mm, fill=black]
	
	\foreach \j in {0,1,2,3,4}{
		
		\begin{scope}[shift={(7*\j,-2)}]
		
		\node[point] (w) at (0,0) [label=above:] {};
		\foreach \i in {0}{
			\node[point] (h\j\i) at (0,0) [label=above:] {};
		}
		
		\foreach \i in {1,...,2}{
			\node[point] (ww) at (2*\i-1,-1) [label=above:] {};
			\draw (w) -- (ww);
			\node[point] (w) at (2*\i,0) [label=above:] {};
			\node[point] (h\j\i) at (2*\i,0) [label=above:] {};
			\draw (w) -- (ww);
		}
		
		\end{scope}
	}
	
	\foreach \j in {0,1,2}{
		
		\begin{scope}[shift={(12*\j,0)}]
		
		\node[point] (w) at (0,0) [label=above:] {};
		\foreach \i in {0}{
			\node[point] (a\j\i) at (0,0) [label=above:] {};
		}
		\foreach \i in {1,...,4}{
			\node[point] (ww) at (2*\i-1,1) [label=above:] {};
			\draw (w) -- (ww);
			\node[point] (w) at (2*\i,0) [label=above:] {};
			\node[point] (a\j\i) at (2*\i,0) [label=above:] {};
			\draw (w) -- (ww);
		}
		
		\end{scope}
	}
	
	\foreach \i in {0,...,4}{
		\foreach \j in {0,...,2}{
			\draw (h\i\j) -- (a\j\i);
		}
	}

	\end{tikzpicture}
	\caption{$\btmn{3}{5}$}
	\label{HqC35}
\end{figure}

After replacing $H$ with $\btmn{q}{\Cabs}$, we adjust the adjacent edges (see Figure \ref{H35}). In $G$ each vertex $h_j$ of the partition of size $\Cabs$ of $H$ is adjacent to $\tminus$ and $u_j$. 
In $\Gdthree$ we connect every vertex in $\Aa$ of $\btmn{q}{\Cabs}$ to $\tstar$ and for each $j\in [\Cabs]$ we add an edge between $\cstar{j}$ and $u_j$. 
	Note that we modify $\Gdthree$ step-by-step, the vertex $\tstar$ is not in the final graph.

\begin{figure}[htb!]
	\centering
	\begin{tikzpicture}[scale=0.35]
	
	\tikzstyle{point}=[draw,circle,inner sep=0.cm, minimum size=1.0mm, fill=black]
	\tikzstyle{point2}=[draw,circle,inner sep=0.cm, minimum size=0.666mm, fill=black]
	
	\foreach \i in {-5,-4,-3}{
		\foreach \j in {-6,-5,-4,-3,-2}{
		\node[point] (a) at (2*\i,-2) [label=above:] {};
		\node[point] (a) at (2*\j,-4) [label=below left:] {};
		\draw (2*\i,-2) -- (2*\j,-4);
		\draw[dashed] (2*\j,-4) -- (2*\j,-8);
		\draw[dashed] (2*\j,-4) -- (-4,1);
	}
}
	\foreach \j in {1,2,3,4,5}{
	\node (a) at (2*\j-2*6.55,-3.2) [label=below left:$h_{\j}$] {};
	\node[point2] (a) at (2*\j-2*7,-8) [label=below:$u_{\j}$] {};
}

	\node[point] (a) at (-4,1) [label=above:$\tminus$] {};
	
	\draw [-stealth](-3,-3) -- (-1,-3);
	
	\begin{scope}[shift={(1,-2)}]
	\foreach \j in {0,1,2,3,4}{
		
		\begin{scope}[shift={(7*\j,-2)}]
		
		\node[point] (w) at (0,0) [label=above:] {};
		\foreach \i in {0}{
			\node[point] (h\j\i) at (0,0) [label=above:] {};
		}
		
		\foreach \i in {1,...,2}{
			\node[point] (ww) at (2*\i-1,-1) [label=above:] {};
			\draw (w) -- (ww);
			\node[point] (w) at (2*\i,0) [label=above:] {};
			\node[point] (h\j\i) at (2*\i,0) [label=above:] {};
			\draw (w) -- (ww);
		}
		
		\end{scope}
	}
	
	\foreach \j in {0,1,2}{
		
		\begin{scope}[shift={(12*\j,0)}]
		
		\node[point] (w) at (0,0) [label=above:] {};
		\foreach \i in {0}{
			\node[point] (a\j\i) at (0,0) [label=above:] {};
		}
		\foreach \i in {1,...,4}{
			\node[point] (ww) at (2*\i-1,1) [label=above:] {};
			\node[point] (b\j\i) at (2*\i-1,1) [label=above:] {};
			\draw (w) -- (ww);
			\node[point] (w) at (2*\i,0) [label=above:] {};
			\node[point] (a\j\i) at (2*\i,0) [label=above:] {};
			\draw (w) -- (ww);
		}
		
		\end{scope}
	}
	
	\foreach \i in {0,...,4}{
		\foreach \j in {0,...,2}{
			\draw (h\i\j) -- (a\j\i);
		}
	}

	\foreach \i in {0,1,2,3,4}{
		\foreach \j in {0}{
			\pgfmathparse{int(\i+1)}
			\def\r{\pgfmathresult}
			\node (a) at (h\i\j) [label={[shift={(-0.3,-0.7)}]:$\cstar{\r}$}] {};
			\draw[dashed] (h\i\j) -- + (0,-4);
		}
	}
	
	\node[point] (t) at (16,4) [label=above:$\tstar$] {};
	
	\foreach \i in {1,2,3,4}{
		\foreach \j in {0,1,2}{
			\draw[dashed] (b\j\i) -- (t);
		}
	}
	
	\end{scope}
	
	\end{tikzpicture}
	\caption{Example for $q=3$ and $\Cabs=5$: $H\cong K_{3,5}$ $\rightarrow$ $\btmn{3}{5}$}
	\label{H35}
\end{figure}

\newcommand{\va}[1]{v_{#1,1}}
\newcommand{\vb}[1]{v_{#1,2}}
\newcommand{\vc}[1]{v_{#1,3}}
\newcommand{\vd}[1]{v_{#1,4}}
In the following subsections, we use the following graph:
\begin{definition}\label{definitionccc}
Given $k\in \mathbb{N}$ and a label $v$, the graph $\ccc{v}{k}$ has the vertices $\va{\ell},\vb{\ell},\vc{\ell},\vd{\ell}$ for every $\ell \in [k]$. Afterwards, we add $2k-1$ many edges such that $\va{1}\vb{1}\va{2}\vb{2}\cdots \va{k}\vb{k}$ is a path of length $2k$ and $2k$ many edges such that $\vb{\ell}\vc{\ell}\vd{\ell}$ is a path of length three for every $\ell\in [k]$ (see Figure \ref{masternew}).
\end{definition}

	\begin{figure}[htb!]
		\centering
		\begin{tikzpicture}[scale=0.9]
		
		\newcommand{\xx}[2]{ \vb{#2} }
		\newcommand{\xy}[2]{ \va{#2} }
		\newcommand{\xz}[2]{ \vc{#2} }
		\newcommand{\xw}[2]{ \vd{#2} }
		
		\tikzstyle{point}=[draw,circle,inner sep=0.cm, minimum size=1.5mm, fill=black]

		\draw (1,1) -- (8,1);
		\foreach \i in {1,2,3,4  }{
			\node[point] (x) at (2*\i,1) [label=above:$\xx{j}{\i}$] {};
			\node[point] (y) at (2*\i-1,1) [label=above:$\xy{j}{\i}$] {};
			\node[point] (z) at (2*\i,0) [label=below:$\xz{j}{\i}$] {};
			\node[point] (w) at (2*\i-1,0) [label=below:$\xw{j}{\i}$] {};
			\draw (x) -- (z) -- (w);
		}
		
		\tikzstyle{point2}=[draw,circle,inner sep=0.cm, minimum size=0.5mm, fill=black]
		
		\draw (8,1) -- (8.5,1);
		
		\foreach \i in {9,9.5,10 }{
			\node[point2] at (\i,1) [label=right:] {};
		}
		
		\draw (10.5,1) -- (12,1);

		\foreach \i in {6  }{
			\node[point] (x) at (2*\i,1) [label=above:$\xx{j}{k}$] {};
			\node[point] (y) at (2*\i-1,1) [label=above:$\xy{j}{k}$] {};
			\node[point] (z) at (2*\i,0) [label=below:$\xz{j}{k}$] {};
			\node[point] (w) at (2*\i-1,0) [label=below:$\xw{j}{k}$] {};
			\draw (x) -- (z) -- (w);
		}	
		
		\end{tikzpicture}
		\caption{$\ccc{k}{v}$}
		\label{masternew}
	\end{figure}

\subsection{\texorpdfstring{Replace the $P_3$ $\plus{j}u_j\minus{j}$ with the subgraph $\mg{j}$}{Modification part 2}}\label{secmg}

In $G$, for each $C_j\in \mathcal{C}$, the vertex $u_j$ has the neighbors $h_j$, $u_j^+$, $u_j^-$ and three neighbors in $ \bigcup_{x_i\in X} W_i^+ $, namely the vertices $w_{i,j}$ for all $x_i\in C_j$. In $\Gdthree$, we replace the $P_3$ $\plus{j}u_j\minus{j}$ with the subgraph $\mg{j}$, which contains a $\ccc{u^j}{4}$, a $P_3$ $\plus{j}\ua{j}{5}\minus{j}$ and the edge $\ub{j}{4}\ua{j}{5}$ (see Figure \ref{transformationuj}).

Now we add the edges $\plus{j}\tstar$ and $\minus{j}\bstar$, like in $G$. We connect $\down{j}{4}$ to $\cstar{j}$ and we add three edges such that each vertex in $\big\{ \down{j}{1}, \down{j}{2}, \down{j}{3}\big\}$ is adjacent to exactly one vertex in $\{w_{i,j}: x_i\in C_j\}$ and vice versa. 

\begin{figure}[htb!]
	\centering
	\begin{tikzpicture}[scale=0.68]
	
	\newcommand{\xx}[2]{}
	\newcommand{\xy}[2]{}
	\newcommand{\xz}[2]{\down{#1}{#2} }
	\newcommand{\xw}[2]{}
	
	\tikzstyle{point}=[draw,circle,inner sep=0.cm, minimum size=1.5mm, fill=black]
	\tikzstyle{point2}=[draw,circle,inner sep=0.cm, minimum size=1mm, fill=black]
	
	\foreach \i in {-5  }{
		\node[point] (uj) at (\i,1) [label=above left:$u_j$] {};
		\node[point] (ujplus) at  (\i+1,2) [label=above:$u_j^+$] {};
		\node[point] (ujminus) at  (\i+1,0) [label=above:$u_j^-$] {};
		\draw (uj) -- (ujplus); 
		\draw (uj) -- (ujminus); 
		\draw[dashed] (ujplus) -- (\i+3,4);
		\draw[dashed] (ujminus) -- (\i+3,-2);
		\draw[dashed] (uj) -- (\i, 4); 
		\draw[dashed] (uj) -- (\i+1, -2); 
		\draw[dashed] (uj) -- (\i,   -2); 
		\draw[dashed] (uj) -- (\i-1, -2); 
		\node[point2] at (\i+3, -2) [label=right:$\bminus $] {}; 
		\node[point2] at (\i+3, 4) [label=right:$\tminus$] {};
		\node[point2] at (\i, 4) [label=left:$h_j$] {};
	}
	
	\draw [-stealth](-2,1) -- (-1,1);
	
	\begin{scope}[shift={(-1,0)}]	
	
	\draw (1,1) -- (8,1);
	\foreach \i in {1,2,3  }{
		\node[point] (x) at (2*\i,1) [label=above:$\xx{j}{\i}$] {};
		\node[point] (y) at (2*\i-1,1) [label=above:$\xy{j}{\i}$] {};
		\node[point] (z) at (2*\i,0) [label={[shift={(0.35,-0.6)}]:$\xz{j}{\i}$}] {};
		\node[point] (w) at (2*\i-1,0) [label=below:$\xw{j}{\i}$] {};
		\draw (x) -- (z) -- (w);
		\draw[dashed] (z) -- (2*\i,-2);
	}
	\foreach \i in {4  }{
		\node[point] (x) at (2*\i,1) [label=above:$\xx{j}{\i}$] {};
		\node[point] (y) at (2*\i-1,1) [label=above:$\xy{j}{\i}$] {};
		\node[point] (z) at (2*\i,2) [label=right:$\xz{j}{\i}$] {};
		\node[point] (w) at (2*\i-1,2) [label=below:$\xw{j}{\i}$] {};
		\draw (x) -- (z) -- (w);
		\draw[dashed] (z) -- (2*\i,4);
	}
	\node[point] (uj) at  (9,1) [label=below:$\ua{j}{5}$] {};
	\node[point] (ujplus) at  (10,2) [label=above:$u_j^+$] {};
	\node[point] (ujminus) at  (10,0) [label=above:$u_j^-$] {};
	
	\draw (8,1) -- (uj) -- (ujplus); 
	\draw (uj) -- (ujminus); 
	\draw[dashed] (ujplus) -- (12,4);
	\draw[dashed] (ujminus) -- (12,-2);
	
	\node[point2] at (12, -2) [label=right:$\bstar $] {}; 
	\node[point2] at (12, 4) [label=right:$\tstar$] {};
	\node[point2] at (8, 4) [label=right:$\cstar{j}$] {};

	\end{scope}
	
	\end{tikzpicture}
	\caption{$\plus{j}u_j\minus{j}$ $\rightarrow$ $\mg{j}$}
	\label{transformationuj}
\end{figure}

\subsection{\texorpdfstring{Replace $Y_i$ isomorphic to $K_{f_i,f_i-1}$ with $\Ynew{i}$ isomorphic to $P_{2f_i-1}$}{Modification part 3}}\label{secynew}

\newcommand{\Wplus}[1]{W_{#1}^+}
\newcommand{\Wminus}[1]{W_{#1}^-}
\newcommand{\f}[1]{ f_{#1} }
\newcommand{\w}[2]{ w_{#1,#2} }

For each $x_i\in X$ the graph $G$ contains the subgraph $Y_i$ isomorphic to $K_{f_i,f_i-1}$, where $f_i$ is the number of triples in $\mathcal{C}$ containing $x_i$. In $\Gdthree$, we replace $Y_i$ with a path $\Ynew{i}$ isomorphic to $P_{2f_i-1}$ (see Figure \ref{transformationYi}).
Like in $G$, let $\Wplus{i}\dot{\cup}\Wminus{i}$ be the bipartition of $\Ynew{i}$, where $|\Wplus{i}|=\f{i}$ and $\Wminus{i}=\f{i}-1$. As in $G$, we label the vertices of the larger bipartition $W_i^+$ as $\{w_{i,j}: x_i\in C_j \in \mathcal{C} \}$ (arbitrary but fixed) and keep the edges between $\big\{ \down{j}{1}, \down{j}{2}, \down{j}{3}\big\}$ and $\{w_{i,j}: x_i\in C_j\}$ as defined in the previous step.
In $G$, all vertices of the partition of size $f_i$ of $Y_i$ are adjacent to $\bminus$.
In $\Gdthree$, only the end-vertices of the path $\Ynew{i}$ are adjacent to $\bstar$ to receive degree at most three for all vertices of the path $\Ynew{i}$ in $\Gdthree$. 
For $f_i\ge 2$, let $w_i^*$ and $w_i^{**}$ be those end-vertices of the path $\Ynew{i}$. Note, if $f_i=1$, the unique vertex of $\Ynew{i}$, say $w_i^*$, is adjacent to $\bstar$.

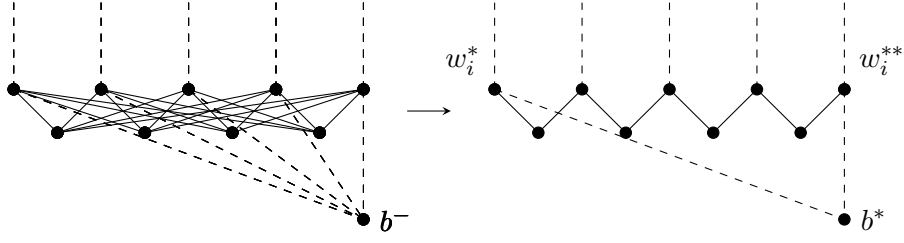
\begin{figure}[htb!]
	\centering
	\begin{tikzpicture}[scale=0.575]
	
	\tikzstyle{point}=[draw,circle,inner sep=0.cm, minimum size=1.5mm, fill=black]

	\foreach \i in {-9,-7,-5,-3}{
	\foreach \j in {-10,-8,-6,-4,-2}{
		\node[point] (a) at (\i,0) [label=above:] {};
		\node[point] (a) at (\j,1) [label=above:] {};
		\draw (\i,0) -- (\j,1);
		\draw[dashed] (\j,1) -- (\j,3);
		\draw[dashed] (\j,1) -- (-2,-2);
	}	
	\node[point] (a) at (-2,-2) [label=right:$\bminus$] {};
}

	\draw [-stealth](-1,0.5) -- (0,0.5);
	
	\foreach \i in {1,3,5,7,9  }{
		\node[point] at (\i,1) [label=above:] {};	
		\draw[dashed] (\i,1) -- (\i,3);
}
	\foreach \i in {2,4,6,8  }{
	\node[point] at (\i,0) [label=above:] {};	
	\draw (\i,0) -- (\i-1,1);
	\draw (\i,0) -- (\i+1,1);
}
	\draw[dashed] (9,1) -- (9,-2);
	\draw[dashed] (1,1) -- (9,-2);
	\node[point] (a) at (9,-2) [label=right:$b^*$] {};

	\node[point] (a) at (1,1) [label=above left:$w_i^{*}$] {};
	\node[point] (a) at (9,1) [label=above right:$w_i^{**}$] {};
	
	\end{tikzpicture}
	\caption{Example for $f_i=5$: $Y_i$ in $G$ $\rightarrow$ $\Ynew{i}$ in $\Gdthree$ }
	\label{transformationYi}
\end{figure}

\subsection{\texorpdfstring{Replace $\tstar\tstars$ and $\bstar\bstars$ with $\mrt$ and $\mrb$, respectively}{Modification part 4}}\label{secmrt}
\newcommand{\mrta}[1]{ t_{#1,3} }
\newcommand{\mrtb}[1]{ t_{#1,4} }
\newcommand{\mrtc}[1]{ t_{#1,2} }
\newcommand{\mrtd}[1]{ t_{#1,1} }

\newcommand{\mrba}[1]{ b_{#1,3} }
\newcommand{\mrbb}[1]{ b_{#1,4} }
\newcommand{\mrbc}[1]{ b_{#1,2} }
\newcommand{\mrbd}[1]{ b_{#1,1} }
Now $\tstar$ and $\bstar$ are the only vertices of $\Gdthree$ with degree larger three. Besides $\tstars$, the vertex $\tstar$ is adjacent to all vertices in $\Aa$ and $u_j^+$ for each $C_j\in \mathcal{C}$ in $\Gdthree$, implying $\Qt \coloneqq \text{d}_{\Gdthree}(\tstar)-1=q(\Cabs-1)+\Cabs$. 
In $\Gdthree$, we replace the edge $\tstar\tstars$ with the subgraph $\mrt=\ccc{\ttt}{\Qt}$ (Definition \ref{definitionccc}, Figure \ref{transformationtandb}) and add $\Qt$
many edges such that each vertex in 
$\{  \mrta{k}: k\in [\Qt] \}$
is adjacent to exactly one vertex in $A \cup  \{ \plus{j}: C_j\in \mathcal{C}\}$ and vice versa.\\

Analogously for $\bstar$. 
Besides $\bstars$, the vertex $\bstar$ is adjacent to $u_j^-$ for each $C_j\in \mathcal{C}$ and to the end-vertices $w^*_i,w^{**}_i$ of $\Ynew{i}$  (one end-vertex $w^*_i$ if $f_i=1$) 
for each $x_i\in X$ in $\Gdthree$, implying
$\Qb\coloneqq \text{d}_{\Gdthree}(\bstar)-1=\Cabs+ \sum_{x_i \in X} \min\{ f_i,2 \}$. In $\Gdthree$, we replace the edge $\bstar\bstars$ with the subgraph $\mrb=\ccc{\bbb}{\Qb}$ (Definition \ref{definitionccc}) and add $\Qb$
many edges
such that each vertex in $\{ w_i^* : x_i\in X   \}\cup \{ w_i^{**}: x_i\in X, f_i\ge 2 \}\cup \{  \minus{j}: C_j\in \mathcal{C}  \}$ is adjacent to exactly one vertex in $\{\mrba{1},\ldots, \mrba{\Qb}   \}$ and vice versa.

\begin{figure}[htb!]
	\centering
	\begin{tikzpicture}[scale=0.79]
	
	\tikzstyle{point}=[draw,circle,inner sep=0.cm, minimum size=1.5mm, fill=black]
	
	\tikzstyle{point2}=[draw,circle,inner sep=0.cm, minimum size=0.5mm, fill=black]
	
	\foreach \i in {-4}{
	\draw (\i,1) -- (\i+1,1);
	\node[point] (t) at (\i,1) [label=above:$t^*$] {};
	\node[point] at (\i+1,1) [label=above:$t^{**}$] {};
	\draw[dashed] (t) -- (\i-1,-1);
	\draw[dashed] (t) -- (\i-0.5,-1);
	\draw[dashed] (t) -- (\i,-1);
	
	\foreach \j in {0.25,0.5,0.75 }{
		\node[point2] at (\i+\j,-.5) [label=right:] {};
	}
	\draw[dashed] (t) -- (\i+1.25,-1);
}

\draw[thick,black,decorate,decoration={brace,amplitude=10}] (-2.5,-1.25) -- (-5.25,-1.25) node[midway, below,yshift=-10]{$\Qt$};	
	
	\draw [-stealth](-2,0.5) -- (-1,0.5);

	\begin{scope}[shift={(-1,0)}]	
	
	\draw (1,1) -- (6.5,1);
	\foreach \i in {1,2,3  }{
		\node[point] (x) at (2*\i,1) [label=above:$\mrtc{\i}$] {};
		\node[point] (y) at (2*\i-1,1) [label=above:$\mrtd{\i}$] {};
		\node[point] (z) at (2*\i,0) [label=below right:$\hspace{-1mm}\mrta{\i}$] {};
		\node[point] (w) at (2*\i-1,0) [label=above:$\mrtb{\i}$] {};
		\draw (x) -- (z) -- (w);
		\draw[dashed] (z) -- (2*\i,-2);
	}
	
	\foreach \i in {7,7.5,8 }{
		\node[point2] at (\i,1) [label=right:] {};
	}
	
	\draw (8.5,1) -- (10,1);
	\foreach \i in {5  }{
		\node[point] (x) at (2*\i,1) [label=above:$\mrtc{\Qt}$] {};
		\node[point] (y) at (2*\i-1,1) [label=above:$\mrtd{\Qt}$] {};
		\node[point] (z) at (2*\i,0) [label=below right:$\hspace{-1mm}\mrta{\Qt}$] {};
		\node[point] (w) at (2*\i-1,0) [label=above:$\mrtb{\Qt}$] {};
		\draw (x) -- (z) -- (w);
		\draw[dashed] (z) -- (2*\i,-2);
	}

\end{scope}
	
	\end{tikzpicture}
	\caption{Replace the edge $\tstar\tstars$ in $\Gdthree$ with $\mrt$}
	\label{transformationtandb}
\end{figure}
Now $\Gdthree$ has maximum degree three. This concludes the construction of $\Gdthree$.

\subsection{\texorpdfstring{Compact definition of $\Gdthree$}{Compact description of reduction}}

Using the subgraphs defined in the previous four subsections and the corresponding labeling of the vertices, we give a short definition of $\Gdthree$, summarizing the previous subsections.\\

Given the instance $(\mathcal{C}, X)$ of { \sc Exact Cover By 3-Sets} with $|\mathcal{C}|\ge q\ge  2$, the graph $\Gdthree$ is constructed as follows: 
	
\begin{enumerate}[label=(\Roman*), leftmargin=*, start=1]
	\item Generate the subgraph $\btmn{q}{\Cabs}$ (see \ref{secbm}).

	\item For every $C_i \in \mathcal{C}$, generate the subgraph $\mg{j}$ (see \ref{secmg}) and add an edge between the vertices $\down{j}{4}$ and $\cstar{j}$.
	
	\item For each $x_i \in X$, generate the subgraph $\Ynew{i}$ isomorphic to $P_{2\f{i}-1}$, where $f_i$ is the number of triples in $\mathcal{C}$ that contains the element $x_i$ (see \ref{secynew}). Add edges between the sets $\{ \down{j}{1}, \down{j}{2}, \down{j}{3}\}$ and $\{w_{i,j}: x_i\in C_j\}$ such that each vertex in $\{ \down{j}{1}, \down{j}{2}, \down{j}{3}\}$ is adjacent to exactly one vertex in $\{w_{i,j}: x_i\in C_j\}$ and vice versa.\label{cd3}
	
	\item Add the subgraphs $\mrt =\ccc{t}{\Qt}$ and $\mrb=\ccc{b}{\Qb}$, where $\Qt=q(\Cabs-1)+\Cabs$ and $\Qb=\Cabs+ \sum_{x_i \in X} \min\{ f_i,2 \}$ (see \ref{secmrt}).\label{cd4}
		Add edges such that each vertex in $A \cup \{\plus{1},\ldots, \plus{|\mathcal{C}|} \}$ is adjacent to exactly one vertex in $\{\mrta{1},\ldots, \mrta{\Qt}\}$ and vice versa. 
	Add edges such that each vertex in $\{ w_i^* : x_i\in X   \}\cup \{ w_i^{**}: x_i\in X, f_i\ge 2 \}\cup \{  \minus{j}: C_j\in \mathcal{C}  \}$ is adjacent to exactly one vertex in $\{\mrba{1},\ldots, \mrba{\Qb}   \}$ and vice versa.

\end{enumerate}
Note that there are many different ways to add the edges in \ref{cd3} and \ref{cd4}. The specific way is not relevant in the proof and, for simplicity, we do not name the edges specifically.\\

For the instance $(\mathcal{C},X)$ with $X = \{x_1,\ldots,x_6\}$ and $\mathcal{C} = \{ \{x_1,x_2,x_3\}, \{ x_1,x_2,x_4 \}, \{ x_1,x_2,x_5 \}, \\ \{ x_4,x_5,x_6 \} \}$ from the last chapter, one possible graph $\Gdthree$ can be seen in Figure \ref{figall}.

\begin{figure}[htb!]
	\centering
	\begin{tikzpicture} [scale=0.35] 	\tikzstyle{point}=[draw,circle,inner sep=0.cm, minimum size=0.75mm, fill=black]
	\tikzstyle{point2}=[draw,circle,inner sep=0.cm, minimum size=0.5mm, fill=black]

	\draw (1,1) -- (20,1);
	\foreach \i in {1,2,3,4,5,6,7,8,9,10}{
		\node[point] (x) at (2*\i,1) [label=above:] {};
		\node[point] (y) at (2*\i-1,1) [label=above:] {};
		\node[point] (t\i) at (2*\i,0) [label=below right:] {};
		\node[point] (w) at (2*\i-1,0) [label=below:] {};
		\draw (x) -- (t\i) -- (w);
	}

\begin{scope}[shift={(-4,-30)}]

	\draw (1,1) -- (28,1);
	\foreach \i in {1,2,3,4,...,14}{
		\node[point] (x) at (2*\i,1) [label=above:] {};
		\node[point] (y) at (2*\i-1,1) [label=above:] {};
		\node[point] (b\i) at (2*\i,2) [label=below right:] {};
		\node[point] (w) at (2*\i-1,2) [label=below:] {};
		\draw (x) -- (b\i) -- (w);
	}

\end{scope}

\begin{scope}[shift={(-2.5,-5)}]

\foreach \j in {0,1,2,3}{
	
	\begin{scope}[shift={(8*\j,-2)}]
 
	\node[point] (w) at (0,0) [label=above:] {};
	\foreach \i in {0}{
		\node[point] (h\j\i) at (0,0) [label=above:] {};
	}
	
	\foreach \i in {1}{
		\node[point] (ww) at (2*\i-1,-1) [label=above:] {};
		\draw (w) -- (ww);
		\node[point] (w) at (2*\i,0) [label=above:] {};
		\node[point] (h\j\i) at (2*\i,0) [label=above:] {};
		\draw (w) -- (ww);
	}
	
	\end{scope}
}

\foreach \j in {0,1}{
	
	\begin{scope}[shift={(20*\j,0)}]
	
	\node[point] (w) at (0,0) [label=above:] {};
	\foreach \i in {0}{
		\node[point] (a\j\i) at (0,0) [label=above:] {};
	}
	\foreach \i in {1,...,3}{
		\node[point] (ww) at (2*\i-1,1) [label=above:] {};
		\node[point] (bb\j\i) at (2*\i-1,1) [label=above:] {};
		\draw (w) -- (ww);
		\node[point] (w) at (2*\i,0) [label=above:] {};
		\node[point] (a\j\i) at (2*\i,0) [label=above:] {};
		\draw (w) -- (ww);
	}
	
	\end{scope}
}

\foreach \i in {0,...,3}{
	\foreach \j in {0,...,1}{
		\draw (h\i\j) -- (a\j\i);
	}
}

\end{scope}

\edef\k{1}
\foreach \i in {0,1}{
	\foreach \j in {1,2,3}{
		\draw (bb\i\j) -- (t\k);
		\pgfmathparse{\k+1}
		\xdef\k{\pgfmathresult}
	}
}

\edef\kb{11}

\foreach \l in {0,1,2,3}{

\begin{scope}[shift={(-14+12*\l,-15)}]

	\draw (1,1) -- (8,1);
\foreach \i in {1,2,3}{
	\node[point] (x) at (2*\i,1) [label=above:] {};
	\node[point] (y) at (2*\i-1,1) [label=above:] {};
	\node[point] (u\l\i) at (2*\i,0) [label=below right:] {};
	\node[point] (w) at (2*\i-1,0) [label=below:] {};
	\draw (x) -- (u\l\i) -- (w);
}
\foreach \i in {4  }{
	\node[point] (x) at (2*\i,1) [label=above:] {};
	\node[point] (y) at (2*\i-1,1) [label=above:] {};
	\node[point] (z) at (2*\i,2) [label=above right:] {};
	\node[point] (w) at (2*\i-1,2) [label=below:] {};
	\draw (x) -- (z) -- (w);
	\foreach \j in {0}{
		\draw (z) -- (h\l\j);
	}
}
\node[point] (uj) at  (9,1) [label=above:] {};
\node[point] (ujplus) at  (10,2) [label=above:] {};
\node[point] (ujminus) at  (10,0) [label=above:] {};
\draw (8,1) -- (uj) -- (ujplus) -- (t\k); 
\pgfmathparse{\k+1}
\xdef\k{\pgfmathresult}

\draw (uj) -- (ujminus) -- (b\kb); 
\pgfmathparse{\kb+1}
\xdef\kb{\pgfmathresult}

\end{scope}

}

\def\myarray{{3,3,1,2,2,1}}
\edef\m{0}

\edef\kbb{1}

\foreach \j in {0,1,2,3,4,5}{
	\pgfmathparse{int(\myarray[\m]-1)}
	\let\result\pgfmathresult
	\pgfmathparse{\m+1}
	\xdef\m{\pgfmathresult}
	
	\begin{scope}[shift={(-10-3+9*\j,-20)}]

	\node[point] (w) at (0,0) [label=above:] {};
	\foreach \i in {0}{
		\node[point] (w\j\i) at (0,0) [label=above:] {};
	}
	\draw (w) -- (b\kbb);
	\pgfmathparse{\kbb+1}
	\xdef\kbb{\pgfmathresult}
	\if\result0\else
		\foreach \i in {1,...,\result}{
			\node[point] (ww) at (2*\i-1,-1) [label=above:] {};
			\draw (w) -- (ww);
			\node[point] (w) at (2*\i,0) [label=above:] {};
				\node[point] (w\j\i) at (2*\i,0) [label=above:] {};
			\draw (w) -- (ww);
		}
		\draw (w) -- (b\kbb);
		\pgfmathparse{\kbb+1}
		\xdef\kbb{\pgfmathresult}
	\fi
\end{scope}
}

 \def\myarr{{0,1,2,0,1,3,0,1,4,3,4,5}}
\def\myarrt{{0,0,0,1,1,0,2,2,0,1,1,0}}
\edef\s{0}

\foreach \i in {0,1,2,3}{
	\foreach \j in {1,2,3}{
		\pgfmathparse{\myarr[\s]}
		\let\x\pgfmathresult
		\pgfmathparse{\myarrt[\s]}
		\let\y\pgfmathresult
		\draw (u\i\j) -- (w\x\y);
		\pgfmathparse{\s+1}
		\xdef\s{\pgfmathresult}
	}
}
	\end{tikzpicture}
	\caption{The graph
 $\Gdthree$ for $X = \{x_1,\ldots,x_6\}$ and $\mathcal{C} = \{ \{x_1,x_2,x_3\}, \{ x_1,x_2,x_4 \}, \{ x_1,x_2,x_5 \},\\ \{ x_4,x_5,x_6 \} \}$}
	\label{figall}
\end{figure}
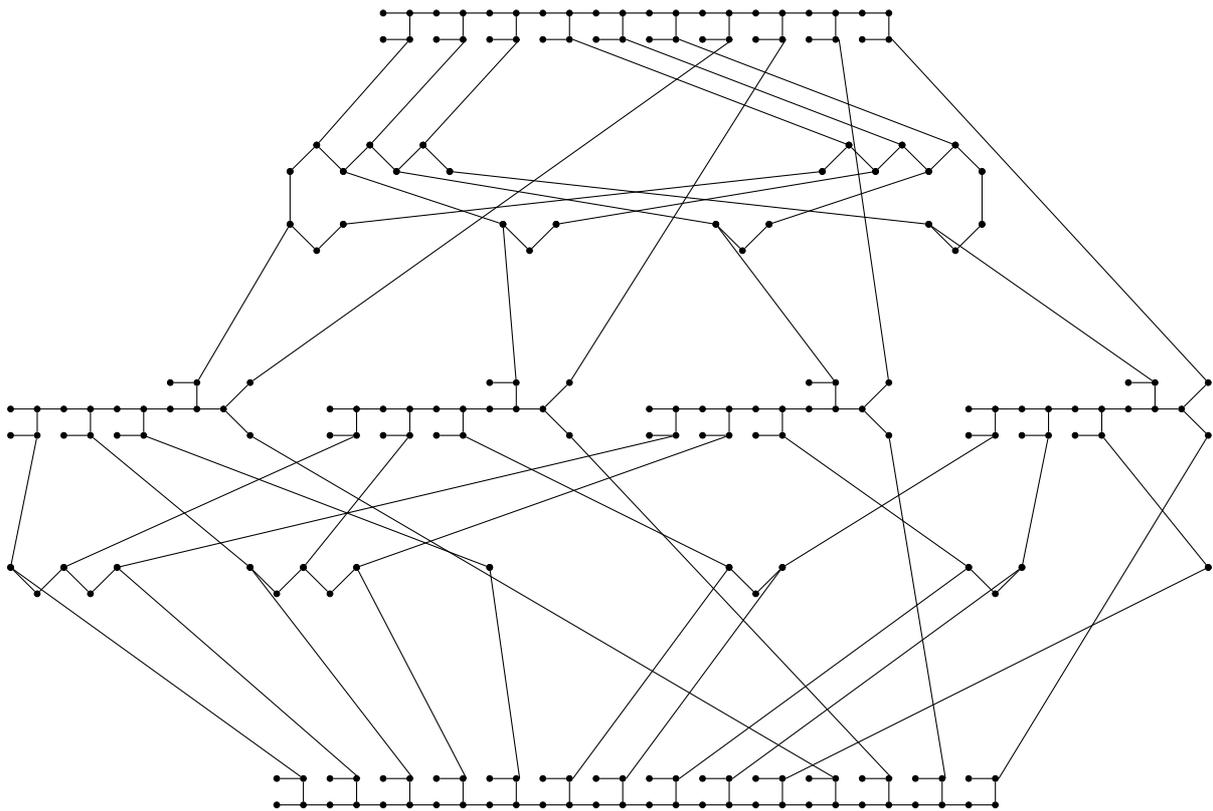

\subsection{Proof of Theorem \ref{theoremnunuddelta3}}

The proof consists of three parts. First, we determine the matching number of $\Gdthree$ and show that $\nu_d(\Gdthree)=\nu(\Gdthree)$ if $(\mathcal{C},X)$ is a \YES-instance. Second, we show that, for a disconnected matching $M$ of size $\nu(\Gdthree)$, $\Gdthree[M]$ has exactly two components. Third, we show that $(\mathcal{C},X)$ is a \YES-instance if $\nu_d(\Gdthree)=\nu(\Gdthree)$ and conclude the proof of Theorem \ref{theoremnunuddelta3}.\\

By construction, $\Gdthree$ has maximum degree three. Also, $\Gdthree$ is indeed bipartite with the bipartition $V_1\dot\cup V_2$ where 
\begin{align*}
V_1=\Bb\cup \Dd \cup \bigcup_{C_j\in \mathcal{C},k\in [5]} \big\{ \ua{j}{k} \big\} \bigcup_{C_j\in \mathcal{C},k\in [4]} \big\{ \down{j}{k}  \big\}\cup \bigcup_{x_i\in X} \Wminus{i} 
\cup \bigcup_{k\in [\Qt]} \left\{ \mrtd{k},\mrta{k}\right\} \cup \bigcup_{k\in [\Qb]}\left\{\mrbd{k},\mrba{k} \right\}
\end{align*}
and $V_2$ contains all the remaining vertices.
	
\begin{lemma}\label{matchingHqC} The graph $\btmn{q}{\Cabs}$ has matching number $\nu(\btmn{q}{\Cabs})=\Cabs(2q-1)$ and for any subset $S$ of $\{\cstar{1},\cstar{2},\ldots, \cstar{\Cabs}\}$ with cardinality $\Cabs-q$, there exists a maximum matching of $\btmn{q}{\Cabs}$ saturating none of the vertices in $S$ but all other vertices.
\end{lemma}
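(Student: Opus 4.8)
The plan is to first pin down the bipartite structure of $\btmn{q}{\Cabs}$, read off the upper bound for $\nu$ from it, and then construct an explicit maximum matching realizing any prescribed unsaturated set $S$.

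First I would observe that the edges of each path $Q_k$ join a vertex of $B$ to a vertex of $A$, the edges of each $R_j$ join a vertex of $C$ to a vertex of $D$, and every cross edge added in step (iii) joins an odd vertex of some $Q_k$ (a vertex of $B$) to an odd vertex of some $R_j$ (a vertex of $C$). Hence $\btmn{q}{\Cabs}$ is bipartite with parts $B\cup D$ and $A\cup C$. Each $Q_k\cong P_{2\Cabs-1}$ contributes $\Cabs$ vertices to $B$ and $\Cabs-1$ to $A$, and each $R_j\cong P_{2q-1}$ contributes $q$ vertices to $C$ and $q-1$ to $D$, so $|B\cup D|=\Cabs(2q-1)$ and $|A\cup C|=q(2\Cabs-1)$. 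Since $\Cabs\ge q$, this gives $|B\cup D|\le |A\cup C|$, and therefore $\nu(\btmn{q}{\Cabs})\le \Cabs(2q-1)$. It remains to exhibit a matching meeting this bound whose unsaturated vertices are exactly $S$.

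The key building block is the elementary fact that a path $P_{2m-1}$ has, for every one of its $m$ odd-position vertices, a maximum matching of size $m-1$ saturating all $m-1$ even-position vertices and leaving exactly that chosen odd vertex unsaturated. Applied to $Q_k$, this lets me saturate all of $A\cap V(Q_k)$ internally while leaving any single prescribed $B$-vertex uncovered; applied to $R_j$, it lets me saturate all of $D\cap V(R_j)$ internally while leaving any single prescribed $C$-vertex uncovered, in particular $\cstar{j}$. I would then glue these local matchings using the cross edges: writing $S=\{\cstar{j}:j\in I\}$ with $|I|=\Cabs-q$, set $J=[\Cabs]\setminus I$ and fix a bijection $\sigma\colon[q]\to J$. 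For each $k$, pick the internal matching of $Q_k$ leaving uncovered the $B$-vertex incident to $R_{\sigma(k)}$, and add the cross edge joining it to the corresponding $C$-vertex of $R_{\sigma(k)}$; for each $j\in J$ pick the internal matching of $R_j$ that leaves precisely that cross-matched $C$-vertex uncovered internally (so the cross edge saturates it), and for each $j\in I$ pick the internal matching of $R_j$ leaving $\cstar{j}$ uncovered. A direct check shows this is a matching saturating all of $A$, $B$, $D$ and all of $C$ except the vertices $\cstar{j}$ with $j\in I$; thus the unsaturated set is exactly $S$, and the size equals $|B\cup D|=\Cabs(2q-1)$, matching the upper bound.

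The main obstacle is the consistency of this gluing step, which is what forces $\sigma$ to be a bijection: the cross edges must be vertex-disjoint, and their $C$-endpoints must coincide with the internally-uncovered $C$-vertices of the respective $R_j$. If two paths $Q_k$ routed their leftover $B$-vertices into the same $R_j$, that $R_j$ would need two internally-uncovered $C$-vertices, which is impossible. Once $\sigma$ is injective, distinctness of the $R_j$'s gives distinct $C$-endpoints and distinctness of the $Q_k$'s gives distinct $B$-endpoints, so no conflicts arise; the remaining freedom of choosing which odd vertex each path leaves uncovered is exactly what makes every target set $S$ attainable.
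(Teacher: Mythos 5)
Your proposal is correct and follows essentially the same route as the paper: bound $\nu$ by the smaller side of the bipartition $(B\cup D)\mathbin{\dot\cup}(A\cup C)$, then realize the bound by selecting $q$ vertex-disjoint cross edges that pair each $Q_k$ with a distinct $R_j$, $\cstar{j}\notin S$ (the paper's ``$q$ many edges'' are exactly your bijection $\sigma$), and complete with perfect matchings on the even-order path pieces left after removing one odd-position vertex from each path. Your explicit statement of the $P_{2m-1}$ building block and the disjointness check for the cross edges is just a more detailed packaging of the paper's deletion argument.
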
	

\begin{proof} $\btmn{q}{\Cabs}$ is bipartite with bipartitions of cardinality $\Cabs(2q-1)$ and $(2\Cabs-1)q$, implying $\nu(\btmn{q}{\Cabs})\le \Cabs(2q-1)$.\\
Let $S\subseteq \{\cstar{1},\cstar{2},\ldots, \cstar{\Cabs}   \}$ of cardinality $\Cabs-q$. We give a matching of size $\Cabs(2q-1)$ not saturating any vertex in $S$ (implying both statements).

Let $J\subseteq [\Cabs ]$ such that $S=\{ \cstar{j} : j\in J \}$. By construction, there is an edge between any path $Q_k, k\in [q]$ and $R_j, j\in \Cabs$. Hence there exist $q$ many edges such that any path $Q_k, k\in [q]$ and any path $R_j, j\in \Cabs \setminus J$ contains exactly one of the 
end-vertices
of those edges. Those edges define a matching $M'$. Now we delete all edges between the sets $B$ and $C$. We delete all $2q$ many vertices, that are saturated by $M'$ and we delete the vertices $\cstar{j}$ for all $j\in J$, since we are not allowed to saturate those vertices. Note that we removed exactly one vertex from every path $Q_k, R_j$. The remaining graph with $2\big[q(|\mathcal{C}|-1)+(q-1) |\mathcal{C}|\big]$ many vertices is the union of odd paths and hence has a perfect matching $M''$. The union of $M'$ and $M''$ yields the desired matching of $\btmn{q}{\Cabs}$ (see example in Figure \ref{H35matching}).
\end{proof}

\begin{figure}[htb!]
	\centering
	\begin{tikzpicture}[scale=0.37]
	
	\tikzstyle{point}=[draw,circle,inner sep=0.cm, minimum size=1.5mm, fill=black]
	\tikzstyle{point2}=[draw,circle,inner sep=0.cm, minimum size=1.5mm, fill=white]
	\tikzstyle{matching}=[line width=0.5mm]
	
	\foreach \j in {0,1,2,3,4}{
		
		\begin{scope}[shift={(7*\j,-2)}]
		
		\node[point] (w) at (0,0) [label=above:] {};
		\foreach \i in {0}{
			\node[point] (h\j\i) at (0,0) [label=above:] {};
		}
		
		\foreach \i in {1,...,2}{
			\node[point] (ww) at (2*\i-1,-1) [label=above:] {};
			\node[point] (d\j\i) at (2*\i-1,-1) [label=above:] {};
			\draw (w) -- (ww);
			\node[point] (w) at (2*\i,0) [label=above:] {};
			\node[point] (h\j\i) at (2*\i,0) [label=above:] {};
			\draw (w) -- (ww);
		}
		
		\end{scope}
	}
	
	\foreach \j in {0,1,2}{
		
		\begin{scope}[shift={(12*\j,0)}]
		
		\node[point] (w) at (0,0) [label=above:] {};
		\foreach \i in {0}{
			\node[point] (a\j\i) at (0,0) [label=above:] {};
		}
		\foreach \i in {1,...,4}{
			\node[point] (ww) at (2*\i-1,1) [label=above:] {};
			\node[point] (b\j\i) at (2*\i-1,1) [label=above:] {};
			\draw (w) -- (ww);
			\node[point] (w) at (2*\i,0) [label=above:] {};
			\node[point] (a\j\i) at (2*\i,0) [label=above:] {};
			\draw (w) -- (ww);
		}
		
		\end{scope}
	}
	
	\foreach \i in {0,...,4}{
		\foreach \j in {0,...,2}{
			\draw (h\i\j) -- (a\j\i);
		}
	}
	
	\foreach \j in {0,2}{
		\pgfmathparse{int(\j+1)}
		\let\jplus\pgfmathresult
		\foreach \i in {0}{
			\node[point2] (a) at (h\j\i) [label={[shift={(-0.3,-0.7)}]:$\cstar{\jplus}$}] {};
		}	
	}
	
	\foreach \j in {1,3,4}{
		\pgfmathparse{int(\j+1)}
		\let\jplus\pgfmathresult
		\foreach \i in {0}{
			\node[point] (a) at (h\j\i) [label={[shift={(-0.3,-0.7)}]:$\cstar{\jplus}$}] {};
		}	
	}

	\foreach \i in {1}{
		\foreach \j in {0}{
			\draw[matching] (h\i\j.center) -- (a\j\i.center);
		}
	}	
	
	\foreach \i in {3}{
		\foreach \j in {1}{
			\draw[matching] (h\i\j.center) -- (a\j\i.center);
		}
	}
	
	\foreach \i in {4}{
		\foreach \j in {2}{
			\draw[matching] (h\i\j.center) -- (a\j\i.center);
		}
	}

	\foreach \i in {0}
	{
		\foreach \j in {2,3,4}{
			\draw[matching] (a\i\j.center)-- (b\i\j.center);		
		}
	}

	\foreach \i in {1}
	{
		\foreach \j in {4}{
			\draw[matching] (a\i\j.center)-- (b\i\j.center);		
		}
	}
	
	\foreach \i in {0}
	{
		\foreach \j in {0}{
			\pgfmathparse{int(\j+1)}
			\let\jplus\pgfmathresult
			\draw[matching] (a\i\j.center)-- (b\i\jplus.center);		
		}
	}
	
	\foreach \i in {1}
	{
		\foreach \j in {0,1,2}{
			\pgfmathparse{int(\j+1)}
			\let\jplus\pgfmathresult
			\draw[matching] (a\i\j.center)-- (b\i\jplus.center);		
		}
	}
	
	\foreach \i in {2}
	{
		\foreach \j in {0,1,2,3}{
			\pgfmathparse{int(\j+1)}
			\let\jplus\pgfmathresult
			\draw[matching] (a\i\j.center)-- (b\i\jplus.center);		
		}
	}

	\foreach \i in {0,1,2}
	{
		\foreach \j in {1,2}{
			\draw[matching] (h\i\j.center)-- (d\i\j.center);		
		}
	}
	
	\foreach \i in {3}
	{
		\foreach \j in {2}{
			\draw[matching] (h\i\j.center)-- (d\i\j.center);		
		}
	}
	
	\foreach \i in {3}
	{
		\foreach \j in {0}{
			\pgfmathparse{int(\j+1)}
			\let\jplus\pgfmathresult
			\draw[matching] (h\i\j.center)-- (d\i\jplus.center);		
		}
	}
	
	\foreach \i in {4}
	{
		\foreach \j in {0,1}{
			\pgfmathparse{int(\j+1)}
			\let\jplus\pgfmathresult
			\draw[matching] (h\i\j.center)-- (d\i\jplus.center);		
		}
	}
	
	\end{tikzpicture}
	\caption{Example for $\btmn{3}{5}$ and $S=\{\cstar{1}, \cstar{3}  \}$}
	\label{H35matching}
\end{figure}

\begin{lemma}\label{lemmaifYESthennuisnud} The graph $\Gdthree$ has matching number $\nu(\Gdthree)=|V_1|$ and if the instance $(\mathcal{C},X)$ of {\sc Exact Cover By 3-Sets} is a YES-instance, then $\Gdthree$ has a disconnected matching of size $\nu(\Gdthree)=|V_1|$.\\
\end{lemma}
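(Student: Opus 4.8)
The plan is to prove the two claims separately: first $\nu(\Gdthree)=|V_1|$ for every instance, then the existence of a disconnected matching of that size when $(\mathcal{C},X)$ is a YES-instance.

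\emph{The matching number.} Since $\Gdthree$ is bipartite with parts $V_1\dot{\cup}V_2$ and every edge has exactly one endpoint in $V_1$, any matching uses distinct vertices of $V_1$, whence $\nu(\Gdthree)\le|V_1|$. It therefore suffices to exhibit one matching saturating all of $V_1$, which I would assemble gadget by gadget. In $\btmn{q}{\Cabs}$ I take a maximum matching of size $\Cabs(2q-1)=|\Bb|+|\Dd|$ from Lemma~\ref{matchingHqC}; being maximum it saturates the whole smaller side $\Bb\cup\Dd$. In each copy of $\ccc{u^j}{4}$ I use the local matching $\{\ua{j}{\ell}\ub{j}{\ell},\ \down{j}{\ell}\downb{j}{\ell}:\ell\in[4]\}$, saturating all of its type-$1$ and type-$3$ vertices, and I saturate the leftover vertex $\ua{j}{5}$ by the edge $\ua{j}{5}\plus{j}$. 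In $\mrt$ and $\mrb$ I use the analogous local matchings $\{\mrtd{\ell}\mrtc{\ell},\ \mrta{\ell}\mrtb{\ell}\}$ (and its $b$-counterpart), again saturating every type-$1$ and type-$3$ vertex. Finally, each path $\Ynew{i}\cong P_{2\f{i}-1}$ has a maximum matching saturating all $\f{i}-1$ vertices of $\Wminus{i}$ and leaving a single vertex of $\Wplus{i}$ free. These pieces are vertex-disjoint and their union saturates precisely the vertices listed in $V_1$, so $\nu(\Gdthree)\ge|V_1|$ and equality follows.

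\emph{The disconnected matching.} Fix an exact cover $\mathcal{C}'\subseteq\mathcal{C}$, so $|\mathcal{C}'|=q$ and each $x_i$ lies in a unique $C_{j(i)}\in\mathcal{C}'$. I build $M$ as above but with exact-cover-driven choices: in $\btmn{q}{\Cabs}$ I apply Lemma~\ref{matchingHqC} with $S=\{\cstar{j}:C_j\notin\mathcal{C}'\}$, which has the required size $\Cabs-q$, so that $\cstar{j}$ is saturated exactly when $C_j\in\mathcal{C}'$; in each $\ccc{u^j}{4}$ I keep the local matching but saturate $\ua{j}{5}$ via $\ua{j}{5}\plus{j}$ when $C_j\in\mathcal{C}'$ and via $\ua{j}{5}\minus{j}$ when $C_j\notin\mathcal{C}'$, leaving the unused one of $\plus{j},\minus{j}$ free; and in each path $\Ynew{i}$ I take the maximum matching that leaves precisely $\w{i}{j(i)}$ unsaturated (in the odd path $P_{2\f{i}-1}$ any prescribed vertex of $\Wplus{i}$ can be chosen as the free one). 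The subgraphs $\mrt,\mrb$ keep their local matchings. A check against the vertex list of $V_1$ shows $M$ still saturates $V_1$, so $|M|=|V_1|=\nu(\Gdthree)$.

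\emph{Disconnectedness.} Mirroring the proof of Lemma~\ref{lem:nudnu-main-lemma}, let $U$ be the set of \emph{saturated} vertices lying in $\mrt$, in $\btmn{q}{\Cabs}$, or in some $\mg{j}$ with $C_j\in\mathcal{C}'$, and verify that $U$ has no saturated neighbour outside $U$. The only edges of $\Gdthree$ leaving these subgraphs are $\mrta{k}a$ with $a\in\Aa$, the edges $\mrta{k}\plus{j}$, the edges $\cstar{j}\down{j}{4}$, the edges $\minus{j}\mrba{k}$, and the edges from $\down{j}{1},\down{j}{2},\down{j}{3}$ to the corresponding $\w{i}{j}$. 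Each $a\in\Aa$ is saturated and lies in $U$; for $C_j\in\mathcal{C}'$ the vertex $\plus{j}$ lies in $U$, while for $C_j\notin\mathcal{C}'$ it is unsaturated. The edge $\cstar{j}\down{j}{4}$ stays inside $U$ when $C_j\in\mathcal{C}'$, and is not incident to $U$ when $C_j\notin\mathcal{C}'$ (then $\cstar{j}$ is unsaturated by the choice of $S$ and $\mg{j}\not\subseteq U$). The edges $\minus{j}\mrba{k}$ leave through the unsaturated vertex $\minus{j}$ whenever $C_j\in\mathcal{C}'$. Finally, for $C_j\in\mathcal{C}'$ each of $\down{j}{1},\down{j}{2},\down{j}{3}$ meets a vertex $\w{i}{j}=\w{i}{j(i)}$, left unsaturated. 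Hence no edge of $\Gdthree[M]$ joins $U$ to its complement; since $U$ contains the saturated vertices of $\mrt$ and the complement contains those of $\mrb$, both are nonempty and $\Gdthree[M]$ has at least two components, so $M$ is disconnected.

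The verification of $V_1$-saturation is mechanical, and the main obstacle is the boundary analysis in the last paragraph: one must organise it so as to see that the exact-cover property is exactly what severs the two halves, being used precisely twice, once on the $\btmn{q}{\Cabs}$ side to free every $\cstar{j}$ with $C_j\notin\mathcal{C}'$, and once on the $\Ynew{i}$ side to free each $\w{i}{j(i)}$, so that no potential top--bottom bridge through a $\mg{j}$ survives.
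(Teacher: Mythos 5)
Your proposal is correct and follows essentially the same route as the paper's proof: the bipartite upper bound $\nu(\Gdthree)\le|V_1|$, a gadget-by-gadget matching saturating $V_1$ (invoking Lemma~\ref{matchingHqC} with $S=\{\cstar{j}:C_j\notin\mathcal{C}'\}$, choosing $\plus{j}$ versus $\minus{j}$ according to membership in $\mathcal{C}'$, and freeing exactly $\w{i}{j(i)}$ in each $\Ynew{i}$), and disconnectedness via the same set $U$ of saturated vertices of $\mrt$, $\btmn{q}{\Cabs}$ and the $\mg{j}$ with $C_j\in\mathcal{C}'$, checking that its boundary consists only of unsaturated vertices. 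Your boundary analysis is, if anything, slightly more explicit than the paper's enumeration of $N_{\Gdthree}[U]\setminus U$.
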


\begin{proof}
	Since $\Gdthree$ has the bipartition $V_1\dot\cup V_2$, it holds that $\nu(\Gdthree)\le |V_1|$.
	Let $\mathcal{C}'\subseteq \mathcal{C}$ with $|\mathcal{C}'|=q$.

	\begin{enumerate}
		\item By Lemma \ref{matchingHqC}, $\btmn{q}{|\mathcal{C}|}$ has a matching of cardinality $\Cabs(2q-1)$, that does not saturate $\cstar{j}$ for every $C_j\in \mathcal{C} \setminus \mathcal{C}'$ but all other vertices.
		
		\item For every $C_j\in \mathcal{C}$, we choose a perfect matching of $\mg{j}-\minus{j}$ if $C_j\in \mathcal{C}'$ and of $\mg{j}-\plus{j}$         otherwise.
		
		\item For every $x_i\in X$, the subgraph $\Ynew{i}$ isomorphic to $P_{2f_i-1}$ has a matching of size $\f{i}-1$ saturating its smaller bipartiton $\Wminus{i}$.\label{mat3}
		
		\item $\mrt$ and $\mrb$ have perfect matching of size $2\Qt$ and $2\Qb$, respectively.
	\end{enumerate}
		    The union of the matchings of those subgraphs yields a matching of $\Gdthree$ that saturates all vertices in $V_1$, implying $\nu(\Gdthree)\ge|V_1|$ and hence $\nu(\Gdthree)=|V_1|$. \\
	
	Assume $(\mathcal{C},X)$ is a YES-instance and let $\mathcal{C}'\subseteq \mathcal{C}$ with $|\mathcal{C}'|=q$ be an exact cover of $X$. We modify the matching defined above by replacing \ref{mat3} with \ref{mat3prime}
        and call this matching $M$.
	
	\begin{enumerate}[label=3'.,ref=3']
		\item\label{mat3prime}
                        Every $x_i \in X$ is contained in exactly one triple $C_j \in \mathcal{C}'$. The subgraph $\Ynew{i} - \w{i}{j} $ 
		consists of one or two paths of odd length
        		and hence has a perfect matching of size $\f{i}-1$.
  	\end{enumerate}
	
The described matching $M$ saturated all vertices of $\Gdthree$ except
	\begin{itemize}
		\item $\cstar{j}$ and $\plus{j}$ for all $C_j\in \mathcal{C}\setminus \mathcal{C}'$,
		\item $\minus{j}$ for all $C_j\in \mathcal{C}'$ and
		\item $\w{i}{j}$ for all $C_j\in \mathcal{C}'$ and all $x_i\in C_j$.
	\end{itemize}

	The matching $M$ saturates all vertices in $V_1$, implying it is a maximum matching.
	To show that $M$ is disconnected,
		let $U$ be the set of all saturated vertices of $\mrt$, $\btmn{q}{\Cabs}$ and $\mg{j}$ for all $C_j\in \mathcal{C}'$, hence
	\begin{align*}
	U = V(\mrt) \cup V\left(\btmn{q}{\Cabs}\right) \setminus  \left\{ \cstar{j}: C_j \in \mathcal{C}\setminus \mathcal{C}' \right\}  \cup \bigcup_{C_j \in \mathcal{C}'} V\left(\mg{j}\right) \setminus \{\minus{j}\}.
	\end{align*}
	
	Clearly $\Gdthree[U]\not=\Gdthree[M]$.
     $N_{\Gdthree}[U]\setminus U$ is the union of the following sets:
	\begin{itemize}
		\item $N_\Gdthree(V(\mrt))\setminus U=\{ \plus{j}: C_j\in \mathcal{C}'  \}$
		\item $N_\Gdthree\Big(V\big(\btmn{q}{\Cabs}\big) \setminus \big\{ \cstar{j}: C_j \in \mathcal{C}\setminus \mathcal{C}' \big\} \Big) \setminus U=\{ \cstar{j}: C_j \in \mathcal{C}\setminus \mathcal{C}' \}$
		\item $N_\Gdthree\left (\bigcup_{C_j \in \mathcal{C}'} V\big(\mg{j}\big)\setminus \big\{ \minus{j} \big\} \right)\setminus U=\left \{  \w{i}{j}: x_i\in C_j\in \mathcal{C}' \right \} \cup \bigcup_{C_j \in \mathcal{C}'} \minus{j}$
	\end{itemize}
	None of those vertices is saturated, implying the matching is disconnected. This concludes the proof of Lemma \ref{lemmaifYESthennuisnud}.
\end{proof}

Let $M$ be a disconnected matching of $\Gdthree$ of size $\nu(F)=|V_1|$. The vertices $\mrtd{1}\in V(\mrt)$ and $\mrbd{1}\in V(\mrb)$ are in $V_1$ and hence saturated by $M$. 

\begin{definition}
Let $T$ contain all vertices that are in the same component of $\Gdthree[M]$ as $\mrtd{1}$.
Let $B$ contain all vertices that are in the same component of $\Gdthree[M]$ as $\mrbd{1}$.
\end{definition}	 

Since $M$ is disconnected, the following lemma implies that $\Gdthree[M]$ has exactly the components $\Gdthree[T]$ and $\Gdthree[B]$.

\begin{lemma}\label{satinTorB}$ $\\
\vspace{-5mm}
\begin{enumerate}
	\item All saturated vertices of $\mrt$ are in $T$.
	\item All saturated vertices of $\mrb$ are in $B$.
	\item All saturated vertices of $\btmn{q}{|\mathcal{C}|}$ are in $T$.
	\item All saturated vertices of $\mg{j}$ are in $T$, if $\plus{j}$ is saturated, else in $B$ for every $C_j\in \mathcal{C}$.
        						\item All saturated vertices of $\Ynew{i}$ are in $B$ for every $x_i\in X$.
\end{enumerate}
\end{lemma}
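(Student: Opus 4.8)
The statement of Lemma~\ref{satinTorB} claims that the saturated vertices of the five ``gadget'' subgraphs ($\mrt$, $\mrb$, $\btmn{q}{\Cabs}$, each $\mg{j}$, each $\Ynew{i}$) are confined entirely to one of the two components $T$ or $B$. My plan is to prove the five parts in an order that lets later parts use earlier ones, and to rely throughout on the single structural observation that $M$ is \emph{disconnected}: no saturated vertex of $T$ can have a saturated neighbor in $B$, and vice versa. So the entire argument is really about tracing, for each gadget, a path of saturated vertices back to either $\mrtd{1}$ or $\mrbd{1}$, while using disconnectivity to rule out the forbidden cross-component edges.

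\textbf{Key steps in order.}
First I would prove parts 1 and 2, which are essentially the base cases. Since $M$ saturates all of $V_1$ (by Lemma~\ref{lemmaifYESthennuisnud}, $|M|=|V_1|$), every vertex $\mrtd{k},\mrta{k}$ lies in $V_1$ and is saturated; the internal structure of $\mrt=\ccc{t}{\Qt}$ (a long path $\va{1}\vb{1}\cdots$ with pendant paths $\vb{\ell}\vc{\ell}\vd{\ell}$) is connected through saturated vertices, so every saturated vertex of $\mrt$ shares the component of $\mrtd{1}$, namely $T$; symmetrically for $\mrb$ and $B$. The one subtlety is that I must check the gadget really is internally connected \emph{via saturated vertices}, which follows because the $\Qt$ many ``top'' vertices $\mrta{k}$ all lie on the backbone and are all saturated. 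Next, for part 3, I would use that each $\mrta{k}$ is adjacent to exactly one vertex of $A\cup\{\plus{j}\}$, and in particular every vertex of $A$ is matched into $\mrt$ and hence, by part 1, lies in $T$; since $\btmn{q}{\Cabs}$ is internally connected through its saturated $A,B,C,D$ vertices, all its saturated vertices fall in $T$. For part 4, the vertex $\down{j}{4}$ of $\mg{j}$ is joined to $\cstar{j}\in V(\btmn{q}{\Cabs})$, tying $\mg{j}$ to $T$ \emph{when that link is realized by saturated vertices}; the case split on whether $\plus{j}$ (adjacent to $\mrt$, hence $T$) or $\minus{j}$ (adjacent to $\mrb$, hence $B$) is the saturated endpoint is what forces $\mg{j}$ wholly into $T$ or wholly into $B$. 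Finally part 5: each $\Ynew{i}$ has its end-vertices $w_i^*,w_i^{**}$ matched into $\mrb$, so by part 2 they lie in $B$, and the path $\Ynew{i}$ is internally connected through saturated vertices, placing all of $\Ynew{i}$ in $B$.

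\textbf{The main obstacle.}
The crux is \emph{not} the connectivity bookkeeping but verifying, at each junction, that the gadget is glued to the correct component \emph{through saturated vertices only}, together with a disconnectivity argument ruling out the contradictory gluing. The genuinely delicate case is part 4: a single $\mg{j}$ touches $\btmn{q}{\Cabs}$ (via $\down{j}{4}$--$\cstar{j}$), touches $\mrt$ (via $\plus{j}$), touches $\mrb$ (via $\minus{j}$), and touches the $\Ynew{i}$'s (via $\down{j}{1},\down{j}{2},\down{j}{3}$). Since parts 1--3 put $\btmn{q}{\Cabs}$ and $\mrt$ in $T$ while $\mrb$ and (anticipating part 5) the $\Ynew{i}$'s go to $B$, the gadget $\mg{j}$ is simultaneously adjacent to both components, so I must show that $M$ cannot saturate the $\mg{j}$-endpoints of both a $T$-link and a $B$-link, as that would create a saturated $T$--$B$ path and contradict disconnectivity. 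The argument is that the internal matching of $\mg{j}$ (a perfect matching of either $\mg{j}-\plus{j}$ or $\mg{j}-\minus{j}$) leaves exactly one of $\{\plus{j},\minus{j}\}$ matched outward, and the parity of the $\ccc{u^j}{4}$ backbone prevents the $\down{j}{k}$ pendants from simultaneously carrying saturated edges to both $\btmn{q}{\Cabs}$/$T$ and the $\Ynew{i}$/$B$ side; pinning down this parity/counting step carefully is where the real work lies, and I expect it to consume the bulk of the formal proof.
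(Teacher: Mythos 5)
Your overall strategy --- propagating membership in $T$ or $B$ along saturated vertices, starting from $\mrtd{1}$ and $\mrbd{1}$ --- is the same as the paper's, but several of your supporting claims are false, and they sit exactly where the paper has to do real work. First, in $\mrt$ the vertices $\mrta{k}$ do \emph{not} lie on the backbone; the backbone is $\mrtd{1}\mrtc{1}\mrtd{2}\mrtc{2}\cdots$, and the vertices $\mrtc{k}$ lie in $V_2$, so their saturation is not automatic. What is needed (and what the paper proves) is a forced-edge propagation: $\mrtd{1}$ has $\mrtc{1}$ as its only neighbor, so $\mrtd{1}\mrtc{1}\in M$, and inductively $\mrtd{k}\mrtc{k}\in M$ for all $k$; only then is the backbone a connected saturated set to which the $\mrta{k}$ (and, when saturated, the $\mrtb{k}$) attach. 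Second, your claim that ``every vertex of $A$ is matched into $\mrt$'' is false: $\Aa\subseteq V_2$, so vertices of $\Aa$ need not be saturated at all, and saturated ones may be matched inside $\btmn{q}{\Cabs}$. Consequently your assertion that $\btmn{q}{\Cabs}$ is ``internally connected through its saturated vertices'' is precisely the unproven crux of part 3: a vertex $\bx\in\Bb$ whose matching edge goes to its $\Cc$-neighbor $\cx$ has no a priori saturated route to $T$, since all of its other neighbors lie in $V_2$. The paper closes this with a chain argument along the path $R_j$: the neighbor $\dx\in\Dd$ of $\cx$ is forced to be matched to its other $\Cc$-neighbor $\cx'$, the $\Bb$-neighbor $\bx'$ of $\cx'$ is then forced to be matched into $\Aa$, and that $\Aa$-vertex hooks into $\mrt$, pulling $\bx',\cx',\dx,\cx,\bx$ into $T$. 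Nothing of this kind appears in your proposal.

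Third, for part 5 your claim is not merely unproven but wrong: the end-vertices $w_i^*,w_i^{**}\in \Wplus{i}\subseteq V_2$ need not be matched into $\mrb$ (nor even be saturated), and $\Ynew{i}$ is \emph{not} always internally connected through saturated vertices --- when an interior vertex of $\Wplus{i}$ is unsaturated, $\Gdthree[V(\Ynew{i})\cap V(M)]$ splits into two components. The paper handles this by counting ($\Wminus{i}\subseteq V_1$ and $|\Wplus{i}|=|\Wminus{i}|+1$ imply at most one vertex of $\Wplus{i}$ is unsaturated) and by using that \emph{both} end-vertices are adjacent to vertices in $\{\mrba{k}: k\in[\Qb]\}\subseteq B$, so each of the two pieces lands in $B$ separately. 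Finally, you misplace the main difficulty: part 4 needs no parity or counting argument about $\ccc{u^j}{4}$. Once the backbone edges $\ua{j}{k}\ub{j}{k}$ are forced into $M$, the saturated vertices of $\mg{j}$ induce a connected subgraph, and $\ua{j}{5}$ must be matched to $\plus{j}$ or $\minus{j}$, which decides $T$ versus $B$; the scenario you worry about (saturated links to both sides simultaneously) is not excluded inside the lemma at all --- it would merely force $T=B$, which is ruled out only afterwards, when the full lemma is combined with the disconnectedness of $M$. The genuinely delicate step is part 3, not part 4.
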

		
\begin{proof}$ $\\
\vspace{-5mm}
\begin{enumerate}	
\item\label{mrtinT} The vertices $\mrtd{k},\mrta{k}$ are in $V_1$ for every $k\in [\Qt]$ and hence saturated by $M$. Since $\mrtd{1}$ has only the neighbor $\mrtc{1}$, $M$ contains the edge $\mrtd{1}\mrtc{1}$. The only edge left to saturate the vertex $\mrtd{2}$ is the edge $\mrtd{2}\mrtc{2}$ and therefore this edge is in $M$. Iteratively, all edges $\mrtd{k}\mrtc{k}$ for $k\in [\Qt]$ are in $M$ (see Figure \ref{mrtmatching}). Hence $\mrtd{1}\in T$ implies $\mrtd{k},\mrtc{k} \in T$ for every $k\in [\Qt]$.
	Since $\mrta{k}$ is saturated by $M$, $\mrtc{k}\in T$ implies $\mrta{k}\in T$. If $\mrtb{k}$ is saturated by $M$, then $M$ contains the edge $\mrtb{k}\mrta{k}$ and $\mrta{k}\in T$ implies $\mrtb{k}\in T$.
	Therefore every saturated vertex of $\mrt$ is in $T$.

	\begin{figure}[htb!]
	\centering
	\begin{tikzpicture}[xscale=0.8, yscale=0.8]
	
	\newcommand{\vx}[1]{ \mrtc{#1}}
	\newcommand{\vy}[1]{ \mrtd{#1}}
	\newcommand{\vstar}[1]{ \mrta{#1}}
	\newcommand{\vz}[1]{ \mrtb{#1}}
	
	\tikzstyle{point}=[draw,circle,inner sep=0.cm, minimum size=1.5mm, fill=black]
	\tikzstyle{pointb}=[draw,red,circle,inner sep=0.cm, minimum size=1.5mm, fill=red]
	\tikzstyle{point2}=[draw,circle,inner sep=0.cm, minimum size=0.5mm, fill=black]
\tikzstyle{point3}=[draw,circle,inner sep=0.cm, minimum size=1.5mm, fill=white]
 
	\draw (1,1) -- (4.5,1);
	\foreach \i in {1,2}{
		\node[point3] (x) at (2*\i,1) [label=above:$\vx{\i}$] {};
		\node[point] (y) at (2*\i-1,1) [label=above:$\vy{\i}$] {};
		\node[point] (z) at (2*\i,0) [label=below right:$\vstar{\i}$] {};
		\node[point3] (w) at (2*\i-1,0) [label=above:$\vz{\i}$] {};
		\draw (x) -- (z) -- (w);
		\draw[dashed] (z) -- (2*\i,-1.3);
	}
	
	\foreach \i in {5,5.5,6 }{
		\node[point2] at (\i,1) [label=right:] {};
	}
	
	\draw (6.5,1) -- (8,1);
	\foreach \i in {4  }{
		\node[point3] (x) at (2*\i,1) [label=above:$\vx{\Qt}$] {};
		\node[point] (y) at (2*\i-1,1) [label=above:$\vy{\Qt}$] {};
		\node[point] (z) at (2*\i,0) [label=below right:$\vstar{\Qt}$] {};
		\node[point3] (w) at (2*\i-1,0) [label=above:$\vz{\Qt}$] {};
		\draw (x) -- (z) -- (w);
		\draw[dashed] (z) -- (2*\i,-1.3);
	}
	
	\draw [-stealth, very thick](5.5,-1.6) -- (5.5,-2.6);
	
	\begin{scope}[shift={(0,-4.5)}]
	
	\draw (1,1) -- (4.5,1);
	\foreach \i in {1,2}{
		\node[point] (x) at (2*\i,1) [label=above:$\vx{\i}$] {};
		\node[point] (y) at (2*\i-1,1) [label=above:$\vy{\i}$] {};
		\node[point] (z) at (2*\i,0) [label=below right:$\vstar{\i}$] {};
		\node[point3] (w) at (2*\i-1,0) [label=above:$\vz{\i}$] {};
		\draw (x) -- (z) -- (w);
		\draw[line width=0.5mm] (x) -- (y);
		\draw[dashed] (z) -- (2*\i,-1.3);
	}
	
	\foreach \i in {5,5.5,6 }{
		\node[point2] at (\i,1) [label=right:] {};
	}
	
	\draw (6.5,1) -- (8,1);
	\foreach \i in {4  }{
		\node[point] (x) at (2*\i,1) [label=above:$\vx{\Qt}$] {};
		\node[point] (y) at (2*\i-1,1) [label=above:$\vy{\Qt}$] {};
		\node[point] (z) at (2*\i,0) [label=below right:$\vstar{\Qt}$] {};
		\node[point3] (w) at (2*\i-1,0) [label=above:$\vz{\Qt}$] {};
		\draw[line width=0.5mm] (x) -- (y);
		\draw (x) -- (z) -- (w);
		\draw[dashed] (z) -- (2*\i,-1.3);
	}
	\end{scope}

	\end{tikzpicture}
	\begin{tikzpicture}[scale=1]	
	\tikzstyle{point}=[draw,circle,inner sep=0.cm, minimum size=1.5mm, fill=black]
	
	\node (bb2) at (-1,3) [label=above:] {};
	\node[point] (bb3) at (1,3) [label=above:$a$] {};
	
	\draw[dashed](bb3) -- (2,2);
	
	\node[point] (ba3) at (0,2) [label=below left:$b$] {};

	\begin{scope}[shift={(0,0)}]
	
	\node[point] (h2) at (2,0) [label=above right:$c $] {};
	\node[point] (h3) at (4,0) [label=above left: $c' $] {};
	
	\node (bd1) at (1,-1) [] {};
	\draw[dashed](bd1) -- (h2);
	
	\draw[very thick] (h2.center) -- (ba3.center);
	
	\node[point] (bd2) at (3,-1) [label=below:$d$] {};
	
	\draw[dashed] (h3) -- (5,-1);
	
	\draw (ba3) -- (bb3) ;
	
	\draw[dashed] (bb2) -- (ba3);
	
	\draw (h2) -- (bd2);
	
	\draw[very thick] (bd2.center) -- (h3.center);
	
	\end{scope}

	\begin{scope}[shift={(6,0)}]
	
	\node[point] (bb2) at (-1,3) [label=above:$a' $] {};
	
	\node (ba2) at (-2,2) [label=above:] {};
	
	\draw[dashed](bb2) -- (ba2);
	\node[point] (ba3) at (0,2) [label=below right:$b'$] {};
	
	\draw[very thick] (ba3.center) -- (bb2.center);
	
	\draw[dashed](ba3) -- (1,3);	
	
	\end{scope}
	
	\draw (h3) -- (ba3);

	\end{tikzpicture}
	\caption{Illustration of the proof that all saturated vertices of $\mrt$ belong to $T$ (left) and that $\bx$ is in $T$ for every $\bx \in \Bb$ (right).}
	\label{mrtmatching}
\end{figure}

	\item Follows analogously to \ref{mrtinT} since $\mrbd{k},\mrba{k} \in V_1$ for every $k\in [\Qt]$.
	
	\item Since every vertex in $\Aa$ has a neighbor in the set $\{ \mrta{1},\ldots, \mrta{\Qt}  \}\subseteq T$, every saturated vertex of $\Aa$ is in $T$.\\
	Every vertex $\bx \in \Bb\subseteq V_1$ is saturated by an edge $e\in M$. The vertex $\bx$ has one neighbor in $\Cc$ and one or two neighbors in $\Aa$. If $e=\ax\bx$ for a vertex $\ax\in \Aa$, then $\ax \in T$ implies $\bx\in T$. Now let $e=\bx \cx$ for a vertex $\cx\in \Cc$ (see Figure \ref{mrtmatching}). (Note since $\Aa\subseteq V_2$, it is not clear whether $\bx$ has a saturated neighbor in $\Aa$ or not.)
	The vertex $\cx$ has a neighbor $\dx \in \Dd$. Let $\cx'\in \Cc$ be the other neighbor of the vertex $\dx$. 
	Since the vertex $\dx\in V_1$ must be saturated by $M$, the edge $\dx\cx'$ is in $M$. Let $\bx'$ be the neighbor of $\cx'$ in $\Bb$ ($\bx\not=\bx'$). The vertex $\bx'\in V_1$ must be saturated by $M$. $\bx'$ has the neighbor $\cx'$ and one or two neighbors in $\Aa$, implying $\ax'\bx'\in M$ for a vertex $\ax'\in \Aa$. Then $\ax'\in T$ implies $\bx',\cx',\dx,\cx,\bx\in T$. Hence $\Bb \subseteq T$.\\
 		Since any vertex in $\Cc$ has a neighbor in $\Bb\subseteq T$, any saturated vertex in $\Cc$ is in $T$. Since $\Dd\subseteq V_1$, any vertex $\dx\in \Dd$ is saturated by an edge $\dx\cx\in M$ for a vertex $\cx\in \Cc$. Then $\cx\in T$ implies $\dx \in T$.

\item Consider $\mg{j}$ for $C_j\in \mathcal{C}$ (see Figure \ref{transformationuj}). Since $\ua{j}{k},\down{j}{k}\in V_1$ for every $k\in [4]$, it follows analogously to part \ref{mrtinT}. that for every $k\in [4]$, the edge $\ua{j}{k}\ub{j}{k}$ is in $M$, $\down{j}{k}$ is saturated by $M$ and if $\downb{j}{k}$ is saturated by $M$, then $\downb{j}{k}$ is in the same component of $\Gdthree[M]$ as $\ub{j}{k}$.
To saturate $\ua{j}{5}\in V_1$, $M$ contains either the edge $\plus{j}\ua{j}{5}$ or $\minus{j}\ua{j}{5}$, implying $\plus{j}$ or $\minus{j}$ is saturated by $M$. Note that the saturated vertices of $\mg{j}$ induce a connected subgraph in $\Gdthree[M]$ and therefore are in the same component of $\Gdthree[M]$. If $\plus{j}$ is saturated by $M$, then all saturated vertices of $\mg{j}$ are in $T$. If $\minus{j}$ is saturated by $M$, then all saturated vertices of $\mg{j}$ are in $B$. 

\item Consider $\Ynew{i}$ for $x_i\in X$. First let $\f{i}\ge 2$. 
The end-vertices $w_i^*$ and $w_i^{**}$ of the path $\Ynew{i}$ are each adjacent to a vertex in $\{\mrba{k}:k\in [\Qb]\}\subseteq B$.
Since $\Wminus{i}\subseteq V_1$, all vertices of $\Wminus{i}$ are saturated by $M$. Since $N(\Wminus{i})\subseteq \Wplus{i}$ and $|\Wplus{i}|=|\Wminus{i}|+1$, at most one vertex of $\Wplus{i}$ is not saturated by $M$, yielding the following cases:

\begin{itemize}
	\item If all vertices of $\Wplus{i}$ are saturated by $M$, then $\Ynew{i}$ is a subgraph of $\Gdthree[M]$. $w_i^*$ and $w_i^{**}$ are saturated, implying $V(\Ynew{i}) \subseteq B$.	

	\item If an end-vertex of $\Ynew{i}$ is not saturated, w.l.o.g. $w_i^*$, then all vertices in $\Wplus{i}\setminus \{w_i^*\}$ are saturated by $M$ and $Y_i-w_i^*$ is a connected subgraph of $\Gdthree[M]$. $w_i^{**}$ is saturated, implying $V(\Ynew{i})\setminus \{w_i^*\}\subseteq B$.
	
    \item If a vertex $w_i'\in \Wplus{i}\setminus \{ w_i^*, w_i^{**}  \}$ is not saturated by $M$, $\Gdthree[V(\Ynew{i}) \cap V(M)]$ breaks into two components. Since one component contains  $w_i^*$ and the other component contains $w_i^{**}$, we get $V(\Ynew{i}) \setminus \{w_i'\} \subseteq B$ (see Figure \ref{yic}).
\end{itemize}
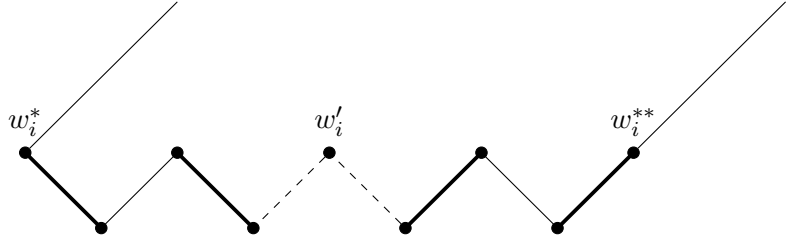
\begin{figure}[htb!]
	\centering
	\begin{tikzpicture}	
	\tikzstyle{point}=[draw,circle,inner sep=0.cm, minimum size=1.5mm, fill=black]

	\node[point] (a1) at(-1,1) [label=above:$w_i^{*}$] {};
	\node[point] (a2) at (1,1) [label=above:] {};
	\node[point] (a3) at (3,1) [label=above:$w_i'$] {};
	\node[point] (a4) at (5,1) [label=above:] {};
	\node[point] (a5) at (7,1) [label=above:$w_i^{**}$] {};
	
	\node[point] (b1) at (0,0) [label=above:] {};
	\node[point] (b2) at (2,0) [label=above:] {};
	\node[point] (b3) at (4,0) [label=above:] {};
	\node[point] (b4) at (6,0) [label=above:] {};
	
	\draw[line width=0.5mm] (a1.center) -- (b1.center);
 \draw (b1.center) -- (a2.center);
\draw[line width=0.5mm] (a2.center) --(b2.center);
	
	\draw (a5) -- (9,3);
	\draw (a1) -- (1,3);
	
	\draw[dashed] (b3) --  (a3)-- (b2);
	
	\draw[line width=0.5mm] (a5.center) -- (b4.center);
 \draw (b4.center) -- (a4.center);
 \draw[line width=0.5mm] (a4.center) --(b3.center);
	
	\end{tikzpicture}
	\caption{Illustration of $F[V(\Ynew{i}) \cap V(M)]$ when containing two components. }
	\label{yic}
\end{figure}
Now let $\f{i}=1$. The unique vertex $w_i^*$ of $\Ynew{i}$ is adjacent to a vertex in $\{\mrba{k}:k\in [\Qb]\}\subseteq B$. If $w_i^*$ is saturated by $M$, then $w_i^*\in B$.
\end{enumerate}
\end{proof}

\begin{lemma}\label{lemmaifnuisnudthenYES} If $\Gdthree$ has a disconnected matching $M$ of size $\nu(\Gdthree)=|V_1|$, then the instance $(\mathcal{C},X)$ of {\sc Exact Cover By 3-Sets} is a YES-instance.
\end{lemma}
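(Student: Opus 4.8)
The plan is to start from a disconnected matching $M$ of size $\nu(\Gdthree)=|V_1|$, which necessarily saturates every vertex of $V_1$, and to read a candidate cover directly off the component structure given by Lemma~\ref{satinTorB}. Since $M$ is disconnected, that lemma guarantees $\Gdthree[M]$ has exactly the two components $\Gdthree[T]$ and $\Gdthree[B]$. I would set $\mathcal{C}'=\{C_j : \plus{j}\in V(M)\}$; by Lemma~\ref{satinTorB}(4) this is precisely the set of $C_j$ whose saturated part of $\mg{j}$ lies in $T$ (while $\mg{j}\subseteq B$ for $C_j\notin\mathcal{C}'$). The single principle driving the whole argument is that $\Gdthree[M]$ is induced on the saturated vertices, so any edge of $\Gdthree$ with both endpoints saturated lies in $\Gdthree[M]$ and therefore cannot join $T$ to $B$. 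I would then show that $\mathcal{C}'$ is an exact cover by squeezing $|\mathcal{C}'|$ between $q$ and $q$.

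For the upper bound $|\mathcal{C}'|\le q$, fix $C_j\in\mathcal{C}'$. By Lemma~\ref{satinTorB}(4) the vertices $\down{j}{1},\down{j}{2},\down{j}{3}\in V_1$ are saturated and lie in $T$, and by construction (item~\ref{cd3}) they are joined, via a bijection, to the three vertices $\w{i}{j}$ with $x_i\in C_j$. If such a $\w{i}{j}$ were saturated it would lie in $B$ by Lemma~\ref{satinTorB}(5), so the edge between $\w{i}{j}$ and its neighbour in $\{\down{j}{1},\down{j}{2},\down{j}{3}\}$ would join $T$ and $B$ -- impossible; hence all three $\w{i}{j}$ are unsaturated. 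Because $\Wminus{i}\subseteq V_1$ is saturated and has no edges leaving $\Ynew{i}$, it is matched inside $\Ynew{i}$, so at most one vertex of $\Wplus{i}$ is unsaturated, and no $\Wplus{i}$ can contain two unsaturated vertices $\w{i}{j}$. This forces the triples in $\mathcal{C}'$ to be pairwise disjoint as subsets of $X$, whence $3|\mathcal{C}'|\le|X|=3q$.

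For the lower bound $|\mathcal{C}'|\ge q$, note by Lemma~\ref{matchingHqC} that the side $\Bb\cup\Dd$ of $\btmn{q}{\Cabs}$ has size $\Cabs(2q-1)=\nu(\btmn{q}{\Cabs})$, is fully saturated, and has no edges leaving $\btmn{q}{\Cabs}$, so it is matched internally; this covers $\Cabs(2q-1)$ of the $q(2\Cabs-1)$ vertices of the opposite side, leaving exactly $\Cabs-q$ of them unmatched within $\btmn{q}{\Cabs}$. Since there are $\Cabs$ vertices $\cstar{j}$, at least $q$ of them are matched within $\btmn{q}{\Cabs}$ and hence saturated. For each saturated $\cstar{j}$: it lies in $T$ by Lemma~\ref{satinTorB}(3), and since $\down{j}{4}\in V_1$ is saturated and adjacent to $\cstar{j}$, the edge $\cstar{j}\down{j}{4}$ lies in $\Gdthree[M]$ and forces $\down{j}{4}\in T$; by Lemma~\ref{satinTorB}(4) this can only happen if $\plus{j}$ is saturated, i.e.\ $C_j\in\mathcal{C}'$. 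Hence $|\mathcal{C}'|\ge q$.

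Combining the bounds gives $|\mathcal{C}'|=q$, and since the $q$ pairwise disjoint triples in $\mathcal{C}'$ cover all $3q$ elements of $X$, the set $\mathcal{C}'$ is an exact cover, so $(\mathcal{C},X)$ is a \YES-instance. Together with Lemma~\ref{lemmaifYESthennuisnud} this yields the equivalence and completes the proof of Theorem~\ref{theoremnunuddelta3}. I expect the delicate step to be the lower bound: one must argue that the forced deficiency $\Cabs-q$ inside $\btmn{q}{\Cabs}$ can hide at most $\Cabs-q$ of the $\cstar{j}$, and then translate ``$\cstar{j}$ saturated'' into ``$C_j\in\mathcal{C}'$'' through the $T$/$B$ separation at the edge $\cstar{j}\down{j}{4}$ -- this is where Lemma~\ref{satinTorB}(3) and~(4) have to be combined carefully, and where an off-by-one in the deficiency count would collapse the bound.
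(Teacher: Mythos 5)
Your proof is correct and follows essentially the same approach as the paper's: the same bipartite deficiency count inside $H(q,|\mathcal{C}|)$ showing that at least $q$ of the vertices $c^*_j$ are saturated, the same use of Lemma~\ref{satinTorB} together with the saturated edge $c^*_j u^j_{4,3}$ to force $u^+_j$ to be saturated, and the same ``at most one unsaturated vertex per $W_i^+$'' argument. The only difference is organizational: you index the candidate cover by the saturated $u^+_j$ and prove its triples pairwise disjoint, whereas the paper indexes it by the set $J^+$ of saturated $c^*_j$ and instead counts the $3|J^+|$ distinct unsaturated neighbors of $T$ against the bound of $3q$; the two bookkeepings are equivalent.
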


\begin{proof}
Let $M$ be a disconnected matching in $\Gdthree$ of size $\nu(\Gdthree)=|V_1|$ and let $T$ and $B$ be defined as above. Therefore $\Gdthree[M]$ has exactly the components $\Gdthree[T]$ and $\Gdthree[B]$.\\

All vertices in $\Bb$ and $\Dd$ are saturated by $M$ because $ \Bb \cup \Dd \subseteq V_1$. Since $N_{\Gdthree}( \Bb\cup\Dd )=\Aa \cup \Cc$, at least $| \Bb \cup \Dd|= \Cabs (q-1)+ \Cabs q$ many vertices of $\Aa \cup\Cc$ are saturated by $M$. Since $|\Aa \cup \Cc|=\Cabs q +(\Cabs-1)q$, at most $\Cabs-q$ many vertices of $\Aa \cup \Cc$ are not saturated by $M$.
Hence at least $q$ many vertices of $\{\cstar{1},\cstar{2},\ldots, \cstar{|\mathcal{C}|}\}$ are saturated by $M$ and those saturated vertices are in $T$ by Lemma \ref{satinTorB}. Let $J^+\subseteq [ |\mathcal{C}|  ]$ be the set of indices for which $\cstar{j}$ is saturated by $M$. \\

Let $j\in J^+$. Since $\down{j}{4} \in V_1$ and hence saturated by any maximum matching, the edge $\cstar{j}\down{j}{4}$ together with $\cstar{j}\in T$ implies $\down{j}{4} \in T$ and by Lemma \ref{satinTorB} all saturated vertices of $\mg{j}$ are in $T$. Particularly $\down{j}{1},\down{j}{2},\down{j}{3}\in V_1$ are saturated by $M$ and hence $\{\down{j}{1},\down{j}{2},\down{j}{3}\} \subseteq T$ for all $j\in J^+$. Since $T\cap \bigcup_{x_i\in X} \Wplus{i}=\emptyset$, it follows that $T$ has at least $3|J^+|$ many pairwise distinct neighbors in $\bigcup_{x_i\in X} \Wplus{i}$.\\

Since $\Wplus{i}\cap V(M) \subseteq B$ for all $i\in [3q]$ by Lemma \ref{satinTorB} and $|\Wplus{i}\cap V(M)|\ge |\Wplus{i}|-1$ for all $i\in [3q]$ (see proof of Lemma \ref{satinTorB}), there are at most $|X|=3q$ many vertices in $\bigcup_{x_i\in X} \Wplus{i}$ that are not in $B$.\\

Since $N_{\Gdthree}(T) \cap B =\emptyset$, the previous two arguments imply that $T$ and hence $\bigcup_{j\in J^+} \{ \down{j}{1}, \down{j}{2}, \down{j}{3} \} $ have exactly $3q$ many neighbors in $\bigcup_{x_i\in X} \Wplus{i}$ and exactly those $3q$ vertices of $\bigcup_{x_i\in X} \Wplus{i}$ are not saturated by $M$. Consequently, $|J^+|=q$. For any $x_i\in X$ at most one vertex of $\Wplus{i}$ is not saturated by $M$, implying exactly one vertex of $\Wplus{i}$ is not saturated by $M$ for every $x_i\in X$. Since those $3q$ vertices are neighbors of $\bigcup_{j\in J^+} \{ \down{j}{1}, \down{j}{2}, \down{j}{3}\} $ and $|J^+|=q$, the construction of $\Gdthree$ implies $ \bigcup_{j\in J^+ } C_j = X $. Hence every element of $X$ is contained in exactly one set of $\mathcal{C}'=\{ C_j\in \mathcal{C}: j\in J^+  \}$,
implying the instance $(\mathcal{C},X)$ is a YES-instance.
\end{proof}

Lemma \ref{lemmaifYESthennuisnud} and Lemma \ref{lemmaifnuisnudthenYES} show NP-hardness of deciding $\nu(G')=\nu_d(G')$ for a given bipartite graph $G'$ with maximum degree three. Furthermore, deciding if $\nu(G') = \nu_d(G')$ is clearly in NP, since a disconnected matching of size $\nu(G')$ is a certificate for a YES instance, completing the proof of Theorem \ref{theoremnunuddelta3}.\\

We finish this section with a corollary, which follows immediately from Theorem~\ref{theoremnunuddelta3}.
\begin{corollary}\label{discNPc}
The disconnected matching number is \NPH\text{} to compute for graphs with maximum degree three. The corresponding decision problem \textsc{Disconnected Matching}, i.e. deciding if $v_d\ge k$ for some integer $k$, is \NPc\text{} for graphs with maximum degree three.
\end{corollary}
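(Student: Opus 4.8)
The plan is to derive both assertions directly from Theorem~\ref{theoremnunuddelta3}, leaning on two elementary facts: the unrestricted matching number $\nu$ is computable in polynomial time by Edmonds' algorithm~\cite{edmonds_matching}, and $\nu_d(G)=\nu_{d,2}(G)\le\nu_{d,1}(G)=\nu(G)$ holds for every graph $G$ by Equation~\ref{eq:nu_inequalities}. The whole point is that these two facts let me convert the equality test from Theorem~\ref{theoremnunuddelta3} into a threshold question about $\nu_d$ alone, and also into a statement about evaluating $\nu_d$.

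First I would settle membership in \NP\ for the decision problem \textsc{Disconnected Matching}: a disconnected matching $M$ with $|M|\ge k$ is a polynomial-size certificate, and verifying that $M$ is a matching whose induced subgraph is disconnected is clearly polynomial. This covers arbitrary input graphs, in particular those of maximum degree three.

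For \NPHness\ of the decision problem I would give a many-one reduction from \textsc{Exact Cover By 3-Sets} that simply reuses the construction of Theorem~\ref{theoremnunuddelta3}. Given $(\mathcal{C},X)$, I build the subcubic bipartite graph $\Gdthree$ and set the threshold $k\coloneqq\nu(\Gdthree)=|V_1|$, which is computable in polynomial time. Since $\nu_d(\Gdthree)\le\nu(\Gdthree)$ always holds, we get $\nu_d(\Gdthree)\ge k$ if and only if $\nu_d(\Gdthree)=\nu(\Gdthree)$, which by Theorem~\ref{theoremnunuddelta3} happens exactly when $(\mathcal{C},X)$ is a \YES-instance. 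Thus $(\Gdthree,k)$ is a \YES-instance of \textsc{Disconnected Matching} iff $(\mathcal{C},X)$ is, and because $\Gdthree$ has maximum degree three—and subcubic bipartite graphs are in particular graphs of maximum degree three—the problem is \NPc\ on this class.

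Finally, for the evaluation problem I would argue that computing $\nu_d$ is \NPH\ by a Turing reduction: if $\nu_d$ were computable in polynomial time, then one could decide $\nu=\nu_d$ in polynomial time by also computing $\nu$ with Edmonds' algorithm and comparing the two values, contradicting Theorem~\ref{theoremnunuddelta3}. I do not expect a genuine obstacle anywhere here, since the corollary is essentially a repackaging of the theorem; the only point that needs care is the reformulation of the equality test $\nu_d=\nu$ as the threshold question $\nu_d\ge\nu(\Gdthree)$, which is legitimate precisely because of the inequality $\nu_d\le\nu$ from Equation~\ref{eq:nu_inequalities} together with the polynomial-time computability of the threshold $k=|V_1|$.
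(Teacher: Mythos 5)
Your proposal is correct and matches the paper's intent: the paper states the corollary as following immediately from Theorem~\ref{theoremnunuddelta3}, and your argument (certificate for \NP\ membership, reusing the subcubic bipartite construction $\Gdthree$ with threshold $k=\nu(\Gdthree)=|V_1|$ via the inequality $\nu_d\le\nu$, and the Turing reduction for hardness of evaluation) is exactly the fleshed-out version of that immediate derivation.
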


\section{\texorpdfstring{$ \nu=\nu_d $ for diameter at most 3?}{Disconnected equals unrestricted for diameter at most 3}}

For our last result regarding the equality of the matching number and disconnected matching number, we consider graphs with diameter 3.

\begin{theorem}\label{theoremdiam3}
Given $j\ge 2$,
deciding if $\nu_{d,j}(G)=\nu(G)$ is in \P\text{} for graphs of diameter at most 3.
\end{theorem}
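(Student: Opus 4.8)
The plan is to recast the equality as a constrained separator problem and then exploit the diameter-three hypothesis to reduce it to polynomially many ordinary matching computations. First I would observe that $\nu_{d,j}(G)=\nu(G)$ holds exactly when $G$ admits a maximum matching $M$ for which $G[V(M)]$ has at least $j$ connected components; writing $U=V(G)\setminus V(M)$ for the unsaturated vertices, this is equivalent to the existence of a set $U\subseteq V(G)$ with $|U|=|V(G)|-2\nu(G)$ such that $G-U$ has a perfect matching and at least $j$ connected components. Since the diameter is defined only for connected graphs, $G$ is connected, so two base cases fall out immediately: if $G$ has a perfect matching then $U=\emptyset$ and $G-U=G$ is connected, giving a single component, so the answer is NO for every $j\ge 2$. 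In particular, any YES-instance must have positive deficiency, and every witnessing $U$ is a genuine separator consisting entirely of unsaturated vertices.

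The heart of the argument is a structural lemma that turns the bound $\mathrm{diam}(G)\le 3$ into a bound on how complicated a witnessing $U$ must be. Fix a disconnected maximum matching $M$ and split its saturated vertices into one component $A$ and the union $B$ of the remaining components, so that $G$ has no edge between $A$ and $B$ while both $G[A]$ and $G[B]$ carry perfect matchings. Because $\mathrm{diam}(G)\le 3$, every pair $a\in A$, $b\in B$ is at distance $2$ or $3$, and, since there are no $A$–$B$ edges, every shortest such path runs through $U$; hence the neighbourhoods $N(A)$ and $N(B)$, both contained in $U$, must either meet or be joined by an edge. I would leverage this to show that the unsaturated interface separating the components admits a succinct description: either the components pairwise share common neighbours in $U$, pinning the separator down to a small guessable core, or the few cross-distances equal to $3$ force the corresponding components to be structurally trivial, namely isolated edges hanging off $U$. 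The crucial feature is that distance at most $3$ forbids components that are simultaneously large and mutually far — precisely the freedom that is available at diameter four, where the problem is \NPc\ by Theorem~\ref{theoremnunud}.

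Granting such a lemma, the algorithm guesses the bounded interface data together with the at most $j-1$ small components (an enumeration of size $n^{O(j)}$, hence polynomial since $j$ is fixed), deletes the guessed components and their neighbourhoods, and then tests whether the remaining vertices can be perfectly matched so that the whole configuration is a maximum matching of $G$; this final feasibility check is a single maximum-matching computation via Edmonds' algorithm~\cite{edmonds_matching}. Accepting iff some guess succeeds decides $\nu_{d,j}(G)=\nu(G)$ in polynomial time. I expect the main obstacle to be the structural lemma itself: one must prove that diameter three forces every disconnected maximum matching into a shape described by only polynomially many parameters, and in particular rule out many large, mutually non-adjacent components whose separating unsaturated set resists a short description. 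Establishing exactly this dichotomy — common neighbours in $U$ versus trivial far components — and then verifying that a maximum-size matching can always be routed so as to realise the guessed components is where the real work lies.
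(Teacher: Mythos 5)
Your setup is sound as far as it goes: recasting the question as the existence of a set $U$ of unsaturated vertices of size $|V(G)|-2\nu(G)$ (which is automatically independent, since $M$ is maximum) whose removal leaves a perfectly matchable graph with at least $j$ components, and observing that diameter $3$ forces the $U$-interfaces of any two components of $G[M]$ to intersect. But the proof has a genuine gap exactly where you locate it yourself: the structural lemma that would make your guessing step polynomial is never proved, and the dichotomy you conjecture in its place is not the right statement. In particular, the branch ``cross-distance $3$ forces the corresponding components to be isolated edges hanging off $U$'' is false: a component whose interior vertices have no unsaturated neighbours can be arbitrarily large and lie at distance $3$ from the rest, and components pairwise at distance $2$ can be arbitrarily large as well. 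So even granting the enumeration scheme, it is unclear what finite ``interface data'' one would enumerate, and the final ``routing'' feasibility check (that the guessed components extend to a maximum matching of $G$ inducing exactly those components) is also left unresolved.

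What is missing is an augmenting-path argument, not a further distance argument. Since $M$ is maximum, for any matched edge $vv'$ in a component $C$ of $G[M]$ whose vertices have unsaturated neighbours, $N(\{v,v'\})$ cannot contain two distinct unsaturated vertices $w,w'$ (else $wvv'w'$ augments), and propagating this along the matched edges of $C$ shows that a \emph{single} unsaturated vertex $w_C$ is adjacent to \emph{every} vertex of $C$. Independence of $U$ together with diameter $\le 3$ then forces all these $w_C$ to coincide in one vertex $w$, and every path leaving the at most one exceptional component (the one containing all saturated vertices without unsaturated neighbours) also passes through $w$. This is the paper's Lemma~\ref{lemmadiam3}: $\nu(G)=\nu_{d,j}(G)$ if and only if there is a single vertex $w$ with $\nu(G-w)=\nu(G)$ such that $G-w$ has at least $j$ non-trivial components. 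With that characterization the algorithm is immediate and needs no guessing at all: for each of the $n$ candidate vertices $w$, compute $\nu(G-w)$ with Edmonds' algorithm~\cite{edmonds_matching} and count the non-trivial components of $G-w$.
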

Theorem \ref{theoremdiam3} follows immediately from the characterization of diameter 3 graphs with equal matching number and disconnected matching number in the following lemma:

\begin{lemma}\label{lemmadiam3}
Given a graph $G$ with $m(G)\ge 1$ of diameter at most 3 and $j\ge 2$, $\nu(G)=\nu_{d,j}(G)$ if and only if there exists a vertex $v$ such that $G-v$ contains at least $j$ many non-trivial components and $\nu(G)=\nu(G-v)$.
\end{lemma}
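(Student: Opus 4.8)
The plan is to prove both implications, handling the $(\Leftarrow)$ direction directly and reserving the real work for $(\Rightarrow)$.

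For $(\Leftarrow)$, suppose a vertex $v$ has the stated properties. I would take a maximum matching $M$ of $G-v$; since $\nu(G-v)=\nu(G)$ and $v\notin V(M)$, the matching $M$ is also a maximum matching of $G$. Being maximal in $G-v$, it meets every non-trivial component of $G-v$, so $V(M)$ intersects all (at least $j$) of them. As distinct components of $G-v$ have no edges between them and $v$ is unsaturated, $G[V(M)]=(G-v)[V(M)]$ splits into at least $j$ components, i.e. $M$ is $j$-disconnected. Hence $\nu_{d,j}(G)\ge \nu(G)$, and \eqref{eq:nu_inequalities} supplies the reverse inequality, giving $\nu_{d,j}(G)=\nu(G)$.

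For $(\Rightarrow)$, fix a $j$-disconnected maximum matching $M$ and let $Z_1,\dots,Z_p$ with $p\ge j$ be the components of $G[V(M)]$ (the ``blobs''); each is non-trivial and carries the perfect matching $M\cap E(Z_i)$. Write $U=V(G)\setminus V(M)$. I would first record three facts: $U$ is independent (an edge inside $U$ would enlarge $M$), $U\ne\emptyset$ (otherwise $G=G[V(M)]$ would be disconnected, contradicting that $G$ is connected), and there are no edges between distinct blobs, so every path joining two blobs passes through $U$. Call a blob vertex a \emph{port} if it has a neighbour in $U$.

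Next I would extract the structure forced by diameter at most $3$. For $x\in Z_i$ and $y\in Z_\ell$ with $i\ne \ell$ one has $2\le d(x,y)\le 3$, and inspecting a shortest path under the blob/$U$ structure shows that at least one of $x,y$ must be a port (and if $d(x,y)=2$ both are, via a common $U$-neighbour). Hence at most one blob can contain a non-port vertex, every other blob consists entirely of ports, and every two blobs share a common neighbour in $U$. I would then invoke maximality: since an $M$-augmenting path between two unsaturated vertices can only use $U$-vertices at its two ends and cannot cross between blobs, for each matching edge $zz'$ its endpoints cannot have \emph{distinct} unsaturated neighbours. On an all-port blob this, together with connectivity, forces every vertex to have the \emph{same} single $U$-neighbour $u_Z$ (otherwise a standard alternating path between two matching edges with different attachments yields an augmenting path). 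The common-neighbour property then collapses all these $u_Z$ to one vertex $v^{*}$, which is adjacent to every blob and is the unique unsaturated vertex touching more than one blob; every other $U$-vertex is private to a single blob. Removing $v^{*}$ therefore leaves the $p\ge j$ blobs in distinct non-trivial components, and as $v^{*}\in U$ we have $\nu(G-v^{*})=\nu(G)$, so $v^{*}$ is the required vertex.

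The main obstacle is precisely this last step, showing that the inter-blob connections funnel through one vertex $v^{*}$. It is here that maximality (to rule out two ``independent'' bridges through augmenting paths) and diameter at most $3$ (to confine non-port vertices to a single blob and to force pairwise common neighbours) must be combined, with the only technically delicate point being the alternating-path argument that pins each all-port blob to a unique unsaturated neighbour.
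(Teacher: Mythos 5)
Your proposal is correct and follows essentially the same route as the paper's proof: your port/non-port dichotomy, the length-3 and length-5 augmenting-path arguments forcing a unique unsaturated attachment per all-port blob, and the diameter-based collapse of all attachments to a single vertex $v^{*}$ correspond exactly to the paper's Claims 1--4, with your explicit common-$U$-neighbour observation replacing the paper's distance-at-least-4 contradiction in its Claim 4. The only stylistic difference is that you spell out the $(\Leftarrow)$ direction, which the paper dismisses as clear, and you phrase the paper's Claim 3 as a connectivity-propagation along adjacent matching edges rather than a boundary-edge argument; both are sound.
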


\begin{proof}
The existence of such a vertex $v$ clearly implies $\nu(G)=\nu_{d,j}(G)$. Now we prove the other direction.\\
Let $\nu(G)=\nu_{d,j}(G)$ and let $M$ be a maximum $j$-disconnected matching of $G$, i.e. $|M|=\nu(G)$ and $G[M]$ has at least $j$ many components.
Note that $G-G[M]$ contains no edge, since otherwise we could add such an edge to $M$, implying $\nu(G)>\nu_{d,j}(G)$, contradiction.\\

\noindent\textbf{Claim 1:} If there exist saturated vertices without non-saturated neighbors, then they
are in the same component of $G[M]$.\\
Assume there exist saturated vertices $v,v'$ without non-saturated neighbors in different components of $G[M]$ and let $vu_1u_2\cdots u_\ell v'$ be a shortest path between $v$ and $v'$. It follows that $u_1,u_\ell$ are saturated vertices in the same components as $v,v'$, respectively. Since the path has to contain a non-saturated vertex as well, it follows that $v$ and $v'$ have distance at least 4, contradiction. \\

If vertices without non-saturated neighbors exist, let $C_0$ be the component 
of $G[M]$ that contains all those vertices
and let $C_1,\ldots, C_k$ be the remaining components. Hence every vertex in $C_i$, $i\ge 1$, has a non-saturated neighbor.
For every $i\ge 1$, let $w_i$ be a non-saturated vertex adjacent to at least one vertex in the component $C_i$.\\

\noindent\textbf{Claim 2:}
If $vv'\in M \cap E(C_i)$ for $i\ge 1$,
then $N(\{v,v'\})$ contains exactly one non-saturated vertex.\\
Note that both $v$ and $v'$ have a non-saturated neighbor. 
Assume there exists different non-saturated vertices $w$ and $w'$ 
such that $vw, v'w' \in E(G)$. 
Then $wvv'w'$ is an augmenting path, implying $\nu(G)>\nu_{d,j}(G)$, contradiction.\\

\noindent\textbf{Claim 3:} For $i\ge 1$, $w_i$ is adjacent to every vertex in $C_i$.\\
Assume there exists $i\ge 1$ such that Claim 3 is false. 
Let $U$ contain all vertices from $C_i$ that are adjacent to $w_i$.
There exists a vertex $v\in U$ with a neighbor $v'\in V(C_i)\setminus U$. Let $w'$ be a non-saturated neighbor of $v'$.
If $vv'\in M$, then $w_ivv'w'$ is an augmenting path, implying $\nu(G)>\nu_d(G)$, contradiction.
If $vv'\not \in M$, then exist $u,u'$ such that $vu,v'u'\in M$. By Claim 2, $uw_i,u'w' \in E(G)$. Hence $w_iuvv'u'w'$ is an augmenting path, implying $\nu(G)>\nu_{d,j}(G)$, contradiction.\\

Since for $i\ge 1$ the vertex $w_i$ is adjacent to every vertex in $C_i$, Claim 2 implies that $w_i$ is the unique non-saturated neighbor for every vertex in $C_i$. \\

\noindent\textbf{Claim 4:} $w_i=w_{k}$ for every $i,k\ge 1$.\\
Assume $w_i\not=w_k$.
Let $v_i\in C_i$ and $v_k\in C_k$. Every path between $v_i$ and $v_k$ contains $w_i$ and $w_k$. Since $w_i$ and $w_k$ are not adjacent, 
$v_i$ and $v_k$ have distance at least 4, contradiction.\\

Therefore $w_1$ is the unique non-saturated vertex adjacent to any and every vertex in $C_i$ for $i\ge 1$. If $C_0$ exists, every path between a vertex in $C_0$ and a vertex in $C_i$ for $i\ge 1$ contains the vertex $w_1$.
Hence the number of non-trivial components of $G-w_1$ is equal to the number of components of $G[M]$, but $\nu(G)=\nu(G-w_1)$ since $w_1$ is not saturated by $M$, completing the proof of Lemma \ref{lemmadiam3}.
\end{proof}

If $G$ has diameter 2, then two vertices from different components of $G[M]$ must have a common non-saturated neighbor, implying every saturated vertex has a non-saturated neighbor. Then $w_1$ is a universal vertex and $G-w_1$ has a perfect matching, implying the following corollary: 
\begin{corollary}\label{lemmadiam2}
Given a graph $G$ with diameter 2 and $j\ge 2$, $\nu_{d,j}(G)=\nu(G)$ if and only if there exists a universal vertex $v$ such that $G-v$ has a perfect matching and $G-v$ has at least $j$ many non-trivial components.\end{corollary}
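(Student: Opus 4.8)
The plan is to treat this as the diameter-$2$ specialization of Lemma~\ref{lemmadiam3}, reusing the structural analysis from its proof (the components $C_0,C_1,\ldots,C_k$ of $G[M]$ and the distinguished non-saturated vertex $w_1$) and exploiting the stronger diameter bound to pin $w_1$ down completely. The backward implication I would dispatch directly: if $v$ is universal, $M$ is a perfect matching of $G-v$, and $G-v$ has at least $j$ non-trivial components, then $M$ saturates exactly $V(G)\setminus\{v\}$, so $G[M]=G-v$ has at least $j$ components and $M$ is $j$-disconnected. A perfect matching of $G-v$ forces $|V(G)|$ to be odd, whence $\nu(G)\le (|V(G)|-1)/2 = |M| \le \nu_{d,j}(G)\le \nu(G)$, the last inequality coming from the chain in Equation~\ref{eq:nu_inequalities}; equality then holds throughout.

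For the forward implication I would start from a maximum $j$-disconnected matching $M$ with $|M|=\nu(G)$ and recall from the proof of Lemma~\ref{lemmadiam3} that the non-saturated vertices form an independent set and that, for $i\ge 1$, each component $C_i$ has a common non-saturated neighbor $w_1$ adjacent to all of $C_i$, all the $w_i$ coinciding. The first step is to rule out the exceptional component $C_0$: if some saturated vertex $x$ had no non-saturated neighbor, then, picking a saturated $y$ in a different component of $G[M]$ (one exists since $j\ge 2$ gives at least two components), a shortest $x$–$y$ path has length at most $2$ by the diameter hypothesis; length $1$ is impossible because adjacent saturated vertices lie in the same component of $G[M]$, and length $2$ forces the middle vertex to be a non-saturated neighbor of $x$, a contradiction. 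Hence every saturated vertex has a non-saturated neighbor, $C_0$ does not occur, and $w_1$ is the unique non-saturated neighbor of every saturated vertex.

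The crux, and the step I expect to be the main obstacle, is upgrading the vertex $w_1$ furnished by Lemma~\ref{lemmadiam3} from merely carrying this neighbor structure to being \emph{universal} and leaving a perfectly matchable graph. I would argue that $w_1$ is the \emph{only} non-saturated vertex: if $w'\neq w_1$ were also non-saturated, then since diameter $2$ makes $G$ connected with an edge, $w'$ has a neighbor $u$, which must be saturated because non-saturated vertices are independent; but $u$ lies in some $C_i$ whose unique non-saturated neighbor is $w_1$, forcing $w'=w_1$, a contradiction. Thus $M$ is a perfect matching of $G-w_1$, and $w_1$, being adjacent to every vertex of every $C_i$ and these exhausting $V(G)\setminus\{w_1\}$, is universal. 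Finally, since $w_1$ is the sole non-saturated vertex we have $G[M]=G-w_1$, so the at-least-$j$ components of $G[M]$ are exactly the non-trivial components of $G-w_1$; taking $v=w_1$ completes the proof. The independence of the non-saturated vertices and the uniqueness of the common neighbor are precisely what make this last paragraph go through.
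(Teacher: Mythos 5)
Your proof is correct and takes essentially the same route as the paper: the paper also derives the corollary from the structure established in the proof of Lemma~\ref{lemmadiam3}, using diameter~2 to show that two saturated vertices in different components of $G[M]$ have a common non-saturated neighbor, hence every saturated vertex has a non-saturated neighbor (no $C_0$), whence $w_1$ is universal and $G-w_1$ has a perfect matching. Your write-up merely fills in details the paper leaves implicit, namely that $w_1$ is the \emph{unique} non-saturated vertex and the counting argument for the backward direction.
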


\section{\texorpdfstring{$\nu_d = \nu_s$?}{Disconnected matching number equals induced}}

We now focus on the disconnected matching number and induced matching number. In this section, we show that deciding equality of the disconnected matching number and induced matching number for bipartite graphs is \coNPc.
To achieve this result, we use simple bounds on the disconnected matching number and induced matching number, captured in Lemma \ref{inequalitynusle1maxnud} and Lemma \ref{nuslemax1}.

\begin{theorem}\label{nudnuscoNPcomplete}Given a bipartite graph $G$ of diameter at most 3, deciding if $\nu_d(G) = \nu_s(G)$ is \coNPc.
\end{theorem}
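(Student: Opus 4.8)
The plan is to exploit the fact, visible in Equation~(\ref{eq:nu_inequalities}) with $\nu_d=\nu_{d,2}$, that $\nu_d(G)\ge\nu_s(G)$ for every graph. Hence $\nu_d(G)=\nu_s(G)$ fails \emph{exactly} when $\nu_d(G)>\nu_s(G)$. I would therefore establish two things: that the complementary problem ``$\nu_d(G)>\nu_s(G)$'' lies in \NP, which places the equality problem in \coNP; and that this complementary problem is \NPH\ on bipartite graphs of diameter at most $3$, which gives \coNP-hardness.

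For membership, a natural certificate for $\nu_d(G)>\nu_s(G)$ is a single matching $M$ with $G[M]$ disconnected (so $\nu_d(G)\ge|M|$) that is too large to be matched by any induced matching. The difficulty is that certifying $|M|>\nu_s(G)$ cannot be done by computing $\nu_s$, which is itself \NPH\ on this class. This is precisely where Lemma~\ref{inequalitynusle1maxnud} and Lemma~\ref{nuslemax1} enter: they furnish a polynomial-time computable quantity bounding $\nu_s(G)$ from above and guarantee that, whenever $\nu_d(G)>\nu_s(G)$, there is already a disconnected matching whose size strictly exceeds that bound. Exhibiting such a matching then certifies $\nu_d(G)>\nu_s(G)$ in polynomial time, so the complement is in \NP. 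Checking that these bounds are tight enough to yield the certificate is one of the two delicate points.

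For hardness I would first discharge the diameter constraint by a \emph{universal-connector gadget} and thereby reduce the problem to one about bipartite graphs of unrestricted diameter. Given any bipartite $G_0$ with parts $P\dot{\cup}Q$ and at least one edge, add two vertices $u,v$, the edge $uv$, all edges from $u$ to $P$, and all edges from $v$ to $Q$; call the result $G$. Then $G$ is bipartite ($u$ on the $Q$-side, $v$ on the $P$-side), and since every vertex is adjacent to $u$ or to $v$ and $uv\in E$, any two vertices lie on a path of length at most $3$, so $G$ has diameter at most $3$. Because $u$ is adjacent to all of $P$, $v$ to all of $Q$, and $uv\in E$, saturating $u$ or $v$ forces every saturated vertex into one component; hence no maximum disconnected matching uses $u$ or $v$ and $\nu_d(G)=\nu_d(G_0)$. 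Dually, the edge $uv$ forbids an induced matching from using $u$ and $v$ together, and a lone $u$- or $v$-edge cannot be extended, so $\nu_s(G)=\max\{\nu_s(G_0),1\}=\nu_s(G_0)$. Thus $\nu_d(G)>\nu_s(G)$ iff $\nu_d(G_0)>\nu_s(G_0)$, and it suffices to prove \NPHness\ of ``$\nu_d>\nu_s$'' for bipartite graphs.

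To obtain that core hardness I would reduce from \pname{Exact Cover By 3-Sets}, reusing the philosophy behind Theorem~\ref{theoremnunud}: construct a bipartite $G_0$ carrying a ``generic'' family of matchings whose maximum members are all \emph{induced} and of a fixed size $N-1$, together with a structure that admits a matching of size $N$ inducing exactly two connected components precisely when an exact cover exists. An exact cover then yields a $2$-disconnected matching of size $N>N-1=\nu_s(G_0)$, whereas without a cover every disconnected matching collapses to size at most $\nu_s(G_0)$, giving $\nu_d(G_0)=\nu_s(G_0)$; composing with the connector gadget sends a \YES-instance to a diameter-$3$ bipartite graph with $\nu_d>\nu_s$ and a \NOi-instance to one with $\nu_d=\nu_s$. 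I expect the main obstacle to be the \NOi-direction of this core reduction: the element and set gadgets must simultaneously \emph{pin} the induced matching number to $N-1$ and rule out any spurious two-component matching of size $N$ not coming from a genuine cover. In particular the required disconnection must be forced through the instance and never supplied for free, since any disjoint ``disconnection booster'' would make the graph's entire matching available at once and collapse the question to the polynomially solvable $\nu=\nu_s$ problem of Kobler and Rotics~\cite{koblerrotics2003}.
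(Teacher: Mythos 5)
Your \coNP-membership argument is the paper's own: Lemma~\ref{nuslemax1} supplies the polynomial-time computable bound $s(G)\ge \nu_s(G)$, and the ``delicate point'' you flag --- that whenever $\nu_d(G)>\nu_s(G)$ some disconnected matching already exceeds that bound --- is precisely the paper's Lemma~\ref{lemmabound}, which states $\nu_d(G)\not=\nu_s(G)$ if and only if $\nu_d(G)>s(G)$. Its proof is short (either $s(G)=1+\max_{uv\in E(G)}\nu\left(G-N[\{u,v\}]\right)$, in which case an optimal edge witnesses $\nu_s(G)=s(G)$ and so $\nu_d(G)>s(G)$; or $s(G)$ is smaller, in which case Lemma~\ref{inequalitynusle1maxnud} gives $\nu_d(G)>s(G)$ outright), but you must actually supply this argument: a certificate is worthless unless the equivalence holds in both directions.

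The genuine gap is in the hardness half. Your connector gadget is sound (saturating $u$ or $v$ does force a connected matching, so $\nu_d$ is preserved, and any induced matching touching $u$ or $v$ has size $1$, so $\nu_s$ is preserved), but after applying it you have only reduced the target problem to \NPHness\ of ``$\nu_d>\nu_s$'' on general bipartite graphs, and for that core problem you give no construction: you list the properties the X3C gadgets ``must'' have (every maximum induced matching pinned to size $N-1$, a $2$-disconnected matching of size $N$ existing exactly when an exact cover does, no spurious disconnected matchings) without building a graph that has them. Designing such gadgets \emph{is} the hardness proof, and your own caveat about the \NOi-direction concedes it is absent. The paper sidesteps this entirely: it reduces from the threshold problem \pname{Disconnected Matching} (``is $\nu_d(G)\ge k$?''), already known to be \NPc\ on bipartite graphs \cite{disconnected_matchings_cocoon}, by attaching $k-1$ disjoint copies of exactly your connector gadget (edges $e_i=u_iv_i$ with each $v_i$ complete to $A$ and each $u_i$ complete to $B$). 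This padding simultaneously forces diameter at most $3$, confines every disconnected or induced matching to either $\{e_1,\ldots,e_{k-1}\}$ or $E(G)$ --- so $\nu_d(G')=\max\{k-1,\nu_d(G)\}$ and $\nu_s(G')=\max\{k-1,\nu_s(G)\}$ --- and, after using Lemma~\ref{inequalitynusle1maxnud} to assume w.l.o.g.\ that $k>1+\max_{uv\in E(G)}\nu\left(G-N[\{u,v\}]\right)\ge\nu_s(G)$, pins $\nu_s(G')=k-1$, whence $\nu_d(G')\not=\nu_s(G')$ if and only if $\nu_d(G)\ge k$ (Lemma~\ref{reductionGprime}). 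In short: your gadget, iterated $k-1$ times and aimed at the right source problem, is already the whole proof; aimed at X3C, it leaves the essential construction undone.
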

In order to show Theorem \ref{nudnuscoNPcomplete}, we first proof the following lemma:
\begin{lemma}\label{inequalitynusle1maxnud}
For any graph $G$:
\begin{align*}
\nu_s(G) \le 1+ \max\limits_{uv\in E(G)} \nu \left(G-N\left[ \{u,v\} \right] \right) \le \nu_d(G)
\end{align*}
\end{lemma}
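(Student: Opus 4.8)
The plan is to treat the two inequalities independently, viewing the middle quantity as the cost of ``spending'' one edge $uv$ to create an isolated component and then matching the rest of the graph optimally away from $N[\{u,v\}]$. Throughout I may assume $E(G)\neq\emptyset$, since otherwise all three quantities vanish and the chain is vacuous.

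For the left inequality $\nu_s(G)\le 1+\max_{uv\in E(G)}\nu(G-N[\{u,v\}])$, I would begin with a maximum induced matching $M$, so that $|M|=\nu_s(G)$, and fix an arbitrary edge $uv\in M$. Since $G[V(M)]$ is $1$-regular, no endpoint of an edge in $M\setminus\{uv\}$ can lie in $N[\{u,v\}]$: such a vertex would be equal or adjacent to $u$ or $v$ and hence create an edge of $G[V(M)]$ distinct from the matching edges, contradicting that $M$ is induced. Therefore $M\setminus\{uv\}$ is a matching of $G-N[\{u,v\}]$, so $\nu_s(G)-1\le\nu(G-N[\{u,v\}])\le\max_{u'v'\in E(G)}\nu(G-N[\{u',v'\}])$, which is the first inequality.

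For the right inequality $1+\max_{uv\in E(G)}\nu(G-N[\{u,v\}])\le\nu_d(G)$, I would choose an edge $uv$ attaining the maximum together with a maximum matching $M'$ of $G-N[\{u,v\}]$, and set $M\coloneqq M'\cup\{uv\}$. This is a matching, since every vertex saturated by $M'$ lies outside $N[\{u,v\}]$ and in particular differs from $u$ and $v$. The crucial observation is that these vertices are also non-adjacent to both $u$ and $v$, so $G[V(M)]$ contains no edge joining $\{u,v\}$ to $V(M')$; hence the edge $uv$ spans a component of $G[V(M)]$ disjoint from every component meeting $V(M')$. Thus $G[V(M)]$ is disconnected, $M$ is a disconnected matching, and $\nu_d(G)\ge|M|=1+|M'|$.

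The step I expect to be the main obstacle is certifying this disconnectedness, because $G[V(M)]$ has at least two components only when $M'\neq\emptyset$, i.e. when $\max_{u'v'\in E(G)}\nu(G-N[\{u',v'\}])\ge 1$. Fortunately this positivity is automatic in the regime where the bound carries content: if $\nu_s(G)\ge 2$, then the witness produced for the left inequality already gives an edge $uv$ with $M\setminus\{uv\}$ a nonempty matching of $G-N[\{u,v\}]$, forcing the maximum to be at least $1$. I would therefore record that the interesting case $\nu_s(G)\ge 2$ makes $M'$ nonempty, so that the two-component certificate goes through and the full chain holds.
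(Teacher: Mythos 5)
Your argument is, in both halves, the same as the paper's: the left inequality is proved by deleting one edge $uv$ from a maximum induced matching and observing that the remaining edges survive in $G-N[\{u,v\}]$, and the right inequality by adjoining a maximizing edge $uv$ to a maximum matching $M'$ of $G-N[\{u,v\}]$. The one place you diverge is the caveat in your final paragraph, and it deserves to be said plainly: that caveat is not a defect of your write-up but a genuine gap in the lemma and in the paper's own proof. The paper asserts without qualification that ``a matching of $G-N[\{u,v\}]$ together with the edge $uv$ yields a disconnected matching''; this fails when that matching is empty, since then the saturated vertices induce a single edge, which is connected, and under the paper's conventions ($G[M]$ disconnected, or $M$ empty) the matching $\{uv\}$ is not disconnected. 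Your diagnosis of when the certificate works is exactly right: the maximum is at least $1$ precisely when some edge of $G$ avoids the closed neighborhood of another, i.e.\ precisely when $\nu_s(G)\ge 2$.

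In the complementary case the right inequality is not merely unproved but false, so no proof could close this gap without adding a hypothesis. For $G=K_2$, $G=K_3$, or any star $K_{1,n}$, one has $E(G)\neq\emptyset$ and $\nu_s(G)=1$, while $\max_{uv\in E(G)}\nu\left(G-N\left[\{u,v\}\right]\right)=0$; hence the middle term equals $1$, yet $\nu_d(G)=0$, because every nonempty matching in such a graph induces a connected subgraph. Thus the lemma as stated fails for every graph with $\nu_s(G)=1$, and the correct statement requires the hypothesis $\nu_s(G)\ge 2$ (equivalently $\nu_d(G)>0$), exactly the restriction you record. The left inequality, as both you and the paper show, is unconditionally true. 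So: same approach as the paper, but your version is the more careful one, and the regime restriction you flag should be promoted from a remark to an explicit hypothesis.
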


\begin{proof}Since for any edge $uv$ in $G$ a matching of $G-N[ \{ u,v \}  ]$ together with the edge $uv$ yields a disconnected matching, 
it holds that
\begin{align*}
\nu_d(G)\ge 1+\max \limits_{uv\in E(G)} \nu\left(G-N[\{ u,v \}]\right).
\end{align*}

Let $M$ be a maximum induced matching of $G$
and let $e\in M$. Note that $M\setminus \{e\}$ is an induced matching of $G-N[e]$ and obviously $|M\setminus \{ e \}|\le \nu\left(G-N[e] \right)$. Hence $|M|\le 1+ \nu\left(G-N[e] \right) \le  1+\max \limits_{uv\in E(G)} \nu\left(G-N[ \{ u,v  \} ] \right) $, implying 
\begin{align*}
\nu_s(G) \le 1+\max \limits_{uv\in E(G)} \nu\left(G-N[ \{ u,v  \} ] \right).
\end{align*}
This concludes the proof of Lemma \ref{inequalitynusle1maxnud}.
\end{proof}

To show that deciding if $\nu_d(G)=\nu_s(G)$ is
\coNPH\text{}
for bipartite graphs,
we give a polynomial-time reduction from the decision problem \pname{Disconnected Matching}
to the complement of $\nu_d=\nu_s$. 

An instance of \pname{Disconnected Matching} consists of a graph $G$ and an integer $k$. 
We have to decide if $G$ has a disconnected matching of size at least $k$, i.e. if $\nu_d(G)\ge k$. \pname{Disconnected Matching} is {\NPc} for bipartite graphs~\cite{disconnected_matchings_cocoon}.
Note that, by Lemma~\ref{inequalitynusle1maxnud}, we can assume that $k> 1 + \max_{uv\in E(G)} \nu\left( G-N[ \{u,v\} ] \right)$, since otherwise we could easily find a disconnected matching of size at least $k$ containing an edge $uv$ that maximizes $\nu\left(G-N[\{ u,v \}]\right)$ as well as a maximum matching of $G - N[{u,v}]$.\\

Now we describe our reduction. Let $G$ be a bipartite graph and let $k$ be an integer satisfying $k>  1 + \max_{uv\in E(G)} \nu\left( G-N[ \{u,v\} ] \right)$. Let $A\mathbin{\dot{\cup}} B$ be the partition of $V(G)$.

We construct $G'$ by adding $2(k-1)$ many vertices $u_1,\ldots,u_{k-1},v_1,\ldots,v_{k-1}$ and $k-1$ edges, $e_1=u_1v_1,\ldots, e_{k-1}=u_{k-1}v_{k-1}$, to $G$. We add all possible edges between $\{ v_1,\ldots, v_{k-1} \}$ and $A$ and between $\{u_1,\ldots, u_{k-1}\}$ and $B$ 
(see Figure \ref{gnewnudnus}).
The new graph $G'$ is bipartite and has diameter at most 3.

\begin{figure}[htb!]
	\centering
	\begin{tikzpicture}
	
	\tikzstyle{convcols}=[color=black]
	\tikzstyle{point}=[draw,circle,inner sep=0.cm, minimum size=1.5mm, fill=black]
	\tikzstyle{point2}=[draw,circle,inner sep=0.cm, minimum size=0.5mm, fill=black]

	\node[point] at (0,0) [label=left:$u_{k-1}$] {};
	\node[point] at (0,2.5) [label=left:$u_2$] {};
	\node[point] at (0,4.5) [label=left:$u_1$] {};

	\node[point] at (0,1) [label=left:$v_{k-1}$] {};
	\node[point] at (0,3.5) [label=left:$v_2$] {};
	\node[point] at (0,5.5) [label=left:$v_1$] {};

	\foreach \i in {0,2.5,4.5}
	{	
		\draw (0,\i) -- (0,\i+1);
	}
		
	\node[point2] at (4,2) [label=above:] {};
	\node[point2] at (4,1.75) [label=above:] {};
	\node[point2] at (4,1.5) [label=above:] {};
		
	\node[point2] at (0,1.5) [label=above:] {};
	\node[point2] at (0,1.75) [label=above:] {};
	\node[point2] at (0,2) [label=above:] {};

  \node[point2] at (5,2.25) [label=above:] {};
	\node[point2] at (5,2.5) [label=above:] {};
	\node[point2] at (5,2) [label=above:] {};
	
	\foreach \j in {0,3,4.5}{
    	\node[point] at (4,\j) [label=above:] {};
 
		\foreach \i in {0,2.5,4.5}
			{
		\draw (0,\i) -- (4,\j);	
		}
	}

 	\foreach \j in {0,4.5}{
    	\node[point] at (5,\j) [label=above:] {};
 
		\foreach \i in {1,3.5,5.5}
			{
		\draw (0,\i) -- (5,\j);	
		}
	}

	\draw[convcols] (3.5,-0.5) rectangle ++(1,5.5) node[below]{};
	\draw[convcols] (4.5,-0.5) rectangle ++(1,5.5) node[midway]{};
	\node at (4,-1) { $A$};
	\node at (5,-1) { $B$};
		
	\draw[thick,black,decorate,decoration={brace,amplitude=10}] (3.3,5.5) -- (5.7,5.5) node[midway, above,yshift=10]{$G$};	
	
	\end{tikzpicture}
	\caption{$G'$}
	\label{gnewnudnus}
\end{figure}

The following lemma implies \coNPHness\text{} of the decision problem $\nu_d= \nu_s$. 

\begin{lemma}\label{reductionGprime}
$\nu_s(G')\not=\nu_d(G')$ if and only if $\nu_d(G)\ge k$.
\end{lemma}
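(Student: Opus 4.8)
The plan is to pin down $\nu_s(G')$ exactly and to show that the disconnected matching number of $G'$ detects whether $\nu_d(G)\ge k$. Concretely, I will establish the two facts $\nu_s(G')=k-1$ and [$\nu_d(G')\ge k$ iff $\nu_d(G)\ge k$]. Since $\{u_1v_1,\dots,u_{k-1}v_{k-1}\}$ is an induced (hence, for $k\ge 3$, disconnected) matching, we get $\nu_d(G')\ge k-1=\nu_s(G')$, and the Lemma follows: the two parameters differ exactly when $\nu_d(G')\ge k$, i.e.\ when $\nu_d(G)\ge k$. The degenerate cases $k\le 2$ are disposed of directly, as deciding $\nu_d(G)\ge k$ is trivial there, so we may assume $k\ge 3$. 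Call an edge of $G'$ a \emph{gadget edge} if it is incident to some $u_i$ or $v_i$ (i.e.\ an edge $u_iv_i$, an edge $v_ia$ with $a\in A$, or an edge $u_ib$ with $b\in B$), and a \emph{$G$-edge} otherwise.

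For $\nu_s(G')=k-1$, the matching $\{u_iv_i\}$ gives the lower bound. For the upper bound let $M$ be an induced matching. If $M$ contains a $G$-edge $ab$, then no $u_i$ or $v_i$ can be saturated (each of $v_ia$, $u_ib$ is an edge of $G'$ and would be a chord), so $M\subseteq E(G)$ and $\lvert M\rvert\le \nu_s(G)$, which is at most $k-1$ by Lemma~\ref{inequalitynusle1maxnud} together with the standing assumption $k>1+\max_{uv}\nu\left(G-N[\{u,v\}]\right)$. If $M$ contains an edge $u_iv_i$, then no $v_ja$, $u_jb$, or $G$-edge can be present (each produces a chord through $v_i$ or $u_i$), so $M\subseteq\{u_jv_j\}$ and $\lvert M\rvert\le k-1$. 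Otherwise $M$ uses only edges $v_ia$ and $u_ib$; two edges using distinct $v$'s (resp.\ $u$'s) create a chord, so $M$ has at most one of each type and $\lvert M\rvert\le 2\le k-1$.

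The easy implication [$\nu_d(G)\ge k\Rightarrow \nu_d(G')\ge k$] is immediate: every added edge of $G'$ is incident to one of the added vertices $u_1,\dots,v_{k-1}$, so $G'$ induces no new edge on $A\cup B$; hence a disconnected matching of $G$ of size $\ge k$ remains a disconnected matching of $G'$. The heart of the proof, and the main obstacle, is the converse, which I will obtain from the claim that \emph{every disconnected matching $M$ of $G'$ containing a gadget edge satisfies $\lvert M\rvert\le k-1$}. Granting this, any disconnected matching of $G'$ of size $\ge k$ uses only $G$-edges, hence (again since $A\cup B$ carries no new edges) is a disconnected matching of $G$ of size $\ge k$, giving $\nu_d(G)\ge k$.

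To prove the claim I will exploit that $v_i$ is adjacent to \emph{all} of $A$ and $u_i$ to \emph{all} of $B$, so saturated gadget vertices tend to collapse into a single component of $G'[M]$; the delicate point is to track exactly which chords of this highly connected gadget are forced to appear, since those chords, not the matching edges, collapse components and cap the size at $k-1$. Write $I_E,I_{VA},I_{UB}$ for the index sets of the edges $u_iv_i$, $v_ia$, $u_ib$ in $M$. If some $u_iv_i\in M$, then $v_i$ reaches every saturated vertex of $A$ and $u_i$ every saturated vertex of $B$, and these reach all remaining saturated vertices; so if any vertex of $A\cup B$ were saturated, $G'[M]$ would be connected. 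Hence $M=\{u_jv_j:j\in I_E\}$ and $\lvert M\rvert\le k-1$. If $I_E=\emptyset$ and exactly one of $I_{VA},I_{UB}$ is nonempty, the corresponding gadget side forms one component that also absorbs every $G$-edge endpoint while the opposite side has no saturated $u_i$ or $v_i$, so $G'[M]$ is connected, a contradiction. In the remaining case $I_{VA},I_{UB}\neq\emptyset$, the saturated $A$-vertices together with the saturated $v_i$'s form one connected blob and the saturated $B$-vertices with the saturated $u_i$'s form another; since every saturated vertex lies in one of these two blobs, disconnectedness forces them to be distinct components. This excludes any $G$-edge between them (so no $G$-edge occurs) and any $u_iv_i$-chord (so $I_{VA}\cap I_{UB}=\emptyset$), whence $\lvert M\rvert=\lvert I_{VA}\rvert+\lvert I_{UB}\rvert=\lvert I_{VA}\cup I_{UB}\rvert\le k-1$. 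This establishes the claim and completes the Lemma.
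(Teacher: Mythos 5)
Your proof is correct in the regime it covers ($k\ge 3$), and while it follows the paper's overall route --- pin down $\nu_s(G')=k-1$ and show that $\nu_d(G')\ge k$ precisely when $\nu_d(G)\ge k$ --- its technical core is genuinely different and, in fact, more rigorous than the paper's. The paper settles everything with a single asserted sentence: every disconnected (or induced) matching of $G'$ is a subset of $\{e_1,\ldots,e_{k-1}\}$ or of $E(G)$, whence $\nu_d(G')=\max\{k-1,\nu_d(G)\}$ and $\nu_s(G')=\max\{k-1,\nu_s(G)\}$. Taken literally, that dichotomy is false: for $k\ge 3$, a set such as $\{v_1a,\,u_2b\}$ with $ab\notin E(G)$ is a disconnected (indeed induced) matching of $G'$ that uses the cross edges and lies in neither class. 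The paper's max-formulas survive only because all such ``mixed'' matchings are small, and your claim --- every disconnected matching of $G'$ containing a gadget edge has size at most $k-1$ --- together with your three-case bound on induced matchings is exactly the argument that makes this precise; the two-blob analysis forcing $I_{VA}\cap I_{UB}=\emptyset$ and excluding $G$-edges is sound. Your explicit exclusion of $k\le 2$ is also not cosmetic but necessary: for $k=2$ the stated biconditional actually fails (under the standing assumption one gets $\nu_d(G')=0$ yet $\nu_s(G')\ge 1$ while $\nu_d(G)=0<k$), and correspondingly the paper's backward direction silently uses $\nu_d(G')\ge\nu_s(G')$, which is only guaranteed when $\nu_s(G')\ge 2$, i.e., $k\ge 3$. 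In short, your approach buys a complete justification of the structural facts the paper merely asserts (and repairs them where they are false as stated), at the cost of a longer case analysis; the paper's approach buys brevity at the cost of an incorrect intermediate claim and an unstated restriction on $k$.
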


\begin{proof}
By construction of $G'$, any disconnected (or induced) matching of $G'$ is either a subset of $\{e_1,\ldots, e_{k-1}\}$ or $E(G)$, implying $\nu_d(G')=\max \{k-1, \nu_d(G) \}$ and $\nu_s(G')=\max \{k-1, \nu_s(G) \}$.\\
\newline
($\Rightarrow$): 
Let $\nu_d(G)\ge k$. Since $k> 1 + \max_{uv\in E(G)} \nu\left( G-N[ \{u,v\} ] \right)$, Lemma~\ref{inequalitynusle1maxnud} implies $k> 
\nu_s(G)$. Hence $\nu_s(G')=\max\{ k-1, \nu_s(G)  \}=k-1$, implying $\nu_d(G')\ge \nu_d(G)\ge k>\nu_s(G')$.\\
\newline
($\Leftarrow$): Let $\nu_s(G')\not=\nu_d(G') $.
$\{e_1,\ldots, e_{k-1}\}$ is an induced matching of $G'$, implying $\nu_s(G')\ge k-1$. Since $\nu_s(G')\not=\nu_d(G') $, we get $k \le \nu_d(G')=\max\{ k-1, \nu_d(G)  \}$ and hence $ \nu_d(G) \ge k$.
\end{proof}

From Lemma~\ref{inequalitynusle1maxnud} we deduce the following corollary.

\begin{corollary}
	For any graph $G$, if $\nu_d(G)=\nu_s(G)$ then
	\begin{align*}
	\nu_d(G)=\nu_s(G)=1+\max \limits_{uv\in E(G)} \nu\left(G-N[\{ u,v \}]\right)
	\end{align*}
	and we can find an induced matching of this size in polynomial time.\\
	If $\nu_d(G)=\nu_s(G)$ and we choose an edge $uv$ that belongs to a maximum induced matching, the remaining graph $G-N[\{u,v\}]$ has equal matching number and induced matching number and on those graphs a maximum induced matching can be found in polynomial time \cite{koblerrotics2003, cameronwalker2005}.\end{corollary}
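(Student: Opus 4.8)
The first assertion follows by a direct squeeze. Lemma~\ref{inequalitynusle1maxnud} already supplies, for every graph, the chain $\nu_s(G)\le 1+\max_{uv\in E(G)}\nu\left(G-N[\{u,v\}]\right)\le \nu_d(G)$. Under the hypothesis $\nu_d(G)=\nu_s(G)$ the two outer terms coincide, so both inequalities are tight and the middle term equals them as well; this is exactly the claimed identity $\nu_d(G)=\nu_s(G)=1+\max_{uv\in E(G)}\nu\left(G-N[\{u,v\}]\right)$. No further work is needed for the numerical value itself.

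For the structural statement I would start from a maximum induced matching $M$ of $G$ and fix any edge $uv\in M$. Deleting $N[\{u,v\}]$ removes precisely the endpoints of $uv$ together with all of their neighbours, so $M\setminus\{uv\}$ survives as an induced matching of $G-N[\{u,v\}]$ of size $\nu_s(G)-1$; hence $\nu_s\left(G-N[\{u,v\}]\right)\ge \nu_s(G)-1$. In the other direction, combining $\nu_s\le\nu$ with the identity just established gives $\nu_s\left(G-N[\{u,v\}]\right)\le \nu\left(G-N[\{u,v\}]\right)\le \max_{e\in E(G)}\nu\left(G-N[e]\right)=\nu_s(G)-1$. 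The two bounds force $\nu_s\left(G-N[\{u,v\}]\right)=\nu\left(G-N[\{u,v\}]\right)=\nu_s(G)-1$, so the residual graph indeed has equal matching and induced matching numbers, as asserted.

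These two facts assemble into a polynomial-time procedure for producing a maximum induced matching whenever $\nu_d(G)=\nu_s(G)$. The plan is to enumerate every edge $uv\in E(G)$; for each one, build $G-N[\{u,v\}]$ and run the recognition algorithm of Kobler and Rotics to test whether this residual satisfies $\nu=\nu_s$, and, when it does, compute a maximum induced matching $M'$ of it in polynomial time via~\cite{koblerrotics2003, cameronwalker2005}. Since every vertex of $M'$ lies outside $N[\{u,v\}]$, the set $M'\cup\{uv\}$ is again an induced matching of $G$, of size $|M'|+1$, and I would return the largest such candidate over all edges. Every candidate is an induced matching of $G$, hence of size at most $\nu_s(G)$; and by the previous paragraph an edge $uv$ lying in a fixed maximum induced matching yields a residual that passes the test with $|M'|=\nu_s(G)-1$, so at least one candidate attains size $\nu_s(G)$. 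The best candidate is therefore a maximum induced matching.

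The one genuinely delicate point is that computing a maximum induced matching is \NP-hard in general, so the enumeration must never invoke such a subroutine on an arbitrary graph: it is essential that the Kobler--Rotics machinery is called \emph{only} on residuals that have already been certified to satisfy $\nu=\nu_s$, which is exactly the regime in which that machinery runs in polynomial time. The remaining verifications are routine and I would only state them briefly: that the lift $M'\cup\{uv\}$ stays induced (immediate, as $M'$ avoids $N[\{u,v\}]$) and that maximizing over edges cannot overshoot $\nu_s(G)$.
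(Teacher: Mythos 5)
Your proof is correct and follows essentially the same route as the paper, which deduces the corollary directly from Lemma~\ref{inequalitynusle1maxnud}: the numerical identity by the same squeeze of the sandwich inequality, and the residual-graph equality $\nu = \nu_s$ on $G-N[\{u,v\}]$ by the same two bounds. Your explicit edge-enumeration procedure (certifying each residual with the Kobler--Rotics recognition algorithm before invoking their maximum-induced-matching machinery, then lifting and taking the best candidate) correctly fills in the algorithmic detail that the paper leaves implicit.
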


It remains to show that 
deciding if $\nu_d(G)=\nu_s(G)$ is in \coNP. We use a slightly stronger upper bound for the induced matching number.

\begin{lemma}\label{nuslemax1}
For any graph $G$, it holds $\nu_s(G)\le s(G)$ for \begin{align*}
s(G)\coloneqq \max\limits_{uv\in E(G)} \bigg( \nu\left(G- N[ \{u,v\} ]\right) +\mathds{1}\Big[\nu\left(G-N[ \{u,v\} ]\right)=\nu_s\left(G-N[ \{u,v\} ]\right) \Big] \bigg),
\end{align*}
where $\mathds{1}\big[ A \big]$ is a function receiving a statement $A$ and returning $1$ if the statement $A$ is true and $0$ otherwise.
		\end{lemma}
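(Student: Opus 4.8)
The plan is to refine the argument of Lemma~\ref{inequalitynusle1maxnud}, which already gives $\nu_s(G) \le 1 + \max_{uv \in E(G)} \nu(G - N[\{u,v\}])$; the improvement needed here is to replace the constant $+1$ by the indicator $\mathds{1}[\nu(H)=\nu_s(H)]$ for the deleted subgraph $H$. The starting point is identical: I would take a maximum induced matching $M$ of $G$ (assuming $\nu_s(G) \ge 1$, since otherwise $s(G) \ge 0 = \nu_s(G)$ settles the claim immediately, and noting that if $E(G) = \emptyset$ the statement is vacuous), fix a single edge $e = uv \in M$, and observe that $M \setminus \{e\}$ is an induced matching of $H := G - N[\{u,v\}]$. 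Deleting vertices from a graph cannot destroy the property that a remaining set of edges forms an induced matching, so $M \setminus \{e\}$ witnesses $\nu_s(H) \ge |M| - 1 = \nu_s(G) - 1$. This is the crucial difference from the earlier lemma: I route the bound through $\nu_s(H)$ rather than directly through $\nu(H)$.

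The key new idea is then to case-split on whether $\nu(H) = \nu_s(H)$. If equality holds, the indicator equals $1$ and I would combine $\nu_s(G) \le \nu_s(H) + 1$ with $\nu_s(H) = \nu(H)$ to get $\nu_s(G) \le \nu(H) + 1 = \nu(H) + \mathds{1}[\nu(H)=\nu_s(H)]$. If equality fails, then, since $\nu_s \le \nu$ always and both are integers, the strict inequality forces $\nu_s(H) \le \nu(H) - 1$; feeding this into $\nu_s(G) - 1 \le \nu_s(H)$ yields $\nu_s(G) \le \nu(H) = \nu(H) + \mathds{1}[\nu(H)=\nu_s(H)]$, the indicator now being $0$. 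In both cases the desired bound holds at the specific edge $e$, and taking the maximum over all edges of $G$ gives $\nu_s(G) \le s(G)$.

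The only real subtlety — and the step I would verify most carefully — is the integrality argument in the second case: the loss of the $+1$ coming from a vanishing indicator must be exactly compensated by the strict gap $\nu_s(H) \le \nu(H) - 1$. There is no genuine obstacle, only bookkeeping, and the rest is a direct reuse of the induced-matching restriction already exploited in Lemma~\ref{inequalitynusle1maxnud}; the entire argument is elementary once the case split on $\nu(H) = \nu_s(H)$ is in place.
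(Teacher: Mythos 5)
Your proof is correct and follows essentially the same route as the paper's: both take a maximum induced matching $M$, remove an edge $e=uv$, observe that $M\setminus\{e\}$ is an induced matching of $H = G-N[\{u,v\}]$, and split on whether $\nu(H)=\nu_s(H)$, using integrality (i.e.\ $\nu_s(H)\le\nu(H)-1$ in the strict case) to absorb the lost $+1$. The only cosmetic difference is that you phrase the bound through $\nu_s(H)$ explicitly while the paper bounds $|M\setminus\{e\}|$ directly, which is the same estimate.
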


\begin{proof}
Let $M$ be a maximum induced matching of $G$
and $e\in M$. 
Clearly, $M\setminus \{e\}$ is an induced matching of $G-N[e]$. We consider two cases:
\begin{itemize}
\item If $\nu\left(G-N[ e ]\right)=\nu_s\left(G-N[e ]\right)$, then $|M\setminus \{ e \}|= \nu\left(G-N[e] \right)$ and hence $|M|= 1+ \nu\left(G-N[e] \right)$.
\item If $\nu\left(G-N[e ]\right)\not=\nu_s\left(G-N[e]\right)$, then $|M\setminus \{ e \}|< \nu\left(G-N[e] \right)$ and hence $|M|\le \nu\left(G-N[e] \right)$.
\end{itemize}
Combining those two cases yields \begin{align*}
\nu_s(G)=|M|\le \nu\left(G-N[e] \right) + \mathds{1}\Big[\nu\left(G-N[e ]\right)=\nu_s\left(G-N[ e ]\right) \Big] \le s(G),
\end{align*}
concluding the proof.
\end{proof}

Note that deciding if the matching number and induced matching number of a given graph are equal can be done in polynomial time \cite{koblerrotics2003}.
Hence for any graph $G$ the upper bound $s(G)$ in Lemma~\ref{nuslemax1} can be computed in polynomial time.

Now we use $s(G)$ to characterize graphs with equal disconnected matching number and disconnected matching number.

\begin{lemma}\label{lemmabound}
	For any graph $G$:
	\begin{align*}
	\nu_d(G)\not= \nu_s(G) \text{ if and only if }\nu_d(G)> s(G)
	\end{align*}
\end{lemma}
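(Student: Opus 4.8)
The plan is to leverage the two upper bounds already in hand together with the sandwich $\nu_s(G) \le 1 + \max_{uv\in E(G)} \nu\left(G-N[\{u,v\}]\right) \le \nu_d(G)$ from Lemma~\ref{inequalitynusle1maxnud}. For brevity I would write $b(G) \coloneqq 1 + \max_{uv\in E(G)} \nu\left(G-N[\{u,v\}]\right)$ for the middle term. Since each summand in the maximum defining $s(G)$ exceeds the corresponding $\nu\left(G-N[\{u,v\}]\right)$ by either $0$ or $1$, I would first record the elementary chain $b(G)-1 \le s(G) \le b(G)$; combined with $b(G)\le \nu_d(G)$ this yields $\nu_d(G)\ge s(G)$ for \emph{every} graph. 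I would also note $\nu_s(G)\le\nu_d(G)$ always holds, so ``$\nu_d(G)\ne\nu_s(G)$'' is the same as ``$\nu_d(G)>\nu_s(G)$''.

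The backward direction is immediate: if $\nu_d(G)>s(G)$, then since $\nu_s(G)\le s(G)$ by Lemma~\ref{nuslemax1} we get $\nu_d(G)>s(G)\ge\nu_s(G)$, hence $\nu_d(G)\ne\nu_s(G)$.

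For the forward direction I would argue by contrapositive. Suppose $\nu_d(G)\le s(G)$; since we already have $\nu_d(G)\ge s(G)$, this forces $\nu_d(G)=s(G)$, and the goal reduces to proving $\nu_s(G)=\nu_d(G)$. Chaining $s(G)=\nu_d(G)\ge b(G)\ge s(G)$ collapses everything to $s(G)=b(G)=\nu_d(G)$, so in particular the maximum defining $s(G)$ attains the value $b(G)=1+\max_{uv}\nu\left(G-N[\{u,v\}]\right)$.

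The main step, and the crux of the argument, is then to inspect the edge $uv$ attaining $s(G)$. Writing $M^\ast=\max_{uv}\nu\left(G-N[\{u,v\}]\right)$, the equality $\nu\left(G-N[\{u,v\}]\right)+\mathds{1}\left[\nu\left(G-N[\{u,v\}]\right)=\nu_s\left(G-N[\{u,v\}]\right)\right]=1+M^\ast$ can only hold if simultaneously $\nu\left(G-N[\{u,v\}]\right)=M^\ast$ and the indicator fires, i.e. $\nu_s\left(G-N[\{u,v\}]\right)=\nu\left(G-N[\{u,v\}]\right)=M^\ast$. The decisive construction is to take a \emph{maximum induced} matching of $G-N[\{u,v\}]$ (of size $M^\ast$) and adjoin the edge $uv$: because the entire closed neighborhood of $\{u,v\}$ was deleted, $uv$ is isolated from that matching in $G$, so the union is an induced matching of $G$ of size $M^\ast+1=b(G)$. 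This gives $\nu_s(G)\ge b(G)$, and with $\nu_s(G)\le b(G)$ from Lemma~\ref{inequalitynusle1maxnud} we conclude $\nu_s(G)=b(G)=\nu_d(G)$, completing the contrapositive. The one subtlety I would flag is precisely the verification that the indicator being $1$ supplies an \emph{induced} (not merely disconnected) matching of the required size; this is exactly where the strengthened bound $s(G)$ over the weaker $b(G)$ earns its keep.
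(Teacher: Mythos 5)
Your proof is correct and takes essentially the same approach as the paper: both rest on Lemmas~\ref{inequalitynusle1maxnud} and~\ref{nuslemax1}, and your decisive step---adjoining the maximizing edge to a maximum induced matching of $G-N[\{u',v'\}]$ when the indicator fires, producing an induced matching of size $s(G)$---is exactly the paper's first case. Your contrapositive framing, where the chain $s(G)=\nu_d(G)\ge b(G)\ge s(G)$ forces $s(G)=b(G)$, merely absorbs the paper's second case ($s(G)<1+\max_{uv\in E(G)}\nu\left(G-N[\{u,v\}]\right)$) and is a reorganization rather than a genuinely different argument.
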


\begin{proof}$ $\\
($\Leftarrow$): If $\nu_d(G)>s(G)$, then $\nu_d(G)>s(G)\ge \nu_s(G)$ by Lemma~\ref{nuslemax1}.	\\
\newline
($\Rightarrow$): If $\nu_d(G)\not=\nu_s(G)$, we consider two cases:

\begin{enumerate}
\item If $s(G)= \max\limits_{uv\in E(G)} \nu\left(G- N[ \{u,v\} ]\right) +1$, there exists an edge $u'v'$ such that
\begin{align*}
&\nu\left(G- N[ \{u',v'\} ]\right) +\mathds{1}\Big[\nu\left(G-N[ \{u',v'\} ]\right)=\nu_s\left(G-N[ \{u',v'\} ]\right) \Big] \\
&= \max\limits_{uv\in E(G)} \nu\left(G- N[ \{u,v\} ]\right) +1.
\end{align*}
Note that $\nu\left(G-N[ \{u',v'\} ]\right)=\nu_s\left(G-N[ \{u',v'\} ]\right)$. The edge $u'v'$ together with a maximum induced matching of $G-N[ \{u',v'\} ]$ yields an induced matching of size $s(G)$.
Lemma~\ref{nuslemax1} then implies $\nu_s(G)=s(G)$. By assumption $\nu_d(G)\not=\nu_s(G)$ and hence $\nu_d(G)>\nu_s(G)= s(G)$.\\

\item If $s(G)< \max\limits_{uv\in E(G)} \nu\left(G- N[ \{u,v\} ]\right) +1$, then Lemma~\ref{inequalitynusle1maxnud} implies:
\begin{align*}\nu_d(G)\ge \max\limits_{uv\in E(G)} \nu\left(G- N[ \{u,v\} ]\right) +1 >s(G)
\end{align*}
\end{enumerate}	
This concludes the proof of Lemma \ref{lemmabound}.
\end{proof}

By Lemma~\ref{lemmabound}, verifying that a given graph $G$ is a \NOi-instance of the decision problem $\nu_d=\nu_s$ can be done efficiently by giving a disconnected matching of size larger than $s(G)$ as a certificate. Hence the decision problem $\nu_d=\nu_s$ is indeed in \coNP. 

\section{\texorpdfstring{$\nu_d = \nu_s$ for bounded degree?}{Disconnected equals induced for bounded degree}}

Both the disconnected matching number (see Corollary \ref{discNPc}) and induced matching number are \NPH\text{} to compute on graphs with bounded degree \cite{np_completeness_of_some_generalizations_of_the_maximum_matching_problem}.
However, we show that one can decide in polynomial time if those two numbers are equal in graphs with bounded degree, captured in the following theorem. 
\begin{theorem}\label{theorempoly} Deciding if the disconnected matching number and induced matching number are equal 
can be done in polynomial time
	for graphs with bounded degree.
\end{theorem}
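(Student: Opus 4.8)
The plan is to reduce the equality test to a few polynomially computable quantities and then to isolate the single genuinely hard regime. Fix the degree bound $d$, so $\Delta(G)\le d$ is constant. Two quantities will drive the algorithm: $s(G)$ from Lemma~\ref{nuslemax1}, which is computable in polynomial time (as noted after that lemma, via the Kobler--Rotics test for $\nu=\nu_s$ on the subgraphs $G-N[\{u,v\}]$~\cite{koblerrotics2003}), and $L:=1+\max_{uv\in E(G)}\nu\big(G-N[\{u,v\}]\big)$, computable by one maximum-matching computation per edge. By Lemma~\ref{inequalitynusle1maxnud} we always have $\nu_s(G)\le L\le\nu_d(G)$, and moreover $\nu_s(G)\le s(G)\le L$. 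Before invoking the degree bound at all, I would dispose of the disconnected case: if $G$ has at least two nontrivial components, then a maximum matching already saturates edges in two of them, so $\nu_d(G)=\nu(G)$, whence $\nu_d(G)=\nu_s(G)$ if and only if $\nu(G)=\nu_s(G)$, which is decidable in polynomial time by Kobler--Rotics~\cite{koblerrotics2003}. From now on assume $G$ is connected (up to isolated vertices).

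For connected $G$ I would branch on the size of $\nu_s(G)$ relative to $2\Delta(G)$, a test that is itself polynomial since $2\Delta(G)\le 2d$ is constant: brute-force enumeration of all $O(m^{2d})$ edge-sets of size $2d$ decides whether $\nu_s(G)\ge 2d\ (\ge 2\Delta(G))$, and when $\nu_s(G)<2d$ the same enumeration returns $\nu_s(G)$ exactly. In the \emph{small} case $\nu_s(G)<2\Delta(G)$ the decisive observation is a greedy bound: any matching of size $k$ in a graph of maximum degree $\Delta$ contains an induced matching of size at least $k/(2\Delta)$, since selecting an edge $uv$ and deleting the at most $2\Delta$ matching edges meeting $N[\{u,v\}]$ keeps the selection induced. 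Applied to a maximum disconnected matching this yields $\nu_d(G)\le 2\Delta(G)\,\nu_s(G)<4d^2$, a constant, so $\nu_d(G)$ can be computed outright by enumerating all edge-sets of size at most $4d^2$ and keeping those that form a matching inducing at least two components; comparing with $\nu_s(G)$ finishes this case in time $O(m^{4d^2})$.

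The remaining, and main, case is connected $G$ with $\nu_s(G)\ge 2\Delta(G)$, where $\nu_d(G)$ may be unbounded and brute force fails. Here the plan rests on a structural lemma asserting that in this regime $\nu_d(G)=L$; equivalently, some maximum disconnected matching has its smaller side consisting of a single edge. Granting this, the identity $\nu_d(G)=L$ together with $\nu_s(G)\le s(G)\le L$ reduces the problem to a polynomial check: one verifies that $\nu_d(G)=\nu_s(G)$ exactly when $s(G)=L$. Indeed, an edge attaining $s(G)=L$ must satisfy $\nu(G-N[\{u,v\}])=\nu_s(G-N[\{u,v\}])=L-1$, and adjoining it to a maximum induced matching of $G-N[\{u,v\}]$ exhibits an induced matching of $G$ of size $L$, forcing $\nu_s(G)=L=\nu_d(G)$; the converse is immediate from $\nu_s(G)\le s(G)\le L$. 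One may equivalently route this through Lemma~\ref{lemmabound}, since $\nu_d(G)=L$ makes the condition $\nu_d(G)\le s(G)$ equivalent to $s(G)=L$.

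I expect the structural lemma $\nu_d(G)=L$ for connected graphs with $\nu_s(G)\ge 2\Delta(G)$ to be the main obstacle, and the hypothesis $\nu_s(G)\ge 2\Delta(G)$ to be used essentially in its proof: without it the claim is false, as a disjoint union of many copies of $P_4$ already shows (it is disconnected, with $\nu_s$ only half of $\nu_d$, and is handled by the first paragraph instead), and indeed computing $\nu_d$ is \NPH\ even for subcubic graphs by Corollary~\ref{discNPc}, so the role of the degree-to-induced-matching gap is precisely to exclude those hard instances. The crux is to show that when two sides of an optimal disconnected matching are both large, the presence of an induced matching of size at least $2\Delta(G)$ permits a local exchange that shrinks the smaller side to a single edge without losing any edge overall; bounded degree is what controls how many matching edges are disturbed by deleting a closed neighborhood, which is what makes such a rebalancing feasible.
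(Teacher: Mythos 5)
Your first two cases are sound and essentially coincide with the paper's proof: the disconnected case reduces to the Kobler--Rotics test, and the case $\nu_s(G)<2\Delta(G)$ is settled by brute force (where you even supply a justification the paper leaves implicit, namely $\nu_d(G)\le(2\Delta(G)-1)\,\nu_s(G)$, which guarantees $\nu_d$ is bounded by a constant in that case). The problem is the main case. Your treatment of connected graphs with $\nu_s(G)\ge 2\Delta(G)$ is not a proof: you explicitly ``grant'' the structural lemma $\nu_d(G)=L$, with $L=1+\max_{uv\in E(G)}\nu\left(G-N[\{u,v\}]\right)$, and defer its proof. More seriously, that lemma is \emph{false}, so no proof of it can ever be supplied.

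Here is a counterexample. Let $k\ge 12$ be even, and let $H_1,H_2$ be two copies of the prism $C_k\times K_2$, each with one rung deleted; in $H_i$ call the endpoints of the deleted rung $u_i$ and $w_i$ (they now have degree $2$). Add two new adjacent vertices $x,y$, join $x$ to $u_1,u_2$ and $y$ to $w_1,w_2$. The resulting graph $G$ is connected, cubic, triangle-free, and has $4k+2$ vertices. The $k/2$ even-indexed rungs of $H_1$ form an induced matching, so $\nu_s(G)\ge k/2\ge 6=2\Delta(G)$; thus $G$ lies in your main regime. Each $H_i$ has a perfect matching using only cycle edges, and the union of these two matchings has size $2k$, saturates everything except $x$ and $y$, and induces $H_1\cup H_2$, which is disconnected; hence $\nu_d(G)\ge 2k$. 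On the other hand, since $G$ is cubic and triangle-free, $\left|N[\{u,v\}]\right|=6$ for \emph{every} edge $uv$, so $\nu\left(G-N[\{u,v\}]\right)\le (4k+2-6)/2=2k-2$ and therefore $L\le 2k-1<\nu_d(G)$. Intuitively, isolating any single edge wastes the four vertices of $N(\{u,v\})\setminus\{u,v\}$, while the balanced split across the $2$-cut $\{x,y\}$ wastes only two. This kills the claimed identity, and with it the key direction of your test: the implication ``$s(G)=L\Rightarrow\nu_d(G)=\nu_s(G)$'' needs precisely $\nu_d(G)\le L$, so the correctness of your algorithm in the main case is unestablished. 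The paper takes a genuinely different route, which is what you would need here: in the regime $\nu_s(G)\ge 2\Delta(G)$ it proves (Theorem~\ref{theoremstructure}, via the exchange arguments of Lemmas~\ref{lemmanoremainingedges} and~\ref{lemmatriangleordegree1}) that $\nu_s(G)=\nu_d(G)$ holds if and only if $G$ is a Cameron-Walker-graph, and Cameron-Walker-graphs are recognizable in polynomial time; no identity for, or computation of, $\nu_d$ itself is ever required.
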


To prove Theorem \ref{theorempoly},
we characterize graphs with equal induced matching number and disconnected matching number when the induced matching number is large enough compared to the maximum degree (see Theorem \ref{theoremstructure}). 
Our result is similar to the result from Cameron and Walker \cite{cameronwalker2005}, where they showed, that a connected graph has identical matching number and induced matching number, if there exists sets $A,B,C,D$ as in Definition \ref{defcameronwalkergraph}.

\begin{definition}\label{defcameronwalkergraph}A connected graph $G$ is a Cameron-Walker-graph, if $V(G)$ is the disjoint union of four sets $A,B,C,D$ where 
\begin{itemize}
\item all vertices in $A$ have degree 1,    \item if $G[A\cup B]$ is non-empty, it is 1-regular and bipartite with partition $A\mathbin{\dot{\cup}}B$,
    \item $G[B\cup C]$ is bipartite with partition $B\mathbin{\dot{\cup}}C$,
    \item if $G[D]$ is non-empty, it is 1-regular and for every edge $xy\in G[D]$ exists $z\in C$ s.t. $N_G(x)\setminus \{y\}=N_G(y)\setminus \{x\}=\{z\}$ (see Figure \ref{decomp}).
           \end{itemize}
\end{definition}
\begin{figure}[htb!]
	\centering
	\begin{tikzpicture}[rotate=90, xscale=0.5]
	\tikzstyle{point}=[draw,circle,inner sep=0.cm, minimum size=1mm, fill=black]
	\tikzstyle{point2}=[draw,circle,inner sep=0.cm, minimum size=0.333mm, fill=black]

	\foreach \j in {3,4}{
		\node[point] (u1) at (0.5,\j-0.25) [label=below:] {};
		\node[point] (u2) at (0.5,\j+0.25) [label=above:] {};
		\node[point] (v1) at (2,\j) [label=below:] {};
		\draw (u1) -- (u2);
		\draw (u1) -- (v1);
		\draw (u2) -- (v1);
	}

	\node[point] at (2,1) [label=below:] {};
	\node[point] at (2,2) [label=below:] {};
	\node[point2] at (2,0.25) [label=above:] {};
	\node[point2] at (2,0) [label=above:] {};
	\node[point2] at (2, -0.25) [label=above:] {};

	\node[point2] at (0.5,1.75) [label=above:] {};
	\node[point2] at (0.5,2) [label=above:] {};
	\node[point2] at (0.5, 2.25) [label=above:] {};
	
	\draw (1.5,-0.5) rectangle ++(1,5) node[below]{};
	\node at (2,-1) { $C$};
	
	\draw (0,-0.5) rectangle ++(1,5) node[below]{};
	\node at (0.5,-1) { $D$};
	
	\begin{scope}[shift={(-1,0)}]
	\foreach \j in {0,2,3,4}{		
		\node[point] (u1) at (4,\j) [label=below:] {};
		\node[point] (u2) at (5.5,\j) [label=above:] {};
		\draw (u1) -- (u2);
	}
	
	\node[point2] at (4,0.75) [label=above:] {};
	\node[point2] at (4,1) [label=above:] {};
	\node[point2] at (4, 1.25) [label=above:] {};
	\node[point]  (a4) at (4,0) [label=above:] {};	
	
	\draw (3.5,-0.5) rectangle ++(1,5) node[below]{};
	\node at (4,-1) { $B$};
	
	\begin{scope}[shift={(-0.5,0)}]
	
	\draw (5.5,-0.5) rectangle ++(1,5) node[midway]{};
	\node[point2] at (6,0.75) [label=above:] {};
	\node[point2] at (6,1) [label=above:] {};
	\node[point2] at (6, 1.25) [label=above:] {};
	\node at (6,-1) { $A$};
	\end{scope}
	
	\end{scope}

	\end{tikzpicture}
	\caption{Cameron-Walker-graph}
	\label{decomp}
\end{figure}
Cameron-Walker-graphs can be recognized in polynomial time by searching for triangles and vertices of degree $1$. 
If $G$ is disconnected and contains at least two non-trivial components, then $\nu_{d}(G)=\nu(G)$ and hence $\nu_d(G)=\nu_s(G)$ if and only if every non-trivial component of $G$ is a Cameron-Walker-graph. 
Therefore, it remains to show Theorem \ref{theorempoly} for connected graphs.
We introduce the following characterization.

\begin{theorem}\label{theoremstructure} If $G$ is a connected graph with $\nu_s(G)\ge 2 \Delta(G)$, then $\nu_s(G)=\nu_{d}(G)$ if and only if $G$ is a Cameron-Walker-graph.
\end{theorem}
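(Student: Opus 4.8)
The plan is to route both directions through the Cameron--Walker characterization \cite{cameronwalker2005}, which states that a \emph{connected} graph $G$ satisfies $\nu(G)=\nu_s(G)$ if and only if $G$ is a Cameron--Walker graph. The leverage for this comes from Equation~\eqref{eq:nu_inequalities}, which gives $\nu(G)\ge\nu_d(G)\ge\nu_s(G)$ for every graph, so the parameter $\nu_d$ is squeezed between $\nu$ and $\nu_s$. For the backward direction I would simply observe that if $G$ is a Cameron--Walker graph, then $\nu(G)=\nu_s(G)$ by \cite{cameronwalker2005}, whence the chain $\nu(G)\ge\nu_d(G)\ge\nu_s(G)$ collapses and forces $\nu_d(G)=\nu_s(G)$. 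Note that this direction needs neither connectivity-specific work nor the hypothesis $\nu_s(G)\ge 2\Delta(G)$.

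For the forward direction, the strategy is to show that $\nu_s(G)=\nu_d(G)$ together with $\nu_s(G)\ge 2\Delta(G)$ forces $\nu(G)=\nu_s(G)$; applying \cite{cameronwalker2005} to the connected graph $G$ then immediately gives that $G$ is a Cameron--Walker graph. So I would write $\ell:=\nu_s(G)=\nu_d(G)$ and argue by contradiction, assuming $\nu(G)>\ell$. Fix a maximum induced matching $M_s$, so $|M_s|=\ell$ and, because $M_s$ is \emph{induced}, $G[V(M_s)]$ is $1$-regular and therefore has exactly $\ell$ components. Since $|M_s|=\ell<\nu(G)$, the matching $M_s$ is not maximum, so by Berge's theorem there is an $M_s$-augmenting path $P$; let $x$ and $y$ be its two distinct, $M_s$-unsaturated endpoints. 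Augmenting yields $M':=M_s\triangle P$ with $|M'|=\ell+1$, and the crucial observation is that every interior vertex of $P$ is already saturated by $M_s$, so $V(M')=V(M_s)\cup\{x,y\}$ --- augmenting only adds the two endpoints to the saturated set.

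The heart of the argument is then a degree count on $G[V(M')]=G[V(M_s)\cup\{x,y\}]$. Starting from the $\ell$ components of $G[V(M_s)]$, inserting the vertex $x$, whose degree is at most $\Delta(G)$, can merge at most $\Delta(G)$ of these components into a single one; inserting $y$ can merge at most $\Delta(G)$ further components. Hence $G[V(M')]$ has at least $\ell-2\Delta(G)+2$ components, and since $\ell\ge 2\Delta(G)$ this is at least $2$. Thus $M'$ is a disconnected matching of size $\ell+1>\ell=\nu_d(G)$, contradicting the definition of $\nu_d(G)$. This contradiction gives $\nu(G)=\nu_s(G)$, completing the reduction to \cite{cameronwalker2005}.

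The main obstacle, and the step I would scrutinize most, is precisely this final component count, since it is where the threshold $\nu_s(G)\ge 2\Delta(G)$ is consumed: the only vertices that can reconnect the $\ell$ edge-components of the induced matching are the two augmenting endpoints, each can destroy at most $\Delta(G)$ of them, and the resulting slack $\ell-2\Delta(G)+2\ge 2$ is exactly tight. The two structural facts this rests on both deserve care --- that an augmenting path enlarges the saturated set by exactly its two endpoints, and that $G[V(M_s)]$ is genuinely $1$-regular with $\ell$ components by virtue of $M_s$ being induced --- but once they are in place the remaining content is entirely off-loaded onto the Cameron--Walker characterization.
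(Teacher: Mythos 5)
Your proposal takes a genuinely different route from the paper's, and its core argument is correct. The paper proves both directions from scratch: for the forward direction it constructs the sets $A,B,C,D$ explicitly from a maximum induced matching $M$, using Lemma~\ref{lemmatriangleordegree1} (every edge of $M$ either lies in a triangle with both ends of degree $2$ or has an end of degree $1$) and Lemma~\ref{lemmanoremainingedges} ($G-G[M]$ contains no edge); for the backward direction it argues directly that $\nu_s\le\nu_d\le\nu\le\nu(G-D)+\frac{|D|}{2}=|A|+\frac{|D|}{2}$ while $E_G(A,B)\cup E(G[D])$ is an induced matching of that size. You instead off-load all structural work onto the Cameron--Walker characterization and prove only the quantitative statement that $\nu_s(G)=\nu_d(G)\ge 2\Delta(G)$ forces $\nu(G)=\nu_s(G)$. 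Your augmenting-path argument for this is sound: the interior vertices of a Berge augmenting path are saturated, so $V(M')=V(M_s)\cup\{x,y\}$; inserting one vertex of degree at most $\Delta(G)$ into an induced subgraph reduces the number of components by at most $\Delta(G)-1$; hence $G[V(M')]$ has at least $\nu_s(G)-2\Delta(G)+2\ge 2$ components, making $M'$ a disconnected matching of size $\nu_d(G)+1$, a contradiction. What your route buys is brevity and a clean intermediate fact the paper never isolates; what it costs is reliance on the full ``if and only if'' of \cite{cameronwalker2005} --- the paper's prose invokes only the easy ``if'' direction and re-derives the hard structural direction itself under its stronger hypothesis --- so your proof is only as solid as the exact match between Definition~\ref{defcameronwalkergraph} and the cited characterization.

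There is, however, one genuine error in a side remark: the backward direction \emph{does} need the hypothesis $\nu_s(G)\ge 2\Delta(G)$ (or at least $\nu_s(G)\ne 1$). The inequality $\nu_d(G)\ge\nu_s(G)$ you extract from Equation~\ref{eq:nu_inequalities} is not valid for every graph: an induced matching consisting of a single edge induces $K_2$, which is connected, hence is not a disconnected matching. Indeed, $K_2$ and every star are connected Cameron--Walker graphs with $\nu(G)=\nu_s(G)=1$ but $\nu_d(G)=0$, so for those graphs the conclusion $\nu_s(G)=\nu_d(G)$ is simply false, and your claimed squeeze breaks at the step $\nu_d(G)\ge\nu_s(G)$. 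That inequality holds once $\nu_s(G)\ge 2$, which the hypothesis guarantees whenever $\Delta(G)\ge 1$ (and the remaining case $\Delta(G)=0$, i.e.\ $G=K_1$, is trivial). So your proof of the theorem as stated goes through, but the parenthetical assertion that this direction needs neither connectivity nor the degree hypothesis must be removed.
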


From Theorem \ref{theoremstructure}, we deduce the following polynomial time algorithm.
\begin{proof}[Proof of Theorem \ref{theorempoly}]
Let $G$ be a graph with bounded maximum degree.
We check by brute force, if $G$ has an induced matching of size $2\Delta(G)$. If yes, according to Lemma \ref{theoremstructure} it is enough to test if $G$ is a Cameron-Walker-graph, which can be done in polynomial time.
If no, we calculate the induced matching number and disconnected matching number by brute force and compare them.
\end{proof}

It remains to proof Theorem \ref{theoremstructure}. 
We first introduce and show the following two lemmas -- Lemma \ref{lemmanoremainingedges} and Lemma \ref{lemmatriangleordegree1}.

\begin{lemma}\label{lemmanoremainingedges}
	If $G$ is a graph with $\nu_d(G)=\nu_s(G)>2 \Delta(G)-2 $ and $M$ is a maximum induced matching in $G$, then $G-G[M]$ contains no edge.
\end{lemma}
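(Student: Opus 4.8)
The plan is to argue by contradiction. Suppose $G - G[M]$ contains an edge $xy$; by definition of $G[M]=G[V(M)]$ this means both $x$ and $y$ are unsaturated by $M$. I would then exhibit a disconnected matching of size $\nu_s(G)+1$, which contradicts $\nu_d(G)=\nu_s(G)$ and forces $G-G[M]$ to be edgeless.

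First I would form $M'\coloneqq M\cup\{xy\}$. Since $x,y\notin V(M)$, this is a matching with $|M'|=|M|+1=\nu_s(G)+1$. The only nontrivial point is to verify that $M'$ is disconnected, i.e.\ that $G[V(M')]=G[V(M)\cup\{x,y\}]$ has at least two components. The key step is a counting argument. Because $M$ is an induced matching, $G[V(M)]$ is $1$-regular and therefore consists of exactly $\nu_s(G)$ components, each being a single edge of $M$. Adjoining $x$ and $y$ adds the edge $xy$ together with any edges joining $\{x,y\}$ to $V(M)$; all of these lie in the single component of $G[V(M')]$ that contains $x$ and $y$. The matching edges absorbed into this component are exactly those whose endpoints are neighbours of $x$ or $y$, and distinct absorbed components require distinct such neighbours, so their number is at most $(\deg(x)-1)+(\deg(y)-1)\le 2\Delta(G)-2$ (one neighbour of each of $x$ and $y$ is used up by the edge $xy$ itself). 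Hence $G[V(M')]$ has at least
\begin{align*}
\big(\nu_s(G)-(2\Delta(G)-2)\big)+1 = \nu_s(G)-2\Delta(G)+3
\end{align*}
components, the ``$+1$'' accounting for the merged component containing $x$ and $y$.

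To finish, I would invoke the hypothesis $\nu_s(G)>2\Delta(G)-2$, i.e.\ $\nu_s(G)\ge 2\Delta(G)-1$, which makes the displayed quantity at least $2$; thus $M'$ is disconnected. Then $\nu_d(G)\ge |M'|=\nu_s(G)+1>\nu_s(G)=\nu_d(G)$, a contradiction, so no such edge $xy$ exists. I do not anticipate a real obstacle; the only delicate point is the bookkeeping of how many matching components the new edge can merge, and in particular using the sharp bound $2\Delta(G)-2$ rather than $2\Delta(G)$, since it is precisely this that lets the strict inequality $\nu_s(G)>2\Delta(G)-2$ close the argument.
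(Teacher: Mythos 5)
Your proposal is correct and follows essentially the same argument as the paper: add the edge $xy$ to $M$, bound the number of matching edges meeting $N_G(\{x,y\})\setminus\{x,y\}$ by $2\Delta(G)-2$, and use $|M|>2\Delta(G)-2$ to conclude the enlarged matching is disconnected, contradicting $\nu_d(G)=\nu_s(G)$. The paper states this in one line; your component-counting bookkeeping just spells out the same bound explicitly.
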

\begin{proof}
	Let $M$ be a maximum induced matching. Assume $G-G[M]$ contains an edge $uv$. Since $|N_G(\{u,v\})\setminus \{u,v\}|\le 2\Delta(G)-2$ and $|M|>2\Delta(G)-2$, adding $uv$ to $M$ yields a strictly larger disconnected matching, contradicting  $\nu_d(G)=\nu_s(G)$.
\end{proof}	

Kobler and Rotics \cite{koblerrotics2003} argued, that in a graph with equal matching number and induced matching number, each edge of a maximum induced matching either contains a vertex of degree one or the edge itself is contained in a triangle. The same is true for a maximum induced matching in a graph $G$ with $\nu_s(G)\ge 2 \Delta(G)$.

\begin{lemma}\label{lemmatriangleordegree1}
	Let $G$ be a graph with $\nu_s(G)\ge 2\Delta(G)$ and $\nu_s(G)=\nu_{d}(G)\ge 2$. Let $M$ be a maximum induced matching in $G$. For any edge $xy\in M$, there are two possibilities:
	\begin{enumerate}
		\item $xy$ lies in a triangle and $d_G(x)=d_G(y)=2$ or
		\item $d_G(x)=1$ or $d_G(y)=1$
	\end{enumerate}
\end{lemma}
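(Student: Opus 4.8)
I would argue the contrapositive edge by edge: fix $xy\in M$ and assume we are \emph{not} in case~2, i.e.\ $d(x)\ge 2$ and $d(y)\ge 2$; the goal is then to force case~1, namely that $xy$ lies in a triangle and $d(x)=d(y)=2$. Throughout I write $|M|=\nu_s(G)=\nu_d(G)\ge 2\Delta(G)$. The one structural fact I would use repeatedly is that \emph{every neighbour of $x$ other than $y$ (and symmetrically for $y$) is unsaturated by $M$}: if such a neighbour $x'\neq y$ lay in $V(M)$, then $x$ would have two neighbours, $y$ and $x'$, inside $V(M)$, contradicting that $G[M]$ is $1$-regular. This is the same mechanism used for Lemma~\ref{lemmanoremainingedges}.

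The engine of the proof is a counting step that I would state once and reuse. Suppose $M'=(M\setminus\{xy\})\cup\{e,f\}$ is again a matching of size $|M|+1$, where the two new edges $e,f$ introduce only unsaturated vertices $p,q\notin V(M)$, each of which is adjacent to $x$ or $y$. Because $M$ is an induced matching, every edge $ab\in M\setminus\{xy\}$ has no neighbour in $V(M)\setminus\{a,b\}$, so in $G[V(M')]$ its only possible extra neighbours are $p$ and $q$. Since $p$ and $q$ are each already adjacent to a vertex of $\{x,y\}$, the number of edges of $M\setminus\{xy\}$ meeting $N(p)\cup N(q)$ is at most $(d(p)-1)+(d(q)-1)\le 2\Delta(G)-2$, whereas $M\setminus\{xy\}$ contains $|M|-1\ge 2\Delta(G)-1$ edges. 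Hence at least one edge $ab\in M\setminus\{xy\}$ is untouched and therefore an isolated component of $G[V(M')]$, so $M'$ is a disconnected matching of size $|M|+1>\nu_d(G)$ — a contradiction.

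With this tool the two steps are short. \emph{Triangle.} If $x$ and $y$ had no common neighbour, pick a neighbour $x'\neq y$ of $x$ and a neighbour $y'\neq x$ of $y$; both are unsaturated, and $x'\neq y'$ (a common value would be a common neighbour), so $\{xx',yy'\}$ are disjoint and the counting step with $(p,q)=(x',y')$ applies, a contradiction. Thus $x$ and $y$ share a neighbour $z$, which is unsaturated, and $xy$ lies in a triangle. \emph{Degrees.} If $d(x)\ge 3$, choose a neighbour $x''\notin\{y,z\}$ of $x$ (unsaturated) and set $M'=(M\setminus\{xy\})\cup\{xx'',yz\}$; the only new vertices are $x''$ and $z$, each adjacent to a vertex of $\{x,y\}$, so the counting step with $(p,q)=(x'',z)$ again produces a disconnected matching of size $|M|+1$, a contradiction. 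By symmetry $d(y)\le 2$, and since the triangle already forces $d(x),d(y)\ge 2$, we conclude $d(x)=d(y)=2$, which is exactly case~1 of Lemma~\ref{lemmatriangleordegree1}.

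I expect the main obstacle to be controlling the number of connected components in the counting step: augmenting an induced matching by two edges could a priori fuse everything into a single component, and the entire argument hinges on showing that the two new vertices cannot ``absorb'' \emph{all} of the remaining matching edges. The hypothesis $\nu_s(G)\ge 2\Delta(G)$ is precisely calibrated for this — the at most $2\Delta(G)-2$ matching edges touched by the two new vertices stay strictly below the $|M|-1\ge 2\Delta(G)-1$ edges available, guaranteeing a surviving isolated edge that witnesses disconnectedness. The remaining care is purely bookkeeping: verifying in each application that the four relevant vertices are distinct and that the newly introduced vertices are genuinely unsaturated, both of which follow immediately from the induced-matching property noted at the outset.
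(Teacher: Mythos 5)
Your proof is correct and follows essentially the same approach as the paper: both swap the edge $xy$ for two edges to unsaturated neighbours and use the bound $\nu_s(G)\ge 2\Delta(G)$ to find a matching edge untouched by the two new vertices, witnessing that the enlarged matching is disconnected and contradicting $\nu_d(G)=\nu_s(G)$. The only difference is organizational — you argue the contrapositive (first forcing the triangle, then the degrees), while the paper splits into the triangle and non-triangle cases — but the augmentation and counting are the same.
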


\begin{proof}Let $xy\in M$. We distinguish the cases that $xy$ lies in a triangle or not.
	\begin{enumerate}
		\item Let $xy$ lie in a triangle $xyz$. 
     		Assume that $x$ has degree at least $3$ and let $u\in N_G(x)\setminus \{y,z\}$. Note that both $u$ and $z$ are not saturated by $M$.
      Hence removing the edge $xy$ from the matching $M$ and adding the edges $xu$ and $yz$ yields a new strictly larger matching, say $M'$.
    To receive a contradiction, we show that $M'$ is disconnected. Since $M$ is induced, it is enough to show that not every edge in $M\setminus \{xy\}$ intersects the neighborhood of $z$ or $u$.
    	\begin{align*}
		&\left|N_G(\{z,u\})\setminus \{ x,y \}\right| \\
		&\le \left|N_G(z)\setminus \{x,y\}\right| +  \left| N_G(u)\setminus \{x\} \right| \\
		&\le \Delta(G)-2+\Delta(G)-1\\
		&=2\Delta(G)-3
		\end{align*}
But 
$|M\setminus \{xy\}|=\nu_s(G)-1>2\Delta(G)-3$, implying $M'$ is disconnected, contradiction. 
 Therefore $x$ and $y$ have degree $2$.

		\item Let $xy \in M$ be not contained in a triangle of $G$.
		Assume $d_G(x)\ge 2$ and $d_G(y)\ge 2$. Then there exists $z\in N_G(x)\setminus \{y\}$ and $z'\in N_G(y)\setminus\{x\}$. $z\not=z'$, otherwise $xyz$ would be a triangle containing the edge $xy$. Since $M$ is induced, neither $z$ nor $z'$ is saturated by $M$. Removing the edge $xy$ and adding the edges $xz$ and $yz'$ yields a strictly larger matching $M'$.
        Analogously, $\left|N_G(\{z,z'\})\setminus \{ x,y \}\right|\le 2\Delta(G)-2$ and 
                $|M\setminus \{xy\}|>2\Delta(G)-2$ imply that the matching $M'$ is
                is disconnected, contradiction. Hence $d_G(x)=1$ or $d_G(y)=1$.
	\end{enumerate}
	This completes the proof of Lemma \ref{lemmatriangleordegree1}.
\end{proof}

\begin{proof}[Proof of Theorem~\ref{theoremstructure}] Let $G$ be a connected graph with $\nu_s(G)\ge 2\Delta(G)$.\\
	$(\Rightarrow)$ Let $\nu_d(G)=\nu_s(G)$ and let $M$ be a maximum induced matching in $G$.
 If $n(G)=1$, 
the existence of such a decomposition is clear. $G$ can't be isomorphic to a $P_2$. 
So let $\Delta(G)\ge 2$ and hence $\nu_s(G)\ge 4$.\\
	Let $A$ be the set of all vertices of degree $1$ saturated by $M$. Let $D$ be all vertices saturated by $M$ contained in a triangle. Choose $B=N_G(A)$ and hence $E_G(A,B)\subseteq M$. $B$ is independent since $M$ is induced. By Lemma~\ref{lemmatriangleordegree1}, $A\cap D=\emptyset$. Let $C=V(G)\setminus (A\cup B \cup D)$ be all remaining vertices. By Lemma~\ref{lemmatriangleordegree1} any edge in $M$ is either an edge of a triangle or has one vertex of degree $1$, implying $M= E_G(A,B) \cup E(G[D])$. 
	Since $G-G[M]$ contains no edge by Lemma~\ref{lemmanoremainingedges}, $C$ is an independent set.\\
	By choice of $D$, any edge $xy$ in $G[D]$ lies in a triangle $xyz$. Since $z$ is not saturated by the induced matching $M$, $z$ lies in $C$. By Lemma~\ref{lemmatriangleordegree1} $d_G(x)=d_G(y)=2$, hence $\{z\}= N_G(x)\setminus \{y\}=N_G(y)\setminus \{x \}$ and $G[D]$ is 1-regular.\\
	\newline
	$(\Leftarrow)$ Let $V(G)$ be the disjoint union of four sets $A,B,C,D$ with the corresponding properties. $G-D$ is bipartite with partition $(A\cup C) \mathbin{\dot{\cup}}B$ and hence $\nu(G-D)\le |B|=|A|$. Since any matching can only contain one edge of each triangle, it follows that $\nu_s(G)\le \nu_d(G)\le \nu(G)\le \nu(G-D)+\frac{|D|}{2}=|A|+\frac{|D|}{2}$. On the other hand, $E_G(A,B)\cup E(G[D])$ is an induced matching of size $|A|+\frac{|D|}{2}$, implying $\nu_s(G)=\nu_d(G)$.
	\end{proof}

\section{Sequence of disconnected matching numbers}

One question that may arise is if there is any relation between disconnected matching numbers for general graphs. For instance, given a graph $G$ and two of its disconnected matching numbers $\nu_{d,i}$ and $\nu_{d,i+2}$, 
what do we know about $\nu_{d,i+1}$?
More generally, given finitely many disconnected matching numbers $\nu_{d,i}(G)$, $i\in I\subseteq \mathbb{N}$, can we approximate the value of $\nu_{d,j}(G)$ for some $j\not \in I$? We answer this question negatively. The following theorem implies, that, without further restrictions on the graph, we get only the trivial bounds
$\nu_{d,i}(G) \ge \nu_{d,i+1}(G)$ and, if $G$ has a non-empty $i$-disconnected matching, $\nu_{d,i}(G)\ge i$.

\begin{theorem}\label{theoremmainsequence}
Given a finite non-increasing sequence of integers $\beta_1\ge \beta_2 \ge \cdots \ge \beta_{k-1} \ge \beta_k$ with $\beta_k\ge k$, there exists a graph $G$ with $\nu_{d,i}(G)=\beta_i$ for all $i\in [k]$.
\end{theorem}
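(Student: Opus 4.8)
The plan is to construct, for a given sequence $\beta_1 \ge \cdots \ge \beta_k$ with $\beta_k \ge k$, a single graph $G$ realizing it, proceeding by induction on $k$. First I would record a reduction. Since $\nu_{d,1}(G)=\nu(G)$ and we may need $\nu_{d,2}(G)=\beta_2$ strictly below $\beta_1$, any realizing graph must have the property that \emph{every} maximum matching induces a connected subgraph (otherwise a size-$\nu(G)$ matching with two components would give $\nu_{d,2}=\nu$). In a disjoint union this forces all but one connected piece to carry no matching edge, so it is enough to build a connected graph, and inert or isolated pieces may be attached freely without changing any $\nu_{d,i}$.

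The engine of the construction is a \emph{universal-join} gadget. Given an inner graph $W$, form $Z$ by adding an independent set $X$ of fresh vertices, each joined to every vertex of $W$ (a complete bipartite join between $X$ and $V(W)$), keeping the edges of $W$. The key lemma I would prove is: \emph{if a matching of $Z$ induces at least two components, then no vertex of $X$ is saturated}. Indeed, a saturated $x\in X$ is adjacent to all of $V(W)$, so every saturated vertex lies in the component of $x$, contradicting disconnectedness. Consequently $\nu_{d,c}(Z)=\nu_{d,c}(W)$ for all $c\ge 2$, while for $c=1$ one saturates $V(W)$ through $X$ (taking $|X|\ge |V(W)|$), obtaining a connected matching of size $|V(W)|$; thus $\nu_{d,1}(Z)=|V(W)|$. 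In words, $Z$ sets the top value and passes every level $c\ge 2$ unchanged to $W$. To move between ``level $1$ set by the join'' and ``levels $\ge 2$ inherited'' I would also use disjoint \emph{padding} bicliques: since any nonempty matching of $K_{t,t}$ induces a connected subgraph, attaching $K_{t,t}$ as a separate component contributes exactly one component of value at most $t$, giving $\nu_{d,c}(H\sqcup K_{t,t})=t+\nu_{d,c-1}(H)$ on the feasible range. These two operations are then assembled inductively, using the hypothesis $\beta_k\ge k$ precisely to guarantee that each truncated/shifted target still satisfies its own feasibility bound (its last entry stays at least its index). Lower bounds $\nu_{d,c}(G)\ge \beta_c$ would be certified by explicit $c$-disconnected matchings: a connected one for $c=1$, and for larger $c$ the matchings living inside $W$ together with the padding bicliques.

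The delicate point is that the universal-join gadget inherently satisfies $\nu_{d,1}(Z)=|V(W)|\ge 2\,\nu_{d,2}(W)=2\,\nu_{d,2}(Z)$, so a single join realizes only a \emph{steep} drop ($\beta_1\ge 2\beta_2$). Shallow drops ($\beta_2<\beta_1<2\beta_2$) need an additional primitive, for which I would use a blow-up of a path (a ``thick path'': consecutive independent layers joined completely), whose maximum matching is connected but which loses a controlled number of edges whenever an entire layer is left unsaturated to create a cut; tuning the layer widths realizes drops of tunable, near-linear size and fills the range the join cannot reach. Combining steep-drop joins, shallow-drop thick paths, and biclique shifts according to the differences $\beta_i-\beta_{i+1}$ gives the general construction.

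The hard part will be the matching upper bound $\nu_{d,c}(G)\le \beta_c$ together with covering arbitrary, in particular non-convex, sequences. The join lemma cleanly forces the connector unsaturated, but the real obstacle is to show that in the \emph{assembled} graph, producing an extra component costs exactly the prescribed amount and in the prescribed positional order, rather than the cheapest-cut-first (hence convex) relaxation; this requires arranging the cut positions so that separating the inner regions is forced to proceed in a fixed, nested order. I expect this to reduce to a separator/augmenting-path analysis showing that every $c$-disconnected matching must leave unsaturated a separator whose deletion cannot free more than $\beta_1-\beta_c$ matching edges, in the spirit of the exchange arguments used in Lemma~\ref{lemmadiam3} and Lemma~\ref{lemmatriangleordegree1}. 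Getting this accounting to match the prescribed drops simultaneously at all levels $c\in[k]$, while keeping the graph connected and respecting the floor $\beta_k\ge k$, is where the bulk of the technical work lies.
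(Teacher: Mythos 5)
Your proposal is not a complete proof, and the gap is one you identify yourself: the upper-bound accounting for the assembled graph (which you call ``the bulk of the technical work'') and the shallow-drop primitive are both left open, and these are exactly the parts that require proof. The root of the difficulty is your choice of gadget. Joining an independent set $X$ to an inner graph $W$ pins $\nu_{d,1}(Z)=|V(W)|$ while passing $\nu_{d,c}(Z)=\nu_{d,c}(W)\le \nu(W)\le |V(W)|/2$ for $c\ge 2$, so a single join can only realize drops with $\beta_1\ge 2\beta_2$. You are then driven to invent ``thick path'' gadgets for the range $\beta_2<\beta_1<2\beta_2$ and to worry about forcing cuts to occur in a prescribed nested order simultaneously at all levels; none of that machinery is worked out, and the non-convexity obstruction you raise at the end is a real one for the scheme as described.

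The paper dissolves the whole difficulty with one change of gadget: the attached part is not an independent set but a \emph{clique} $K_{2(\beta_i-\beta_{i+1})}$, which carries its own perfect matching of size exactly the desired drop, so drops of arbitrary (including shallow or zero) size come for free. Concretely, the paper starts with $k-1$ disjoint edges $E_1,\dots,E_{k-1}$ plus a clique of size $2(\beta_k-(k-1))$, so that $\nu_{d,k}=\beta_k$ is witnessed by a perfect matching with $k$ components; then, for $i=k-1,\dots,1$, it adds a clique of size $2(\beta_i-\beta_{i+1})$ joined to every existing vertex except those of $E_1,\dots,E_{i-1}$. The key step is your join lemma in quantified form: if a matching saturates a vertex $v$ of the level-$i$ clique, the only edges of the graph avoiding $N[v]$ are $E_1,\dots,E_{i-1}$, so that matching has at most $i$ components; hence every matching with more than $i$ components lives entirely in the previous graph $G_{i+1}$, while the perfect matching of $G_i$ has exactly $i$ components and size $\beta_i$. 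This yields $\nu_{d,i'}(G_i)=\beta_{i'}$ for all $i'\ge i$ by a short induction, with no steep/shallow case split, no connectivity reduction, and no separator or exchange analysis. Your padding bicliques (whose $t=1$ case is the paper's $P_2$ components) and your join lemma are sound and in the right spirit, but without the clique-with-perfect-matching idea the construction cannot realize arbitrary drops, and the proposal does not close.
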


In order to show Theorem \ref{theoremmainsequence},
we recursively construct a sequence of graphs $G_k,G_{k-1},G_{k-2},\ldots, G_1$, where $G=G_1$ is a valid choice for the graph in Theorem~\ref{theoremmainsequence}.\\
Let $G_k$ contain $k -1$ many $P_2$-components (say $E_1,\ldots, E_{k-1}$) and a clique of size $2 (\beta_k-(k-1))$, yielding a graph with $k$ many components and a perfect matching of size $\beta_k$ (see Figure \ref{Gksequence}).

Let $1\le i<k$ and $G_{i+1}$ be given.
\begin{itemize}
\item We construct $G_i$ by adding a clique of size $2(\beta_i-\beta_{i+1})$ to $G_{i+1}$ and we add all possible edges between the clique and $G_{i+1}- E_1\cup \cdots \cup E_{i-1}$ (see Figure \ref{Gisequence}).\\
Note that if $\beta_{i}=\beta_{i+1}$, then $G_{i+1} = G_i$. 
\end{itemize}

\begin{figure}[htb!]
	\centering
	\begin{tikzpicture}
	\tikzstyle{point}=[draw,circle,inner sep=0.cm, minimum size=1.5mm, fill=black]
	\tikzstyle{point2}=[draw,circle,inner sep=0.cm, minimum size=0.5mm, fill=black]
	
	\foreach \i in {1,2,3}
	{	
		\node[point] at (-\i,0) [label=below:$E_\i$] {};
		\node[point] at (-\i,1) [label=below:] {};
		\draw (-\i,0) -- (-\i,1);
	}
	
	\node[point2] at (-4,0.5) [label=above:] {};
	\node[point2] at (-4.25,0.5) [label=above:] {};
	\node[point2] at (-3.75,0.5) [label=above:] {};
	
	\node[point] at (-5,0) [label=below:$E_{k-1}$] {};
	\node[point] at (-5,1) [label=below:] {};
	\draw (-5,0) -- (-5,1);
	
	\draw (-10,0) rectangle ++(4,1) node[midway]{$K_{2(\beta_k-(k-1))}$};
	\end{tikzpicture}
	\caption{$G_k$}
	\label{Gksequence}
\end{figure}

\begin{figure}[htb!]
	\centering
	\begin{tikzpicture}
	
	\tikzstyle{point}=[draw,circle,inner sep=0.cm, minimum size=1.5mm, fill=black]
	\tikzstyle{point2}=[draw,circle,inner sep=0.cm, minimum size=0.5mm, fill=black]
		
\node[point] at (1,0) [label=below:$E_1$] {};
\node[point] at (1,1) [label=below:] {};
\draw (1,0) -- (1,1);

\node[point2] at (-0,0.5) [label=above:] {};
\node[point2] at (-0.25,0.5) [label=above:] {};
\node[point2] at (0.25,0.5) [label=above:] {};

\node[point] at (-1,0) [label=below:$E_{i-1}$] {};
\node[point] at (-1,1) [label=below:] {};
\draw (-1,0) -- (-1,1);

\node[point] at (-2,0) [label=below:$E_{i}$] {};
\node[point] at (-2,1) [label=below:] {};
\draw (-2,0) -- (-2,1);

\node[point] at (-3,0) [label=below:$E_{i+1}$] {};
\node[point] at (-3,1) [label=below:] {};
\draw (-3,0) -- (-3,1);

\node[point2] at (-4,0.5) [label=above:] {};
\node[point2] at (-4.25,0.5) [label=above:] {};
\node[point2] at (-3.75,0.5) [label=above:] {};

\node[point] at (-5,0) [label=below:$E_{k-1}$] {};
\node[point] at (-5,1) [label=below:] {};
\draw (-5,0) -- (-5,1);

	\draw (-8.5,-0.6) rectangle ++(6,2.2) ;
	\node at (-7,0.5) [label=left:] {};
	
	\draw (-11,-0.5) rectangle ++(2,2) node[midway]{$K_{2(\beta_i-\beta_{i+1})}$};
	
	\draw (-11.5,-0.8) rectangle ++(10,2.6);
	\node at (-6,-1) [label=below:] {};
		
	\draw (-8,-0.4) rectangle ++(2.2,1.8) node[midway]{$K_{2(\beta_{i+1}-\beta_{i+2})}$};
	
	\end{tikzpicture}
	\caption{$G_i$}
	\label{Gisequence}
\end{figure}

Inductively follows that $G_i$ contains a perfect matching of size $\beta_i$ for all $i\in [k]$. \\
For $1\le i< k$, if $\beta_i\not=\beta_{i+1}$, $G_i$ has $i$ many components, namely $E_1, \ldots, E_{i-1}$ and $G_i-E_1\cup \cdots \cup E_{i-1}$. If $\beta_i=\beta_{i+1}$, $G_i$ has at least $i+1$ many components and at least $i$ many $P_2$-components.

\begin{lemma}\label{lemmanuseq}
Let $i\in [k]$. $\nu_{d,i'}(G_i)=\beta_{i'}$ for all $i'\in \{i,i+1,\ldots,k\}$.
\end{lemma}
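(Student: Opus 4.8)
The plan is to prove the statement by downward induction on $i$, from $i=k$ down to $i=1$; the case $i=1$ then yields Theorem~\ref{theoremmainsequence} for $G=G_1$. Two facts will be used repeatedly: first, $\nu(G_i)=\beta_i$, since each $G_i$ has a perfect matching of size $\beta_i$; second, for any matching $M$ the number of components of $G_i[M]$ equals the sum, taken over the components of $G_i$, of the number of components each contributes, because $G_i[M]$ cannot join two distinct components of $G_i$.

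For the base case $i=k$, the graph $G_k$ consists of the edges $E_1,\dots,E_{k-1}$ together with a clique, hence has exactly $k$ components and a perfect matching $M^\star$ of size $\beta_k$. As $M^\star$ saturates every vertex, $G_k[M^\star]=G_k$ has $k$ components, so $M^\star$ is $k$-disconnected and $\nu_{d,k}(G_k)\ge\beta_k$; the reverse inequality $\nu_{d,k}(G_k)\le\nu(G_k)=\beta_k$ is immediate. For the inductive step, assume $\nu_{d,i'}(G_{i+1})=\beta_{i'}$ for all $i'\in\{i+1,\dots,k\}$. If $\beta_i=\beta_{i+1}$ then $G_i=G_{i+1}$, so the values for $i'\ge i+1$ are inherited, while for $i'=i$ the monotonicity in Equation~\ref{eq:nu_inequalities} gives $\beta_i=\beta_{i+1}=\nu_{d,i+1}(G_i)\le\nu_{d,i}(G_i)\le\nu(G_i)=\beta_i$. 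So assume $\beta_i>\beta_{i+1}$; then $G_i$ is obtained from $G_{i+1}$ by adding a non-empty clique $K_i$ made adjacent to every vertex of $G_{i+1}-(E_1\cup\cdots\cup E_{i-1})$, and $G_i$ has exactly the $i$ components $E_1,\dots,E_{i-1}$ and the big component $B_i:=G_i-(E_1\cup\cdots\cup E_{i-1})$. The crucial structural observation is that every vertex of $K_i$ is universal in $B_i$.

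Still in the case $\beta_i>\beta_{i+1}$, the value $i'=i$ is handled as before: the perfect matching is $i$-disconnected since $G_i$ has $i$ components, giving $\nu_{d,i}(G_i)\ge\beta_i$, and $\nu_{d,i}(G_i)\le\nu(G_i)=\beta_i$. For $i'\ge i+1$ the lower bound $\nu_{d,i'}(G_i)\ge\nu_{d,i'}(G_{i+1})=\beta_{i'}$ follows because $G_{i+1}$ is an induced subgraph of $G_i$: any $i'$-disconnected matching $M'$ of $G_{i+1}$ satisfies $G_i[V(M')]=G_{i+1}[V(M')]$, since the extra edges of $G_i$ all meet $K_i$ and hence no vertex of $V(M')$, so $M'$ remains $i'$-disconnected in $G_i$.

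The heart of the argument, and the step I expect to be the main obstacle, is the matching upper bound $\nu_{d,i'}(G_i)\le\beta_{i'}$ for $i'\ge i+1$. Let $M$ be an $i'$-disconnected matching of $G_i$. Since the $i-1$ edge-components $E_1,\dots,E_{i-1}$ contribute at most $i-1$ components to $G_i[M]$, the big component $B_i$ must contribute at least $i'-(i-1)\ge 2$ components. If $M$ saturated some vertex $v\in K_i$, then, $v$ being universal in $B_i$, all saturated vertices of $B_i$ would lie in the component of $v$, so $B_i$ would contribute a single component, a contradiction. Hence $M$ avoids $K_i$ entirely, which forces $M\subseteq E(G_{i+1})$ and $G_i[V(M)]=G_{i+1}[V(M)]$; thus $M$ is an $i'$-disconnected matching of $G_{i+1}$ and $|M|\le\nu_{d,i'}(G_{i+1})=\beta_{i'}$ by the induction hypothesis. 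Combining the two bounds gives $\nu_{d,i'}(G_i)=\beta_{i'}$ and completes the induction. The only delicate points to get right are the bookkeeping that the components $E_1,\dots,E_{i-1}$ survive unabsorbed in $G_i$, so that its component count is exactly $i$, and the universality of $K_i$ in $B_i$; both follow directly from the construction of $G_i$.
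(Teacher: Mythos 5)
Your proof is correct and follows essentially the same route as the paper's: downward induction on $i$, splitting on whether $\beta_i=\beta_{i+1}$, with the key step that a matching saturating a vertex $v$ of the newly added clique can have at most $i$ components (your universality-of-$K_i$-in-$B_i$ phrasing is exactly the paper's observation that only the edges $E_1,\dots,E_{i-1}$ avoid $N[v]$), so matchings with more components must avoid the clique and live in $G_{i+1}$. The only cosmetic differences are that you make the lower bound $\nu_{d,i'}(G_i)\ge\nu_{d,i'}(G_{i+1})$ explicit and handle the subcase $\beta_i=\beta_{i+1}$, $i'=i$ via the monotonicity in Equation~\ref{eq:nu_inequalities} rather than via the paper's count of the perfect matching's components.
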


\begin{proof}We give a proof by induction.\\
$G_k$ has $k$ components and a perfect matching of size $\beta_k$, implying $\nu_{d,k}(G_k)=\beta_{k}$. Now let $1\le i <k$.
\begin{itemize}
\item If $\beta_i=\beta_{i+1}$, then $G_i=G_{i+1}$ and hence $\nu_{d,i'}(G_i)=\nu_{d,i'}(G_{i+1})=\beta_{i'}$ for $i'\in \{i+1,\ldots,k\}$ by induction hypothesis. $G_i$ has a perfect matching of size $\beta_{i}$ with at least $i+1$ many components, implying $\nu_{d,i}(G_i)=\beta_i$.
\item Let $\beta_i>\beta_{i+1}$.\\
Let $M$ be a matching in $G_{i}$ saturating at least one vertex $v$ of $G_{i}- G_{i+1} $. By construction, there are exactly $i-1$ many edges with no end-vertex in $N[v]$ (namely the edges in $E_1,\ldots, E_{i-1}$) and hence $M$ has at most $i$ many components. This implies that any matching in $G_i$ with at least $i+1$ many components cannot saturate any vertex in $G_{i}- G_{i+1}$. Hence $\nu_{d,i'}(G_i)=\nu_{d,i'}(G_{i+1})=\beta_{i'}$ for $i'\in \{ i+1,i+2,\ldots, k   \}$ by induction hypothesis.\\
By construction, the graph $G_i$ has $i$ many components and a perfect matching of size $\beta_i$, implying $\nu_{d,i}(G_i)=\beta_i$.
\end{itemize}
\vspace{-9mm}
\end{proof}

Lemma \ref{lemmanuseq} implies $\nu_{d,i}(G_1)=\beta_i$ for all $i\in [k]$, completing the proof of
Theorem~\ref{theoremmainsequence}.

\section*{Declarations} Partial financial support was received from research agencies CAPES, CNPq, FAPEMIG, and FAPERJ. The authors have no conflicts of interest to declare that are relevant to the content of this article.

\bibliographystyle{plain}
\bibliography{refs}

\begin{thebibliography}{10}

\bibitem{BASTE201838}
Julien Baste and Dieter Rautenbach.
\newblock Degenerate matchings and edge colorings.
\newblock {\em Discrete Applied Mathematics}, 239:38--44, 2018.

\bibitem{murty}
J.~A. Bondy and U.~S.~R. Murty.
\newblock {\em Graph Theory}.
\newblock Springer, 2008.

\bibitem{Brandstdt2008}
Andreas Brandst{\"a}dt and Ch{\'i}nh~T. Ho{\`a}ng.
\newblock Maximum induced matchings for chordal graphs in linear time.
\newblock {\em Algorithmica}, 52(4):440--447, Dec 2008.

\bibitem{classes_survey}
Andreas Brandst\"{a}dt, Van~Bang Le, and Jeremy~P. Spinrad.
\newblock {\em Graph Classes: A Survey}.
\newblock Society for Industrial and Applied Mathematics, Philadelphia, PA,
  USA, 1999.

\bibitem{maximum_independent_set_and_maximum_induced_matching_problems_for_competitive_programming}
Janet~Dean Brock.
\newblock {\em Maximum Independent Set and Maximum Induced Matching Problems
  for Competitive Programming}.
\newblock PhD thesis, Appalachian State University, 2021.

\bibitem{CAMERON198997}
Kathie Cameron.
\newblock Induced matchings.
\newblock {\em Discrete Applied Mathematics}, 24(1):97--102, 1989.

\bibitem{induced_matchings_in_intersection_graphs}
Kathie Cameron.
\newblock Induced matchings in intersection graphs.
\newblock {\em Electronic Notes in Discrete Mathematics}, 5:50--52, 2000.

\bibitem{cameron_connected_matching}
Kathie Cameron.
\newblock {\em Connected Matchings}, page 34–38.
\newblock Springer-Verlag, Berlin, Heidelberg, 2003.

\bibitem{finding_a_maximum_induced_matching_in_weakly_chordal_graphs}
Kathie Cameron, R~Sritharan, and Yingwen Tang.
\newblock Finding a maximum induced matching in weakly chordal graphs.
\newblock {\em Discrete Mathematics}, 266(1-3):133--142, 2003.

\bibitem{cameronwalker2005}
Kathie Cameron and Tracy Walker.
\newblock The graphs with maximum induced matching and maximum matching the
  same size.
\newblock {\em Discrete Mathematics}, 299(1):49--55, 2005.
\newblock Graph Theory of Brian Alspach.

\bibitem{induced_matchings_in_asteroidal_triple_free_graphs}
Jou-Ming Chang.
\newblock Induced matchings in asteroidal triple-free graphs.
\newblock {\em Discrete Applied Mathematics}, 132(1-3):67--78, 2003.

\bibitem{cygan_parameterized}
Marek Cygan, Fedor~V Fomin, {\L}ukasz Kowalik, Daniel Lokshtanov, D{\'a}niel
  Marx, Marcin Pilipczuk, Micha{\l} Pilipczuk, and Saket Saurabh.
\newblock {\em Parameterized algorithms}, volume~3.
\newblock Springer, 2015.

\bibitem{on_the_approximability_of_the_maximum_induced_matching_problem}
William Duckworth, David~F. Manlove, and Michele Zito.
\newblock On the approximability of the maximum induced matching problem.
\newblock {\em Journal of Discrete Algorithms}, 3(1):79--91, 2005.

\bibitem{edmonds_matching}
Jack Edmonds.
\newblock Paths, trees, and flowers.
\newblock {\em Canadian Journal of Mathematics}, 17:449–467, 1965.

\bibitem{maximum_induced_matching_of_hexagonal_graphs}
Rija Erve{\v{s}} and Petra {\v{S}}parl.
\newblock Maximum induced matching of hexagonal graphs.
\newblock {\em Bulletin of the Malaysian Mathematical Sciences Society},
  39(1):283--295, 2016.

\bibitem{on_the_hardness_of_deciding_the_equality_of_the_induced_and_the_uniquely_restricted_matching_number}
M.~F{\"u}rst.
\newblock On the hardness of deciding the equality of the induced and the
  uniquely restricted matching number.
\newblock {\em Information Processing Letters}, 147:77--81, 2019.

\bibitem{on_some_hard_and_some_tractable_cases_of_the_maximum_acyclic_matching_problem}
M.~F{\"u}rst and D.~Rautenbach.
\newblock On some hard and some tractable cases of the maximum acyclic matching
  problem.
\newblock {\em Annals of Operations Research}, 279(1):291--300, Aug 2019.

\bibitem{garey_johnson}
Michael~R. Garey and David~S. Johnson.
\newblock {\em Computers and Intractability: A Guide to the Theory of
  NP-Completeness}.
\newblock W. H. Freeman \& Co., New York, NY, USA, 1979.

\bibitem{GODDARD2005129}
Wayne Goddard, Sandra~M. Hedetniemi, Stephen~T. Hedetniemi, and Renu Laskar.
\newblock Generalized subgraph-restricted matchings in graphs.
\newblock {\em Discrete Mathematics}, 293(1):129--138, 2005.

\bibitem{Golumbic2001}
M.~C. Golumbic, T.~Hirst, and M.~Lewenstein.
\newblock Uniquely restricted matchings.
\newblock {\em Algorithmica}, 31(2):139--154, Oct 2001.

\bibitem{disconnected_matchings_cocoon}
Guilherme C.~M. Gomes, Bruno~P. Masquio, Paulo E.~D. Pinto, Vinicius~F. dos
  Santos, and Jayme~L. Szwarcfiter.
\newblock Disconnected matchings.
\newblock In Chi-Yeh Chen, Wing-Kai Hon, Ling-Ju Hung, and Chia-Wei Lee,
  editors, {\em Computing and Combinatorics}, pages 579--590, Cham, 2021.
  Springer International Publishing.

\bibitem{disconnected_matchings_tcs}
Guilherme~C.M. Gomes, Bruno~P. Masquio, Paulo~E.D. Pinto, Vinicius~F. {dos
  Santos}, and Jayme~L. Szwarcfiter.
\newblock Disconnected matchings.
\newblock {\em Theoretical Computer Science}, 956:113821, 2023.

\bibitem{the_np_completeness_of_the_dominating_set_problem_in_cubic_planer_graphs}
Tohru Kikuno, Noriyoshi Yoshida, and Yoshiaki Kakuda.
\newblock The np-completeness of the dominating set problem in cubic planer
  graphs.
\newblock {\em IEICE TRANSACTIONS (1976-1990)}, 63(6):443--444, 1980.

\bibitem{adding_an_identity_to_a_totally_unimodular_matrix}
CW~Ko and FB~Shepherd.
\newblock Adding an identity to a totally unimodular matrix.
\newblock {\em Operational Research working papers (LSEOR 94.14)}, 1994.

\bibitem{koblerrotics2003}
Daniel Kobler and Udi Rotics.
\newblock Finding maximum induced matchings in subclasses of claw-free and
  {P5-free} graphs, and in graphs with matching and induced matching of equal
  maximum size.
\newblock {\em Algorithmica}, 37(4):327–346, December 2003.

\bibitem{LOZIN20027}
V.~V. Lozin.
\newblock On maximum induced matchings in bipartite graphs.
\newblock {\em Information Processing Letters}, 81(1):7--11, 2002.

\bibitem{on_maximum_induced_matchings_in_bipartite_graphs}
Vadim~V Lozin.
\newblock On maximum induced matchings in bipartite graphs.
\newblock {\em Information Processing Letters}, 81(1):7--11, 2002.

\bibitem{MOSER2009715}
Hannes Moser and Somnath Sikdar.
\newblock The parameterized complexity of the induced matching problem.
\newblock {\em Discrete Applied Mathematics}, 157(4):715--727, 2009.

\bibitem{10.1007/978-3-030-48966-3_31}
B.~S. Panda and Juhi Chaudhary.
\newblock Acyclic matching in some subclasses of graphs.
\newblock In Leszek Gasieniec, Ralf Klasing, and Tomasz Radzik, editors, {\em
  Combinatorial Algorithms}, pages 409--421, Cham, 2020. Springer International
  Publishing.

\bibitem{induced_matching_in_some_subclasses_of_bipartite_graphs}
Arti Pandey, B.~S. Panda, Piyush Dane, and Manav Kashyap.
\newblock Induced matching in some subclasses of bipartite graphs.
\newblock In Daya Gaur and N.S. Narayanaswamy, editors, {\em Algorithms and
  Discrete Applied Mathematics}, pages 308--319, Cham, 2017. Springer
  International Publishing.

\bibitem{np_completeness_of_some_generalizations_of_the_maximum_matching_problem}
Larry~J. Stockmeyer and Vijay~V. Vazirani.
\newblock Np-completeness of some generalizations of the maximum matching
  problem.
\newblock {\em Information Processing Letters}, 15(1):14--19, 1982.

\bibitem{induced_matchings_in_regular_graphs_and_trees}
Michele Zito.
\newblock Induced matchings in regular graphs and trees.
\newblock In {\em International Workshop on Graph-Theoretic Concepts in
  Computer Science}, pages 89--101. Springer, 1999.

\end{thebibliography}

\end{document}